 \definecolor{orange}{RGB}{230,170,120}
  \definecolor{green}{RGB}{120,200,120}
 \theoremstyle{plain}
 \newtheorem{thm}{Theorem}[section]
 \newtheorem{lem}[thm]{Lemma}
 \newtheorem*{cor}{Corollary}
 \theoremstyle{definition}
 \newtheorem{defn}{Definition}[section]
 \newtheorem{exmp}{Example}[section]
  \newtheorem{ass}{Assumption}
 \theoremstyle{definition}
 \newtheorem{rem}{Remark}
 \def\BState{\State\hskip-\ALG@thistlm}
\def\spacingset#1{\renewcommand{\baselinestretch}%
{#1}\small\normalsize} \spacingset{1}
\definecolor{coquelicot}{rgb}{1.0, 0.22, 0.0}
 \newcommand{\blind}{0}
\begin{document}
  
\if0\blind
{  
  \title{\bf Synthetic learner: 
  Model-Free Inference on Treatments over Time\footnote{Draft version July, 2022. First version: June, 2019.  We thank the Editor, anonymous referees, and Graham Elliott,  Kaspar W\"{u}thrich, and Yinchu Zhu for helpful comments. }}
 \author{Davide Viviano$^1$\footnote{
 Present address: Stanford Graduate School of Business, 655 Knight Way, Stanford, CA 94305. Email: dviviano@stanford.edu. This work was mostly conducted while at the Department of Economics, University of California at San Diego, La Jolla, CA, 92093. }
  %\thanks{    The authors gratefully acknowledge \textit{please remember to list all relevant funding sources in the unblinded version}} 
   $\qquad$  
    Jelena Bradic$^2$ \footnote{Department of  Mathematics and Halicio\u{g}lu Data Science Institute, 
        University of California at San Diego, La Jolla, CA, 92093. Email: jbradic@ucsd.edu.} \footnote{Jelena Bradic gratefully acknowledges the support of the grant NSF-DMS \#1712481. } \\
 \textcolor{white}{a} Stanford GSB$^1$ $\qquad$   UC San Diego$^{1,2}$
}
    \date{}
  \maketitle
} \fi

\if1\blind
{
  \bigskip
  \bigskip
  \bigskip
  \ 
 \\
 
 \ 
 \\

  \begin{center}
    {\LARGE\bf Synthetic Learner: 
  Model-free Inference on Treatments over Time}
\end{center}
  \medskip
} \fi

\bigskip  	 
 
\begin{abstract}

Understanding the effect of a particular treatment or a policy pertains to many areas of interest, ranging from political economics, marketing to healthcare.  In this paper, we develop a non-parametric algorithm for detecting the effects of treatment over time in the context of Synthetic Controls.  
The method builds on counterfactual predictions from many algorithms without necessarily assuming that the algorithms correctly capture the model.
We introduce an inferential procedure for detecting treatment effects and show that the testing procedure is asymptotically valid for stationary, beta mixing processes without imposing any restriction on the set of base algorithms under consideration. We discuss consistency guarantees for average treatment effect estimates and derive regret bounds for the proposed methodology. The class of algorithms may include Random Forest, Lasso, or any
other machine-learning estimator. Numerical studies and an application illustrate the advantages of the method.

\end{abstract}

\noindent%
{\it Keywords:} Synthetic Control, Difference In Differences, Causal Inference, Random Forests. \\ 
{\it JEL Code:} C10, C14, C20, C30.

\vfill
%\newpage

\spacingset{1.5}

  	\section{Introduction} \label{sec:intro}

This paper discusses estimation and inference on the effect of a policy intervention on a single unit observed over multiple periods and exposed to treatment from one point in time onwards. We consider an aggregate time-series set-up where researchers observe the outcome of the unit of interest $T_0$ periods before treatment and $T - T_0$ after the treatment. Researchers' main goal is to conduct inference on the effect of a trajectory of treatment effects over the post-treatment period.  
Namely, by denoting $\tau_t = Y_{0t}^1 - Y_{0t}^0$ the difference between the two potential outcomes at time $t$, researchers want to test whether $\{\tau_t\}_{t > T_0} = \tau^o$, for some null trajectory of interest $\tau^o$. Their second goal is to precisely estimate the average effect over time.

In the same spirit of synthetic controls \citep{abadie2003economic, abadie2010synthetic}, and aggregate panel data models \citep{hsiao2012panel}, we construct counterfactuals using information from $n$ (possibly finitely many) control units and covariates observed over $T$ periods. We exploit information over the time series for conducting asymptotic inference. The applications of interest are those where we observe individuals at a relatively frequent level, e.g., quarterly or monthly, over multiple years. Examples are studying the effects of taxation \citep{bai2014property}, changes in welfare programs \citep{maclean2019losing}  or (geo-localized) marketing experiments \citep{brodersen2015inferring,varian2016causal}.  We discuss additional applications at the end of Section \ref{sec:lit}. 

%While non-parametric estimators have found widespread use in microeconomic applications \citep{athey2019machine}, their analysis (and consequently their applicability) in the presence of aggregate data has received much less attention. However, with aggregate data, such estimators can improve precision and better disentangle the effect of the policy from idiosyncratic shocks. In this sense, this paper makes two contributions to the literature. 

The first contribution of this paper is to derive an inferential procedure that is valid for general parametric and non-parametric (machine-learning) estimators. 
We propose a resampling mechanism to guarantee exact asymptotic size control without imposing restrictions on the set of estimators. The key idea for inference is to combine the sample splitting procedure with the block-bootstrap \citep{politis1992circular} and exploit the differentiability properties of the proposed procedure. Our approach does not require correct model specifications.

As a second contribution, we introduce an ensemble procedure that combines many such estimators, e.g., Synthetic controls, factor models estimators, and fully non-parametric estimators such as Random Forest or Kernel Smoothing, into a single prediction. We penalize the methods with the worst out-of-sample performance over the pre-treatment period. 
The method's goal is to increase precision. Also, it permits replacing heuristic model selection criteria whose properties are unknown with time-dependent observations with a theoretically grounded procedure. This has important implications for applied economists, who often face the difficult model selection choice from a dictionary that includes many methods  (e.g., factor models, synthetic control, difference-in-differences).\footnote{See also the discussion in \cite{athey2019ensemble}.} The ensemble method that we propose builds on the literature on exponential aggregation \citep{rigollet2012sparse, cesa2006prediction}, time-series and forecasts' combinations \citep{timmermann2006forecast, elliott2004optimal}, which we study here for causal inference. 
We show that the procedure consistently recovers the average treatment effect through a bias adjustment, and it inherits strong oracle properties for its prediction performance. 

%``Ensemble methods present a fruitful direction for further research in the causal panel data setting" \citep{athey2019ensemble}, and this paper provides an explicit formalization for generic applicability.

Throughout the text, we assume exogeneity of the treatment time and stationarity for inference. 
These conditions are common when conducting inference with Synthetic Controls in the presence of a long time series\footnote{For inference via Synthetic Control \cite{chernozhukov2018exact} impose stationarity of the residuals under correct model specification and similar stationarity assumptions as to the one discussed above under misspecification to show the validity of permutation tests.  
Stationarity, strong mixing conditions on the joint distribution of the residual errors and covariates, are also imposed for valid inference in synthetic control settings in \cite{carvalho2018arco}, while covariance stationarity conditions are imposed in \cite{li2017estimation}. Further discussion is in Section \ref{sec:meth}. }, since they permit to conduct inference without relying on symmetry assumptions of placebo testing \citep{firpo2018synthetic, ben2018augmented}. They imply that a pre-post treatment comparison of the means returns a consistent (but possibly inefficient) estimate of the average effect on the treated. 
In separate sections, we relax stationarity in two directions: 
(i) allowing for time-varying fixed effects, encompassing popular two-way fixed-effect models; (ii) deriving prediction guarantees for arbitrary non-stationary settings.\footnote{These are in Section \ref{sec:time_fixed} and  Section \ref{sec:weights} respectively.}

 We conclude our discussion with a simulation study and an empirical application. We show that our procedure leads to larger power than existing methods while controlling the size of the test. In an application, we study the effect of Tennessee's health-insurance dis-enrollment program on health-related outcomes. Using survey data from Behavioral Risk Factor Surveillance System Data, we show that the program decreased health insurance coverage and the likelihood of visiting a doctor.

The paper is organized as follows. In Section \ref{sec:meth} we introduce the set-up, and identification strategy. In Section \ref{sec:inference}, we introduce the method for estimating counterfactuals and inference on sharp nulls. In Section \ref{sec:average}, we discuss estimation of the average effect on the treated. In Section \ref{sec:weights} we discuss prediction guarantees under non-stationarity. Section \ref{sec:experiments} discusses numerical experiments. Section \ref{sec:real} discusses  applications in health economics. Finally, in Section \ref{sec:carryover} we provide extensions in the presence of carry-over effects.

\subsection{Related Work} \label{sec:lit}

While non-parametric estimators have found widespread use in microeconomic applications \citep{athey2019machine}, their analysis (and consequently their applicability) in the presence of aggregate data has received much less attention. However, with aggregate data, such estimators can improve precision and better disentangle the effect of the policy from idiosyncratic shocks. %There has been a longstanding understanding that machine-learning methods, like Random Forests, are beneficial for predictive purposes. However,  utilizing such methods for inference presents fundamental challenges.  
This paper proposes a method that enables counterfactual prediction and hypothesis testing in the context of Synthetic Controls (SC) using predictions arising from many parametric and non-parametric estimators.

%This advantage is motivated by different asymptotic regimes, where here, both the pre and post-treatment period converge to infinity. 

%The second difference relates to the restrictions imposed on the estimators, which in our case do not require model stability conditions as in \cite{chernozhukov2018exact}. 
 
 Recent literature has proposed a wide variety of methods for predicting counterfactuals in the setting under consideration, including factor and panel data models \citep{bai2009panel, hsiao2012panel}, synthetic controls \citep{abadie2010synthetic, xu2017generalized,doudchenko2016balancing, arkhangelsky2019synthetic, ferman2016revisiting}, two-ways fixed effects models \citep{imai2021use}, ridge regression methods \citep{ben2018augmented}, kernel balancing  \citep{hazlett2018trajectory}, functional methods \citep{gunsilius2020distributional} among others.  
Additional references include 
\cite{athey2018matrix}  who proposes a matrix completion methods for SC;  \cite{athey2019ensemble} shows in a simulation exercise that ensemble methods outperform individual SC predictions on many economic data sets; 
\cite{amjad2018robust} proposes singular value thresholding,  whereas matching has been discussed by \cite{imai2018matching}.  Difference-in-difference methods were recently discussed in \cite{athey2018design} and \cite{arkhangelsky2019synthetic} in the context of staggered adoption.
 
 However, selection among these methods remains an open research question \citep{hsiao2018panel}. Combining different predictions offers a simple data-adaptive procedure to exploit information from all these models while improving the prediction performance \citep{timmermann2006forecast}. As a result, 
 our contribution can be viewed as complementary to this literature. To the best of our knowledge, we provide the first set of conditions under which predictions made by any or many machine learning methods, including Random Forests, can be used to develop valid tests for SC.

 A closely related method to our inference procedure is the permutation-based inference method, discussed in \cite{chernozhukov2017exact}. Their method accommodates a single (linear) model specification only and imposes stability conditions of the estimator. Also, while the permutation-based methods estimate counterfactuals using the entire sample (pre and post-treatment periods) as a training set, imposing the sharp null hypothesis of no treatment effects, here we estimate counterfactuals using the pre-treatment period only. This approach permits that estimation of the counterfactual does not depend on the outcomes observed over the post-treatment period. Our approach is particularly suited whenever the post-treatment period is proportional to the pre-treatment period as in the context of our empirical application.\footnote{In our empirical application, the post-treatment period is approximately twenty, and the pre-treatment period is approximately sixty. The reader may refer to Example \ref{exmp:mean} for an illustration of the benefits.}
 %Under the permutation-based approach, the predicted counterfactuals are biased towards the expected outcomes in the post-treatment period, and test statistics cannot detect treatment effects for long post-treatment periods. 
 %Our procedure instead guarantees that the estimated treatment effects are not biased as the post-treatment period converges to infinity under the failure of the null hypothesis. 

Our paper relates more broadly to inference using penalized methods, including \cite{chernozhukov2018t} and \cite{carvalho2018arco}, which propose penalized linear regressors for asymptotic inference on treatment effects in an SC setup. However, \cite{carvalho2018arco} requires a consistent estimation of treatment effects, which is not required in our setting and focuses on a penalized model only. Differently,  \cite{chernozhukov2018t} proposes a bias-adjustment procedure for asymptotic inference, which in our framework is not required for hypothesis testing but used for average treatment effect on the treated (ATT) estimation. Our resampling mechanism allows for more generality than the tests in \cite{chernozhukov2018t} since we do not require the use of penalized linear regressor and the related conditions for estimation of counterfactuals, but we allow for general non-parametric estimators. Related literature also includes \cite{li2017estimation}, and \cite{hsiao2012panel}, which discuss properties of constrained least-squares methods under stationarity and correct model specification only. \cite{li2019statistical} discusses a sub-sampling procedure for inference with constrained least-squares estimators only.

%From a finite-sample perspective, the main difference between sub-sampling and bootstrap is that sub-sampling draws observations \textit{without} replacement, and therefore resulting in a substantially more data-intense procedure compared to the bootstrap method, which instead draws observations \textit{with} replacement \citep{politis2003impact}. 

Finally, we relate to 
\cite{kunzel2017meta} who discuss model averaging with $i.i.d.$ data and classifies method into three classes, denoted as S, T, and X learners. This paper pioneers the idea of T-learning in Synthetic control setting, offering an alternative and simple weighting scheme which is inspired by the literature on boosting \citep{schapire2012boosting} and online learning \citep{cesa1997use, cesa1999prediction, cesa2006prediction}.

To conclude, we can list several applications of interest. The first set of relevant applications includes changes in policy at a regional or state level. For instance, studying the effects of (i) exogenous variations such as environmental disasters on policy-changes \citep{potrafke2020green}, or economic outcomes \citep{cavallo2013catastrophic}; (ii) of Medicaid expenditure expansions or contractions \citep{tello2016effects}; (iii) of tax-reform on prices \citep{bai2014property}; or (iv) of policy reforms on economic growth \citep{billmeier2013assessing}. The second avenue of interesting applications includes experiments on online platforms, for which synthetic controls are becoming increasingly popular, especially in the presence of geo-localized experiments \citep{varian2016causal, li2017estimating}. The third set of applications includes studying the effect of new algorithms or financial instruments \citep{xie2014impact, bojinov2019time}.

\section{Setup and Identification} \label{sec:meth}

Throughout this article for each unit $i$ we observe outcome variable $Y_{it}$. We denote with $Y_{0t}$ the unit treated if $D_t  = 1$ and under control otherwise;    the remaining $i=1,\cdots, n$ units, $Y_{1t}, \dots, Y_{nt}$  are units always observed in the control state. Additional covariate information for each unit are denoted in compact form as $Z_{it}$.  $Z_{it}$ may also contain past covariates and past outcomes.

\subsection{Estimands and Null Hypothesis}

Following the literature on panel data and synthetic control models  \citep[e.g.,][]{hsiao2018panel, abadie2010synthetic}, 
we define the treatment assignment and the outcome of interest, respectively as 
\begin{equation} 
D_t = 1\{t > T_0\}, \quad Y_{0t} = D_t Y_{0t}^1 + (1 - D_t) Y_{0t}^0,  \quad Y_{jt} = Y_{jt}^0, \quad j > 0 
\end{equation} 
where $Y_{0t}^1, Y_{0t}^0$ denote the potential outcomes under treatment and control for the unit of interest $i = 0$, and $Y_{jt}^0$ denotes the potential outcome under control of unit $j$. Our definition of potential outcomes implicitely imposes that SUTVA holds \citep{rubin1990formal} and no carry-over effects \citep{imai2013estimating}. Extensions in the presence of carry-overs are discussed in Section \ref{sec:carryover} .

 In applications, researchers may want to test whether the difference between two potential outcomes $Y_{0t}^1 - Y_{0t}^0$ equals zero for all post-treatment periods.  Namely, they may be interested in conducting inference on the time-specific treatment effects defined below.  
 
 \begin{defn}[Time-specific treatment effect] The time-specific treatment effect is defined as follows: $\tau_t = Y_{0t}^1 - Y_{0t}^0$.  
 \end{defn} 

Here, $\tau_t$ defines the difference in potential outcomes at time $t$.  
We begin our discussion by introducing the null hypothesis of interest.  
 
 \begin{defn}[Sharp Null hypothesis] \label{defn:sharp} Define the sharp null hypothesis as 
\begin{equation} \label{eqn:null}  
 H_0:  \tau_t = \alpha_t^o, \quad  t \in \{T_0 + 1, \cdots, T\}, 
 \end{equation} 
 for a known sequence $\{\alpha_t^o\}_{t > T_0}$. 
 \end{defn} 
 
 Equation \eqref{eqn:null} imposes that the potential outcome over the post-treatment period equals the potential outcome under control plus a known (possibly time-varying) constant. 
 For example, we may consider $a_t^o = 0$ or we may consider also testing a linear trend of the form  $a_t^o = \delta (t - T_0)$  for an arbitrary $\delta \in \mathbb{R}$. 
Our results also extend when we test the average of $\tau_t$.\footnote{Namely, we may also test $\mathbb{E}[\tau_t] = \alpha_t^o$, hence allowing potential outcomes under treatment and control having different idiosyncratic shocks. This is omitted for the sake of brevity only and discussed in the Appendix (Algorithm \ref{alg:alg4}) and Remark \ref{rem:average_null}.} Finally, note that more generally, we can incorporate a more general class of hypothesis 
$H_0: Y_{0t}^1   = f(Y_{0t}^0,a_{t}^o), \  a_t^o \in \mathbb{R}, \  t > T_0,
$  for a function $f$ being invertible in its first argument, omitted for the sake of brevity.  We define 
$$
Y_{0t}^o = 
\begin{cases} 
& Y_{0t} - \alpha_t^o \quad  t > T_0, \\ 
& Y_{0t}  \quad t \le T_0. 
\end{cases} 
$$ 
the (observed) potential outcome under control under the null hypothesis $H_0$.

Testing for treatment effects may not be satisfactory to researchers, who may also be interested in reporting estimates of average treatment effects. 
This is defined below.

\begin{defn}[Average Treatment Effect on the Treated] \label{defn:estimand} The average treatment effect on the treated is defined as 
\begin{equation}
\tau =  \frac{1}{T - T_0} \sum_{t > T_0} \mathbb{E}\Big[Y_{0t}^1 - Y_{0t}^0\Big].
\end{equation}  
\end{defn} 
%where, for simplicity, we implicitely impose that $\tau_t$ is fixed (non-random) (see e.g., \cite{abadie2011synthetic;chernozhukov2017exact}). Our results directly extends when the effects are random, in which cas e
Here, $\tau$ denotes the average effect on the treated, averaged also over the post-treatment period. The expectation is taken over the idiosyncratic shocks.\footnote{Note that whenever $\tau_t$ is non-random $\tau = \frac{1}{T - T_0} \sum_{t > T_0} \tau_t$.} 

%Throughout our discussion we will assume that $\tau_t$ is non-random (see, e.g., \citealt{abadie2010synthetic, chernozhukov2017exact}), while all our results extend to $\tau_t$ being random (see Remark \ref{rem:average_null}). In such a case, we can equivalently write 
%$$
%\tau = \frac{1}{T - T_0} \sum_{t > T_0} \tau_t.
%$$ 

\subsection{Identification Conditions}

The first condition that we impose is stationarity. This is common in the literature on Synthetic Control, see  \cite{carvalho2018arco, li2017estimation, chernozhukov2018exact}.\footnote{Stationarity and beta-mixing conditions as stated above cover a large class of {\sc arma} processes \citep{pham1985some}, {\sc {\sc ar-arch}} processes \citep{lange2011estimation}, Markov Switching Processes \citep[see for example][]{lee2005probabilistic}, {\sc garch} \citep{carrasco2002mixing}, to cite some. }   We relax it in Section \ref{sec:weights}, where we allow for non-stationary observations. 

\begin{ass}[Stationarity] \label{ass:stationarity}  Suppose that $(Y_{0t}^0, Y_{1n}, \cdots, Y_{nt}, Z_{0t}, \cdots, Z_{nt}) \sim \mathcal{D}_0$ is stationary.
\end{ass}

Assumption \ref{ass:stationarity} imposes stationarity. In Section \ref{sec:time_fixed} we show how our results for inference extend under non-stationarity, in the presence of an (unknown) time-varying fixed-effects.

The second condition we impose is an identification assumption.

\begin{ass}[Identification Condition] \label{ass:ident} Suppose that $T_0 \perp (Y_{0t}^0, Y_{0t}^1, Y_{1:n,t}, Z_{0:n,t})_{t=1}^T$.
\end{ass}

Assumption \ref{ass:ident} states that the timing of the treatment is exogenous. The same conditions can be found in previous literature on Synthetic Controls. 
For instance, recent literature on Synthetic Control \citep{abadie2010synthetic, chernozhukov2018double, arkhangelsky2019synthetic, chernozhukov2018t, li2017estimation} treated $T_0$ as deterministic, in which case exogeneity of $T_0$ implicitely holds. See for example, the discussion in \cite{ferman2016revisiting} and \cite{bottmer2021design}. \cite{carvalho2018arco} also explicitely imposes exogeneity similarly to the above condition.

In the context of our empirical application, where the treatment consists of a dis-enrollment from Medicaid occurring in the early 2000s in Tennessee, and the outcomes are health-related outcomes, as argued in \cite{argys2017losing}, if the policy can be attributed to budget deficit interpretable as an exogenous variation, the assumption directly holds. We warn the reader, however, that failure of the assumption may invalidate the inferential strategy. 

 Motivated by Assumption \ref{ass:ident} we will implicitely condition on $T_0$ throughout the rest of our discussion unless otherwise specified, since, conditional on $T_0$ the distribution of observables and unobserved potential outcomes remains invariant. We return instead to non-stationary conditions in Section \ref{sec:weights} where we relax Assumption \ref{ass:stationarity} and \ref{ass:ident}.

\begin{exmp}[Stationary factor models]
Suppose that
\begin{equation} \label{eqn:factor} 
Y_{jt}^0 = \mu_j + \theta_t + \lambda_j F_t + \gamma(Z_{jt}) + u_{j,t}, \quad j \in \{0, \cdots, n\},  
\end{equation}   
with $u_{j,t}$ denoting stationary idiosyncratic errors. Let $\theta_t \sim \mathcal{N}(0,1)$ and exogenous, with, $F_t$ denoting common stationary unobserved exogenous factors and $\lambda_j$ the (exogenous) individual specific effect, and $\gamma(Z_{jt})$ a stationary components which depends on covariates. Then Assumption \ref{ass:stationarity} holds. \qed 
\end{exmp}

%  \begin{rem}[Control group] Assumption \ref{ass:stationarity} relaxes stationarity up-to time fixed effects. Hence, stationary can fail as long as the control group is ``representative" of the treated unit, i.e., time fixed effects are the same between the control and treated unit in the same spirit of two-ways fixed effect models \citep{imai2021use}.\footnote{For example, in the context of our application for studying the effect of the Tennessee dis-enrollment health-insurance program, the Obamacare between 2010 and 2014 may act as a time-varying confounder. Therefore, we use as the control group for the effect of dis-enrollment in Tennessee the outcome from the other Southern States that, similarly to Tennessee, did not expand Medicaid between 2010-2014 as a result of Obamacare.} 
 % \end{rem}   

\subsection{Testable Implications and Identification of Average Effects}

We conclude this discussion with two lemmas. The first lemma provides a testable implication for the sharp null hypothesis in Definition \ref{defn:sharp}. This is stated below.

\begin{lem}[Sharp Null: Testable Implication] \label{lem:identi2} Under the null hypothesis in Equation \eqref{eqn:null} and Assumptions \ref{ass:stationarity}, \ref{ass:ident}, then $(Y_{0t}^o, Y_{1t}, \cdots, Y_{nt}, Z_{0t}, \cdots, Z_{nt} )$ is stationarity and independent of $T_0$ for all $t \in \{1, \cdots, T_0, \cdots, T\}$. 
\end{lem}

              The above condition implies that the distribution before and after the intervention period $T_0$ must remain invariant under the null hypothesis. This assumption is testable since we observe the empirical distribution of the above vector before and after the intervention time $T_0$. Lemma \ref{lem:identi2} is at the basis of our approach for testing the null hypothesis, which we discuss in Section \ref{sec:inference}.

              %\footnote{The reason why we subtract $\bar{Y}_t$ is to remove the time-fixed effect, making observations stationary under the null hypothesis. Importantly, the distribution of $\bar{Y}_t$ remains invariant before and after the treatment period. This follows from the construction of $\bar{Y}_t$ which depends on the outcome of the control units only. } 

 The second goal is to estimate the average effect. Identification of $\tau$ is discussed in the following lemma.  
             
\begin{lem}[Identification of $\tau$] \label{lem:1} Let Assumptions \ref{ass:stationarity}, \ref{ass:ident} hold. Then 
$$
\tau = \frac{1}{T - T_0} \sum_{t > T_0} \mathbb{E}\Big[Y_{0t}^1|T_0\Big] - \frac{1}{T_0} \sum_{s = 1}^{T_0} \mathbb{E}\Big[Y_{0s}^0|T_0\Big] . 
$$  
\end{lem} 

Lemma \ref{lem:1} is an identification result. It states that the post-treatment difference in expectation equals the target estimand $\tau$.  While Lemma \ref{lem:1} is not invoked for constructing our test, it is used for estimating the average effect discussed in Section \ref{sec:average}. 

The proofs of the lemmas are contained in Appendix \ref{app:lemmas}.

            \begin{rem}[Extensions with time fixed effects] In Section \ref{sec:time_fixed} we show how our results directly extend to the case where
  $$
  Y_{0t}^0 = \kappa_t^0 + \iota_j + \varepsilon_{j,t}^0, \quad \mathbb{E}[\varepsilon_{j,t}^0] = 0, 
  $$  
  where $\kappa_t^0$ denotes a time-fixed effect. This extension, while simple, has important implications: it permits incorporating non-stationary unobserved components. 
For this case, stationary can fail as long as the control group is ``representative" of the treated unit, i.e., time fixed effects are the same between the control and treated unit. This follows in the same spirit of two-way fixed-effect models \citep{imai2021use} that are commonly encountered in applications see, for example, \cite{garthwaite2014public}.\footnote{For example, in the context of our application for studying the effect of Tennessee dis-enrollment health-insurance program, Obamacare between 2010 and 2014 may act as a time-varying confounder. Therefore, we use as the control group for the effect of dis-enrollment in Tennessee the outcome from the other Southern States that, similarly to Tennessee, did not expand Medicaid between 2010-2014 due to Obamacare.} In the presence of time-fixed effects, identification is achieved by a difference-in-difference as opposed to a pre-post treatment comparison as discussed in Lemma \ref{lem:1b}. 
\qed 
  \end{rem}

 \section{Counterfactuals Predictions and Hypothesis Testing} \label{sec:inference}

 In this section, we discuss the problem of estimating the counterfactual prediction $\hat{Y}_{0t}^0, t > T_0$ and conducting inference on the sharp null hypothesis in Definition \ref{defn:sharp}.  
 
   Throughout our discussion, for expositional convenience, we denote in the compact form $$
   X_t =(Y_{1t}, \dots, Y_{nt}, Z_{0t}^\top, Z_{1t}^\top, \dots,Z_{nt}^\top) \in \mathcal{X}.
   $$

To test the null hypothesis of interest, we first estimate the true potential outcome $Y_{0t}^0, t > T_0$, unobserved over the post-treatment period. We define $\widehat{Y}_{0t}^0$ its estimate constructed as follows 
\begin{equation} \label{eqn:jj} 
 \widehat{Y}_{0t}^0 = w(F_0)^\top g(X_t). 
 \end{equation} 
 Here $w(F_0)$ denotes a generic functional of the empirical distribution $F_0$ of $(Y_{0t}, X_t)$ over the pre-treatment period, where $Y_{0t}$ serves as the main outcome of interest. The choice of $w(F_0)$ can be arbitrary, with the only condition required that $w(\cdot)$ is a Hadamard differentiable functional (see Assumption \ref{ass:df}).\footnote{Definition of Hadamard differentiability is provided in Appendix \ref{sec:definitions}.} In Section \ref{sec:weights} we provide explicit expressions for $w(\cdot)$.\footnote{Since the functions $g(\cdot)$ can contain an intercept, the component $w(F_0)^\top g(X_t)$ also estimates the (time-invariant) shift in mean after subctracting $\bar{Y}_t$. See, for instance, Example \ref{exmp:mean}.}

  It is important to note that the functions $g(X_t)$ can be data-dependent. Such functions are estimated as described in Algorithm \ref{alg:alg2}. Namely, first, we divide the pre-treatment period into two blocks
$t \in \{T_-, \cdots, 0\}$ and $t \in \{1, \cdots, T_0\}$. We define 
$
F_-
$ the empirical distribution for $t < 1$ of $(Y_{0t}, X_t)$.  We construct the predictors as follows 
\begin{equation} \label{eqn:pred} 
\Big\{ x \mapsto g_j(x; F_-) \Big\} 
\end{equation}  
with $F_-$ denoting the training set for such predictors, and $g_j$ denoting some pre-specified regressor. For example $g_1(X_t; F_-)$ can denote the prediction of a Random Forest, trained over the sample $t < 0$, with empirical distribution $F_-$. Whenever clear from the context, we will omit the second argument $F_-$ from the function $g_j(\cdot)$. Throughout the rest of our discussion, we will fix the size of $T_-$ and consider asymptotics as $T \rightarrow \infty, T_0 \propto T$.

We conclude our discussion with two cases of interest. 

\paragraph{Case 1: Ensemble} To gain further intuition on Equation \eqref{eqn:jj}, observe that we can interpret $w(F_0)$ as some data-dependent weights, while the functions $g(X_t)$ are interpreted as predictors or ``experts" that, based on information $X_t$, predict the counterfactual outcome $Y_{jt}$ (net of time-fixed effects) at time $t$. This follows in the same spirit of forecasts combinations \citep{timmermann2006forecast}, with $g_i$ denoting some ``experts" or learners. Section \ref{sec:weights} discusses examples and properties of weighting schemes. The construction of the weights is based on out-of-sample performance: it uses information $F_0$ which has not been used for the training of the algorithms $g(\cdot)$ (that instead use information before time $t = 0$).

 \paragraph{Case 2: Single Regressor}   Equation \eqref{eqn:jj} is, however, more general than only ensemble methods, since it also allows for estimation with a single regressor $g_1(X_t)$ (e.g., the Synthetic Control).

\begin{algorithm} [h]   \caption{Counterfactual Estimation with Sample Splitting}\label{alg:alg2}
    \begin{algorithmic}[1]
    \Require Observations $\{Y_{0t}, X_t\}_{t=T_{-}}^{T}$, time of the treatment-$T_0$,   learners  $F \mapsto g_1(\cdot,F),\dots, g_p(\cdot,F)$
    \State Split the pre-treatment period into two parts: $t \in [T_{-},0]$ and $t \in [1,T_0]$
    \State Form  predictions   $g_j(\cdot; F_-)$ with $F_-$ being the empirical distribution of $\{Y_{0t}, X_t\}_{t < 1}$ , $j \in 1,\dots,p$.      
             \State  Use the second pre-treatment period,  $\{Y_{0t}, X_t\}_{t =1}^{T_0}$, to estimate the weights of the learners,  $w(F_0)$.
 \State  Compute the predicted counterfactual 
 
\centerline{ $\hat Y_{0t} ^0 = \sum_{j=1}^p w_j(F_0) g_j(X_t)$ for  $ t > T_0$ }

        \Return the predictions $(\hat{Y}_{T_0 + 1}^0, \cdots, \hat{Y}_T^0)$.

         \end{algorithmic}
\end{algorithm}

 \subsection{Testing the Sharp Null Hypothesis}

Given $\widehat{Y}_{0t}$, we construct a test statistics and test with coverage $1 - \alpha$ having the following form: 

 \begin{equation} \label{eqn:teststat1}
         \quad \mathcal{T} = (T - T_0)^{-1/2}\sum_{t = T_0 + 1}^T \left(Y_{0t}^o - \hat Y_{0t} ^0 \right)^2, \quad \phi_\alpha(\mathcal{T}) = 1\Big\{\mathcal{T} \ge q_{1- \alpha}^* \Big\} 
        \end{equation}
        where $q_{1- \alpha}^*$ is the estimated $1 - \alpha$ quantile of $\mathcal{T}$. The test statistic depends on the prediction error between the estimated counterfactual outcome and the potential outcome \textit{under} the null hypothesis. 
        Observe that under Equation \eqref{eqn:jj}, we can represent in compact form the test statistic as a functional of the \textit{empirical} distributions 
        \begin{equation} \label{eqn:tt}  
       \mathcal{T}(F_0, F_1) =  (T - T_0)^{1/2} \int \Big(y -  w(F_0) g(x)\Big)^2 dF_1(x, y) , 
        \end{equation} 
        where $F_1$ denotes the \textit{empirical} distribution over the post-treatment period of $(Y_{0t}^o, X_t)$.   
        
        We obtain the critical quantile $q_{1 - \alpha}^*$ using the block bootstrap \citep{politis1992circular, politis1994stationary}. Formally, we resample the entire vector $(Y_{0t}, X_t)_{t=1}^{T}$,  over the pre-treatment period, constructing first $T_0$ units serving as pre-treatment period's observations and $T - T_0$ units serving as post-treatment period's observations. 
 We then construct the empirical measure of the bootstrapped sample $(F_0^*, F_1^*)$ and compute $\mathcal{T}^* = \mathcal{T}(F_0^*, F_1^*)$. A formal description is included in Algorithm \ref{alg:boot}.

The main intuition behind the inferential procedure is the following. 
We use a portion of the data to train predictors, while the remaining observations are used for the bootstrap estimate of the critical value when conducting hypothesis testing. We only estimate learners once and not on each bootstrapped sample. Figure \ref{fig:algsketch} provides a graphical illustration.

         %      Large deviations of $\mathcal{T}(F_0, F_1)$ relative to the distribution of $\mathcal{T}(F_0^*, F_1^*)| (Y_{0t}^o - \bar{Y}_t, X_t)_{t=1}^T, T_0$ indicate violation of the null hypothesis $H_0$. To gain further intuition, suppose that $Y_{0t}^1 - \alpha_t^o \neq Y_{0t}^0$. Then it follows that $Y_{0t}^1 - \alpha_t^o - \bar{Y}_t \sim \mathcal{P}_1 \neq \mathcal{P}_0$. Then the bootstrapped test statistic will have a different distribution than the observed test statistic since observations over the pre-treatment period are used to construct the bootstrapped version of the pre and post-treatment period outcomes, while $\bar{Y}_t$ is not affected by the policy intervention since it takes the mean over the control units only.  
         %   This approach also motivates accurate predictions: if $\widehat{Y}_{0t}^0$ is an accurate proxy of $Y_{0t}^0$, the observed test statistic is the largest under shift in the distribution of the observed outcome over the pre-treatment period, while the bootstrapped test statistic is computed with realizations also before the intervention period after reshuffling the data. 

\begin{algorithm}    \caption{Testing Sharp Nulls: Basic Algorithm}\label{alg:boot}
    \begin{algorithmic}[1]
    \Require Observations $\{Y_{0t}, X_t\}_{t > 1}$, predictors $g_1(\cdot), \cdots, g_p(\cdot)$ estimated as in Algorithm \ref{alg:alg2}. 
      \For{$b=1,\dots, B$}
       \State Sample observations with replacement $\{Y_{0t}^o, X_t\}_{t > 1}$, and obtain bootstrap sample $\{Y_{0t}^{o*}, X_t^*\}_{t > 1}$ by performing circular block bootstrap  on  $\{Y_{0t}^o,X_t \}$ for $t \in \{1,\dots, T\}$; 
       \item Construct the empirical measure from the bootstrap sample $F_0^*, F_1^*$, and the test statistic $\mathcal{T}^* = \mathcal{T}(F_0^*, F_1^*)$; 
 \EndFor
 
           \State Compute $ q_{1-\alpha}^*$ as $(1-\alpha)$-th quantile  of the sample 
      
        \Return Reject the null hypothesis if
           $\mathcal{T} >q_{1-\alpha}^*$.          \end{algorithmic}
\end{algorithm}

Below, we formalize the validity of the bootstrap.\footnote{Lemma \ref{lem:identi2} provides the main intuition. The lemma implies that the distribution before and after the treatment must remain the same under the null hypothesis. Therefore, intuitively, we may expect that $(F_0^*, F_1^*)$ centered around the true empirical distribution converges to the same empirical process (after appropriate rescaling) of the limiting process of $(F_0 - \mathcal{D}_0, F_1 - \mathcal{D}_0)$, under the null hypothesis. We can then invoke Hadamard differentiability properties to show the bootstrap's validity.  We use such properties in the derivation of the validity of the bootstrap. } We impose the following condition.

\begin{ass} \label{ass:stat} Assume that $\{Y_{0t}^0, X_t\}_{t \geq 1}$ is $\beta$-mixing with mixing coefficients  $\sum_{k=1}^{\infty} (k+1)^2 \beta(k) < \infty$. In addition,  $g(\cdot), Y_{0t}^0$ are uniformly bounded almost surely. 
\end{ass} 

\begin{ass} \label{ass:df} Suppose that $w(\cdot)$ is Hadamard differentiable at $\mathcal{D}_0$ and uniformly bounded.  
\end{ass} 

In Appendix \ref{sec:hadamrd} we show that Assumption \ref{ass:df} holds for exponential weights considered in the following sections. We now introduce the first theorem. 

\begin{thm} \label{thm:1} \textit{Let Assumptions \ref{ass:stationarity}-\ref{ass:df} hold. 
%Let the block size depend on $T$ and let $\lambda T = T_0$, for $\lambda \in (0,1)$.
 Let $\lim\sup_{T \rightarrow \infty}  b(T)/ \sqrt{T} < \infty$ and $\lim_{T \rightarrow \infty}   b(T) \rightarrow \infty$.
  %with $b(T)$ denoting the block size of the circular block bootstrap. 
  Then,  under the null hypothesis, whenever  $p < \infty$
$$
\begin{aligned} 
\sup_x \Big|\mathbb{P}(\mathcal{T}^* - \mathcal{T}\le x|Y_{1:T},X_{1:T},T_0, H_0) -\mathbb{P}(\mathcal{T} - \mathbb{E}[\mathcal{T}] \le x| T_0, H_0) \Big| &= o_p(1) , \quad \text{ for } \quad T_0 \propto T \rightarrow \infty. 
\end{aligned} 
$$            
}  
\end{thm}

 \begin{cor}[Size control] Let the conditions in Theorem \ref{thm:1} hold. Then 
 $
 \lim_{T \rightarrow \infty} P\Big(\phi_\alpha(\mathcal{T}) = 1 | H_0\Big) = \alpha. 
 $
 \end{cor} 
 The above corollary follows from the validity of the bootstrap \citep[see for example,][]{fang2014inference} and it guarantees exact asymptotic coverage.

 The proof of Theorem \ref{thm:1} is contained in Appendix \ref{sec:proof}. Theorem \ref{thm:1} does not impose any restriction on the predictor other than Hadamard differentiability. Conditions on Hadamard differentiability are often imposed in the literature \citep{belloni2017program}, and require that weights are smooth functionals of the data. These are satisfied under mild conditions, and simple examples of weights that satisfy such conditions include least squares \citep{lunde2017bootstrapping} or exponential weights we discuss in Section \ref{sec:weights} (see Appendix \ref{sec:hadamrd} for a formal discussion). Note that the theorem is valid under misspecification. 
 
It is interesting to observe that if we were doing classical statistical inference, we would need to account for the estimation error generated by each individual prediction. Instead, the sample splitting procedure in Algorithm \ref{alg:alg2} combined with the resampling method in Algorithm \ref{alg:boot} guarantees valid inference, and it overcomes the complicated generated regressor problem. The key intuition is that the empirical distributions corresponding to the group of observations used for the training and those for the resampling are asymptotically independent under standard mixing conditions. As a result, \textit{sample splitting} guarantees that we can condition on the initial training period $F_-$, without affecting the asymptotic properties of the bootstrap. 
In this sense, our analysis extends the standard sample-splitting procedures employed in the $i.i.d.$ setting \citep{rinaldo2019bootstrapping} to dependent observations.

\begin{figure}[h]
\spacingset{1}
\centering
\includegraphics[scale=0.4]{./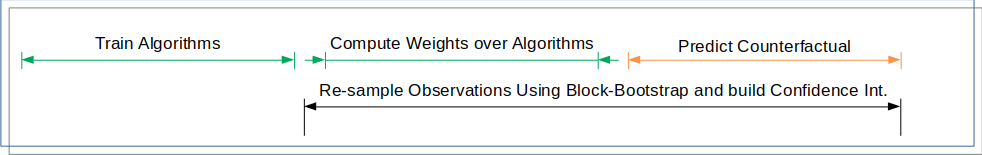}
\caption{The algorithm, in the presence of multiple post-treatment periods, works as follows: train learners on an initial sample, compute the weights on a consecutive block of observations and then predict the counterfactual. The green color denotes the pre-treatment period while the orange color denotes the post-treatment period;   bootstrap observations after imposing the null hypothesis.  } \label{fig:algsketch}
\end{figure}

 \begin{exmp}[Inference with the sample mean] \label{exmp:mean}
For an illustrative example, consider the following simple model:  
$$
Y_{jt} = \mu + \alpha 1\{t > T_0, j = 0\} + \varepsilon_{jt}, \quad \mathbb{E}[\varepsilon_t | T_0] = 0 \quad \forall t,
$$ and as estimator the difference-in-difference estimator 
$$
\hat{Y}_{0t}  = \frac{1}{|T_-|} \sum_{s < 0} Y_{0s},  
$$ 
where, for illustrative purposes, we use sample splitting for its construction.\footnote{Note that we could also have taken $\frac{1}{T_0} \sum_{s =1}^{T_0} Y_{0s}$, since the difference in means is Hadamard differentiable, with $w(F_0) g(X_t) = \int y dF_0(y)$ with $g(X_t) = 1$.}
%which follows directly by the linear representation $w(F_0) g(X_t) = \int y dF_0(y)$ with $g(X_t) = 1$. 
Our corresponding test statistic takes the following form: 
 $$
 \mathcal{T} = \frac{1}{\sqrt{T - T_0}} \sum_{t > T_0} \Big| \varepsilon_{0t} - \frac{1}{|T_-|} \sum_{s < 0}  \varepsilon_{0s} + \alpha\Big|^2.
 $$ 
It is interesting to compare to the test statistic obtained from the permutation based method. This constructs the estimated counterfactual using the sample mean over the time-window $t \in \{1, \cdots, T_0, \cdots, T\}$, imposing a sharp null hypothesis \citep{chernozhukov2018exact}. Its corresponding test statics takes the following form 
\begin{equation}
\mathcal{T}^c = \frac{1}{\sqrt{T - T_0}} \sum_{t > T_0} \Big| \varepsilon_{0t} - \frac{1}{T} \sum_{t=1}^T  \varepsilon_{0t} + (1 - \lambda) \alpha \Big|^2, \quad \lambda  = \frac{T - T_0}{T}.  
\end{equation}   
To gain further intuition, let $\lambda \approx 1$ (i.e., the post-treatment period is larger than the pre-treatment period). Then, the distribution of $\mathcal{T}^c$, corresponding to the permutation-based approach, does not depend on the treatment effect $\alpha$, and therefore it cannot detect treatment effects. On the other hand, the dependence of the test statistic $\mathcal{T}$ with $\alpha$ remains invariant as $\lambda$ changes. \qed

%This motivates the testing procedure as complementary to \cite{chernozhukov2018exact} also in the simple case where only a single model is considered. 
\end{exmp}

 \begin{rem}[Choice of the test statistics] Alternative test statistics can be constructed by taking the $k$-norm of the residual error may also be considered \citep{chernozhukov2018exact}, 
        $$
\Big((T - T_0)^{-1/2}\sum_{t > T_0} \left(Y_{0t}^o - \hat Y_{0t} ^0 \right)^k\Big)^{1/k}, 
        $$
        for a general $k$ of the form. In this paper, the choice of $\mathcal{T}$ is motivated by two reasons: (i) the test statistics well captures permanent effects (i.e., effects exhibited over each post-treatment period) compared to test statistics having $k > 2$, which instead are better suited for the different case of large but temporary treatment effects, as in the case of $k = \infty$;\footnote{This is of interest also in our empirical applications, where the effects of a decrease in expenditures in Medicaid is expected to have long term effects on health-care outcomes, instead of large and temporary effects.} (ii) the test statistic presents desirable differentiable properties compared to the case of $k = 1$, which instead would not exhibit differentiability properties necessary for the validity of the resampling mechanism \citep[see, for example][]{fang2014inference}.
        \qed  
 \end{rem}

\begin{rem}[Testing the weak null hypothesis] \label{rem:average_null}  Our framework for inference discussed in Section \ref{sec:inference} also extends to null hypothesis of the form 
$$
H_0^{avg}: \mathbb{E}\Big[Y_{0t}^1 - Y_{0t}^0\Big] = a^o, \quad a^o \in \mathbb{R}, t > T_0. 
$$ 
While we omit details for brevity, we note that, in this secenario, the test statistics takes the form $\mathcal{T}_A = \Big((T - T_0)^{-1/2} \sum_{t > T_0} \Big(Y_{0t}^o - \hat{Y}_{0t}^0\Big) \Big)^2$. The test statistics is attractive in the presence of additive treatment effects and small deviations,\footnote{See for example the discussion at Page 65 in \cite{imbens2015causal}.} while the square (instead of the absolute value) guarantees differentiability. \qed  
\end{rem}

 \section{Average Treatment Effects, Bias Adjustment and Time-Varying Fixed Effects} \label{sec:average}

 In applications, we may also be interested in estimating the average treatment effect consistently in Definition \ref{defn:estimand}. 
A direct corollary of Lemma \ref{lem:1} is that a simple treated and controls difference, under standard mixing conditions, provides a consistent estimate of the treatment effect. However, such an estimate can present significant variance since we do not control the variation captured by covariates $X_t$, which can be non-vanishing with high-dimensional controls.

An alternative estimator for the average treatment effect takes the following form instead: 
$$
\frac{1}{T - T_0} \sum_{t > T_0 } \Big(Y_{0t} - \widehat{Y}_{0t}^0\Big).   
$$  

Unfortunately, such an estimator may be inconsistent for the true $\tau$, due to misspecification of the functions $g(x)$. We consider a bias adjustment instead.

The bias adjustment works as follows. Define $F_{0, -1/2}$ the empirical distribution of $(Y_t, X_t)$ for $t \in \{1, \cdots, T_0/2\}$, and $F_{0, 1/2}$ the empirical distribution of $(Y_t, X_t)$ for $t \in \{T_0/2  + 1, \cdots, T_0\}$. Then we construct an estimator of the form 
\begin{equation} \label{eqn:tau_hat} 
\begin{aligned} 
\widehat{\tau} & = \frac{1}{T - T_0} \sum_{t > T_0 } \Big(Y_{0t} - \widehat{Y}_{0t}^0 \Big) - \frac{1}{T_0/2} \sum_{t = T_0/2 + 1 }^{T_0} \Big(Y_{0t} - \widehat{Y}_{0t, -1}^0 \Big), \quad \widehat{Y}_{0t, -1}^0 = w(F_{0, -1/2}) g(X_t)
\end{aligned} 
\end{equation}  
where $\widehat{Y}_{0t, -1}^0$ is the estimator whose weights are constructed using only the first half of the treatment period, and $g$ is constructed as in Algorithm \ref{alg:alg2}. The first difference is taken over the post-treatment period, with weights computed over the entire pre-treatment period. The second difference is taken over the second half of the pre-treatment period, but with weights computed over the first half. The second difference defines the bias adjustment. 

To gain further intuition on the bias adjustment, note that we can write 
$$
\hat{\tau} = \tau + o_p(1)  - \underbrace{\int \Big(w(F_0)g(x) - w(F_{0, -1/2})g(x)\Big) d(F_1(y, x) - F_{0, 1/2}(y, x))}_{ = o_p(1)} 
$$ 
where $F_1$ denotes the empirical distribution over the treated period, $F_{0, -1/2}$ denotes the empirical distribution over the first half of the pre-treatment period, and $F_{0,1/2}$ denotes the empirical distribution over the second half. The above estimator is a difference in difference, where the first component converges to $\tau$ while the second component is of order $o_p(1)$ as shown in the following theorem.

We first impose the following condition. 

\begin{ass}[Potential outcome $Y_{0t}^1$] \label{ass:pot1} Assume that 
$$
Y_{0t}^1 = \mu_{0t}^1 + \varepsilon_{0t}^1, \quad \mathbb{E}[\varepsilon_{0t}^1] = 0, 
$$ 
where $\mu_{0t}^1 < \infty$ denotes some (possibly non-stationary) expectation and $\varepsilon_{0t}^1$ is stationary. 
\end{ass} 

Assumption \ref{ass:pot1} states that the potential outcome under control can be decomposed into two components: a (possibly non-stationary) expectation and an additive stationary idiosyncratic shock. We can now introduce the following theorem.

\begin{thm} \label{thm:consistency} Let Assumption \ref{ass:stationarity}, \ref{ass:ident}, \ref{ass:df}, \ref{ass:pot1} holds.  Assume that $(\varepsilon_{0t}^1, Y_{0t}^0, X_t)$ is $\beta$-mixing with mixing coefficients  $\sum_{k=1}^{\infty} (k+1)^2 \beta(k) < \infty$, with $g(\cdot; F_-)$ uniformly bounded. Then 
$
\hat{\tau} - \tau \rightarrow_p 0.
$
\end{thm} 

Theorem \ref{thm:consistency} shows consistency of the estimated effect. It imposes standard mixing conditions. Its proof is in Appendix \ref{sec:proof}.

\subsection{Extension to Time-Varying Fixed Effects} \label{sec:time_fixed}

In this section, we turn to the case where time-fixed effects occur and illustrate how our results extend to this case without necessitating estimating the fixed effect. 

We consider the following data generating process.

\begin{ass}[Fixed effects and Stationarity] \label{ass:stationarityb} Suppose that 
$$
Y_{jt}^0 =  \kappa_t^0 + \iota_j^0 + \varepsilon_{jt}^0, \quad \mathbb{E}[\varepsilon_{jt}^0] = 0 \quad j \in \{0, \cdots, n\}, t \le T, 
$$   
where $\iota_j^0 < \infty, \kappa_t^0 < \infty$ denotes individual and time-fixed effect and 
$
(\varepsilon_{0t}^0, \varepsilon_{1t}^0, \cdots, \varepsilon_{nt}^0, Z_{0t}, \cdots, Z_{nt}) \sim \mathcal{D} ,  
$ 
define arbitrary unobservables and observables which follow a stationary process. 
\end{ass} 

The above assumption allows for the failure of stationary as long as the time fixed effects are the same between the control and treated unit. The following lemma illustrates identification for this case. 

\begin{lem}[Identification of $\tau$] \label{lem:1b} Let Assumptions \ref{ass:ident}, \ref{ass:stationarityb} hold. Then 
$$
\tau = \frac{1}{T - T_0} \sum_{t > T_0} \Big\{\mathbb{E}\Big[Y_{0t}^1|T_0\Big] - \frac{1}{n} \sum_{j > 0} \mathbb{E}\Big[Y_{jt}^0\Big|T_0\Big] \Big\} - \frac{1}{T_0} \sum_{s = 1}^{T_0} \Big\{\mathbb{E}\Big[Y_{0s}^0|T_0\Big]  - \frac{1}{n} \sum_{j > 0} \mathbb{E}\Big[Y_{js}^0\Big|T_0\Big] \Big\}. 
$$  
\end{lem} 

The above lemma shows that we can identify the treatment effect by taking a difference in difference. The following lemma discusses testable implications for the sharp null hypothesis.
 
\begin{lem}[Sharp Null: Testable Implication] \label{lem:identi2b} Under the null hypothesis in Equation \eqref{eqn:null} and Assumptions \ref{ass:ident}, \ref{ass:stationarityb} then $(Y_{0t}^o - \bar{Y}_t, Y_{1t} - \bar{Y}_t, \cdots, Y_{nt} - \bar{Y}_t, Z_{0t}, \cdots, Z_{nt} )$ is stationarity and independent of $T_0$ for all $t \in \{1, \cdots, T_0, \cdots, T\}$, with $\bar{Y}_t = \frac{1}{n} \sum_{j = 1}^n Y_{jt}$. 
\end{lem} 

The proofs of the above lemmas are in Appendix \ref{app:lemmas}. 
The counterfactuals' estimation follows similarly as before, with a minor modification: we subtract the controls' average outcome from the outcome of the treated unit and from the control unit to guarantee stationarity. This is illustrated below. 
$$
\hat{Y}_{0t}^0 - \bar{Y}_t = w(\tilde{F}_0)g(\tilde{X}_t), \quad \bar{Y}_t = \frac{1}{n} \sum_{j = 1}^n Y_{jt}, \quad \tilde{X}_t = \Big(Y_{1t} - \bar{Y}_t, \cdots, Y_{nt} - \bar{Y}_t, Z_{0t}, \cdots, Z_{nt}\Big), 
$$ 
where $\tilde{F}_0$ denotes the empirical distribution of $(Y_{0t} - \bar{Y}_t, \tilde{X}_t)$ over the pre-treatment period. 
 Here, $\bar{Y}_t$ denotes the time-specific average of the \textit{control} units, but not of the treated. Theorem \ref{thm:1} and Theorem \ref{thm:consistency} directly follows once we subctract from the outcomes $(Y_{0t}, \cdots, Y_{nt})$ the controls' average $\bar{Y}_t$ to guarantee stationarity.

\section{Ensamble Weights and Prediction Guarantees under Non-stationarity} \label{sec:weights}

This section discusses a particular weighting scheme, exponential weights, and derives its properties in terms of prediction guarantees.

 \subsection{Ensamble: Exponential weights}	\label{subsec:weights}

Although least-squares have been considered \citep[see, e.g.,][]{kunzel2017meta,polley2010super, elliott2004optimal}, these can perform poorly when the number of learners is large compared to the sample size.\footnote{Observe that weights computed via least-squares are Hadamard differentiable \citep{lunde2017bootstrapping}, hence satisfying the conditions in Theorem \ref{thm:1}.} Equal weighting, on the other hand, can perform poorly in the presence of many uninformative learners. To equip the method to have a better performance in the presence of a large number of algorithms, some of which may potentially be ineffective,
 we discuss an alternative weighting scheme.
 
  With a slight abuse of notation, we index weights by the time period as $w(t)$, with $t \in \{1, \cdots, T_0\}$. For example $w(\tilde{t})$ denotes the weight estimated using the empirical distribution from time $t = 1$ to time $t= \tilde{t}$.
 
 The weigthing scheme we focus are exponential weights of the following form: we write 
 \begin{equation}
	\label{eqn:expweights1}
	w^{(j)}(T_0) = \frac{ \exp \left\{ -\eta \sum_{s=1}^{T_0}  \Bigl(Y_{0s}  - g_{j}(X_s) \Bigl)^2\right\}}{\sum_{i=1}^{p} \exp\left\{ -\eta \sum_{s=1}^{T_0}   \Bigl(Y_{0s} -  g_{i}(X_s) \Bigl)^2 \right\} } . 
	\end{equation}
 Such weights have been widely discussed in literature on exponential aggregation, see e.g., \cite{cesa1999prediction}, \cite{rigollet2012sparse} among others. 
 
 By choosing the tuning parameter $\eta \propto 1/T_0$ exponential weights can be written as differentiable functionals of $F_0$. This is discussed in Appendix \ref{sec:hadamrd}.%\footnote{In Section \ref{sec:weights} we discuss theoretical guarantees of the above weighting scheme in terms of prediction performance.}  
 
We can motivate Equation \eqref{eqn:expweights1} as the solution to a penalized surrogate risk minimization, discussed below.  
 
\begin{exmp}[Model average as a surrogate risk minimization problem] \label{exmp:jhg} Consider the following optimization problem \citep[e.g.,][]{rigollet2012sparse} 
$$
\text{min}_{w \in \mathcal{W}} \{\sum_{t=1}^{T_0} w_j (Y_{0t} -  \sum_{j=1}^{p} g_j(X_t))^2 + \text{pen}(w) \}, \quad \mathcal{W}= \{ w \in \mathcal{R}^{p}: w_j \ge 0; \sum_{j=1}^{p} w_j = 1\}, 
$$
where $\text{pen}(w)$ denotes a penalty on the weights. Under convexity of the loss function, the above optimization problem is interpreted as minimizing a surrogate loss function of the penalized empirical risk.\footnote{Observe that we can write the risk as $\{\sum_{t=1}^{T_0} l(Y_{0t}, \sum_{j=1}^{p} w_j f_{j,t}) + \text{pen}(w) \}
$
where, by convexity, $\sum_{j=1}^p w_j l(Y_{0t}, \sum_{j=1}^{p} f_{j,t}) \ge l(Y_{0t}, \sum_{j=1}^{p} w_j f_{j,t})$. Surrogacy is often considered in decision problem for its computational appeals. }
By letting $\text{pen}(w) = \frac{1}{\eta} \sum_{j=1}^{p} w_j \text{log}(w_j)$, the solution to the problem reads as 
\begin{equation} \label{eqn:exp1} 
w^{(j)}(T_0) = \frac{ \exp(-\eta \sum_{t=1}^{T_0} (Y_{0t} - g_{j}(X_t))^2)}{\sum_{i=1}^{p} \exp(-\eta \sum_{t=1}^{T_0} (Y_{0t} - g_{i}(X_t))^2)}.  
\end{equation} 
Intuitively, the method assigns larger weights to those predictors that have the lowest out-of-sample loss function. These weights inherit oracle guarantees discussed in Section \ref{sec:pp}. \qed 

%The weighting mechanism is easy to compute, and it does not require matrix inversion, being robust to many model specifications. 
\end{exmp}

\subsection{Prediction Guarantees}  \label{sec:pp}

We now derive prediction guarantees of the exponential weights without imposing stationarity conditions. Our result illustrates oracle guarantees of the exponential aggregation method \citep{cesa2006prediction}, here applied to the different contexts of counterfactual prediction.

 We study the behavior of our algorithm trained only on $t-1$ observation and evaluated at the $th$ observation, i.e., we are willing to provide theoretical guarantees on the following cumulative loss. 
\begin{equation} \label{eqn:cum1} 
T_0^{-1} \sum_{t=1}^{T_0} (\hat Y_{0t}^0( \mathcal{F}_{t-1}) - Y_{0t}^0)^2,
\end{equation} 
where $\hat{Y}_t^0(\mathcal{F}_{t-1}) = w(t-  1) g(X_t)$. Here, $\hat Y_{0t}^0( \mathcal{F}_{t-1})$ denotes the prediction at time $t$ using only information at time $t-1$.
Since $\hat Y_{0t}^0( \mathcal{F}_{t-1})$ is estimated only on the previous data and evaluated at $X_t$, this notion of performance is rooted in out-of-sample performance metric. 

We first study the cumulative loss in \eqref{eqn:cum1} compared to the smallest cumulative loss incurred by any of the algorithms under consideration, defined as
 \begin{align*}
 \mathcal{R} &=  T_0^{-1} \sum_{t=1}^{T_0} (\hat{Y}_t^0(\mathcal{F}_{t-1}) - Y_{0t}^0)^2 - \min_{i \in \{1,\dots,p\}} T_0^{-1} \sum_{t=1}^{T_0}  (g_i(X_t)  - Y_{0t}^0)^2. 
\end{align*}

 In the following theorem, we consider the case where $T_0 = \lambda T$ where $\lambda \in (\gamma,1 - \gamma)$ is potentially a random variable for some constant $\gamma > 0$. 
\begin{thm} \label{thm:regret0}
Suppose that $(Y_{0t},  g(X_t)) \in [-M, M]^{p + 1}$, for some $M< \infty$. Consider an exponential weighting scheme as in \eqref{eqn:expweights1} with $\eta \propto \sqrt{{\log(p)}/{T_0}}$. 
Then with probability at least $1 - 2\delta$, 
$
\begin{aligned} 
&\mathcal{R}  \le C_0 \sqrt{\frac{ {\log}(p/\delta)}{ \gamma T}} 
\end{aligned}
$
 for $C_0 < \infty$ being a   constant independent of $T_0$ or $p$. 
\end{thm}

The proof is presented in Appendix \ref{app:5}.
 
\noindent Theorem \ref{thm:regret1} provides an error bound for the empirical one step ahead prediction error. Remarkably, it does not require any stationarity assumption. %The proof follows from \cite{cesa1999prediction} and standard concentration arguments. 
 The bound scales logarithmically with the number of learners, and it scales at square-root $T$ with the length of the sequence. 

In the following lines, we provide stronger guarantees with respect to the \textit{conditional} expectation of $Y_{0t}^0$. For the next theorem to hold, we need to introduce an additional condition, which replaces Assumptions \ref{ass:stationarity} and \ref{ass:ident} that we imposed in previous sections.  

 \begin{ass} \label{ass:regretmodel} (Additive Error Model and Sequential Ignorability) Let the following hold
\begin{equation} \label{eqn:dgp1} 
\begin{aligned}
&Y_{0t}^0  = \mu_t(X_t) + \varepsilon_{0t}.
\end{aligned} 
\end{equation}  
where   $\mathbb{E}[\varepsilon_{0t} |X_t, \mathcal{F}_{t-1}] = 0$. Assume in addition that $\varepsilon_{0t} \perp T_0 |X_t, \mathcal{F}_{t-1}$, where $\mathcal{F}_{t-1}$ denotes the filtration at time $t - 1$. 
\end{ass}

Assumption \ref{ass:regretmodel} states that the potential outcome can be decomposed into two main components, a conditional expectation function $\mu_t(\cdot)$ and idiosyncratic shocks. Note that such a condition is only required to derive guarantees with respect to $\mu_t(\cdot)$.  
%Our framework builds on the idea of averaging over a class of learners. Therefore, 
It is natural to compare the cumulative loss in \eqref{eqn:cum1} with the smallest cumulative loss incurred by any of the algorithms under consideration. We define such a metric of comparison as the Conditional Mean Proxy Regret (CMPR). 
\begin{align*}
 \mathcal{R} ^{\mu} &=  T_0^{-1} \sum_{t=1}^{T_0} (\hat{Y}_t^0(\mathcal{F}_{t-1}) - \mu_t(X_t))^2 - \min_{i \in \{1,\dots,p\}} T_0^{-1} \sum_{t=1}^{T_0}  (g_i(X_t)- \mu_t(X_t))^2. 
\end{align*}
The above definition incorporates notions of performance with respect to the conditional mean function (as opposed to the outcome itself). Our definitions above combine definitions in the literature on prediction of individual sequences \citep{cesa1999prediction} with the literature on causal inference. The main difference with standard notions
of regret is that CMPR is based on the unobserved deviation of the predicted counterfactual from the
conditional mean evaluated at $X_t$, and not just on the cumulative loss of the predictor

\begin{thm} \label{thm:regret1}
 Let Assumption \ref{ass:regretmodel} hold and let $(Y_{0t}, \mu_t(X_t), g(X_t)) \in [-M, M]^{p + 2}$, for some $M< \infty$. Consider an exponential weighting scheme as in \eqref{eqn:expweights1} with $\eta \propto \sqrt{{\log(p)}/{T_0}}$. 
Then with probability at least $1 - 2\delta$, 
$$
\begin{aligned} 
&\mathcal{R} ^{\mu} \le C_0 \sqrt{\frac{ {\log}(p/\delta)}{ \gamma T}} ,   
\end{aligned}
$$
 for $C_0$ being a   constant independent of $T_0$ or $p$. 
\end{thm}

The proof is presented in Appendix \ref{app:5}.
 
\noindent Theorem \ref{thm:regret1} provides an error bound for the empirical one step ahead prediction error with respect to the \textit{conditional mean}.

 If we are willing to assume more, in that our class of algorithms contains one learner that consistently estimates the unknown model, then previous results imply that our synthetic learner will preserve that consistency regardless of the number of learners in the entire class. We consider below asymptotics for $T \rightarrow \infty$ and $T_0 = \lambda T$ where $\lambda \in (0,1)$ is potentially a random variable. 
 
\begin{cor}
Suppose that  the number of learners is such that $\log(p)/T^{1/2} = o(1)$ and conditions in Theorem \ref{thm:regret1} hold. Assume also that the following holds 
$
\min_{i \in \{1, \dots,p\}} T_0^{-1}\sum_{t=1}^{T_0} |\mu_t(X_t) -  g_{i}(X_t)|^2 = o_p(1).
$
Then,
$$
T_0^{-1} \sum_{t=1}^{T_0}  (\hat{Y}_t^0(\mathcal{F}_{t-1}) - \mu_t(X_t))^2 = o_p(1).
$$
for $T\rightarrow \infty$, $T_0 = \lambda T$.
\end{cor}

%We conclude this discussion with a final remark. 

%\begin{rem}[Assumption \ref{ass:regretmodel} and Factor models] It is interesting to interpret Assumption \ref{ass:regretmodel} within the context of a factor model. Returning to Equation \eqref{eqn:factor}, following \cite{li2017estimation, hsiao2018panel}, it is easy to show, under regularity assumptions, that we can write 
%$$
%Y_{0t}^0 =  v^\top(Y_{1:nt} - L_{1:nt} - u_{1:nt}) + u_{0t}, 
%$$ 
%for $L_{1:nt}$ being a measurable function of $Z_{1:nt}$. Assuming cross-sectionally and time-independent $u_{jt}$, exogenous with respect to the factors and loadings, we obtain that $\mathbb{E}[u_{0t} | \mathcal{F}_{t-1}, X_t] = 0$. Instead, $\mathbb{E}[v^\top u_{1:nt} | \mathcal{F}_{t-1}, X_t]$ is a function of $X_t$ and past observations. We can approximate such function including arbitrary many lags $(X_t, X_{t-1}, \cdots)$, to guarantee that Assumption \ref{ass:regretmodel} holds approximately. Details are contained in the Appendix. 
%\end{rem} 

 % Next, we consider necessary assumptions for the estimation of heterogeneous treatment effects. 

% Finally, we note that exponential weights do satisfy Hadamard differentiability, which we discuss in Appendix \ref{sec:hadamrd}. 

\section{Numerical Experiments} \label{sec:experiments}

In this section, we study the performance of the method in the presence of linear and non-linear outcome models, allowing for the presence of many non-informative methods. We compare the methodology to existing testing procedures, including permutation tests of Synthetic Control as well as the Difference-in-Difference method, and showcase a significant improvement.

 \subsection{Experimental Setups}

 We describe our experiments in terms of the outcome model as well as the model of the design of the covariates and the error terms. In our first experiment,  \textbf{DGP1},  we considered a simple Linear Outcome Model
 $$
Y_{0t} =  X_t \beta + a_t D_t + \epsilon_t
$$
and tested the ability of our method to detect changes in the treatment effect $a_t$. This example is intended to model a setting where classical Synthetic Control method is optimal.  Here we set
 $\beta_j = 1/(1+j)^2$, $j=1,\dots,p$, with the last beta chosen such that $ \sum_{j } \beta_j =1$, where we consider $p \in \{10, 50\}$. 
 The parameter $\beta$ will be kept as above for all our experiments.
 We considered a simple  {\sc ar}  model for the errors $\epsilon_t$ with 
  $\epsilon_t =  0.6\epsilon_{t-1} + v_t$ and  $v_t \sim \mathcal{N}(0,1 - 0.6^2)$.
Control units are generated according to a factor model as 
$$
X_{j,t} = \mu_j + \theta_t +  \lambda_j F_t + u_t 
$$ 
with unit specific term $\lambda_j = \mu_j =  (1+j)/j$ a time random effect $\theta_t \sim \mathcal{N}(0,1)$ and an unobserved factors  $F_t \sim \mathcal{N}(0,1)$. Errors $u_t$ follow an  {\sc ar}  model
$u_t = 0.6u_{t-1} + h_t$ with $h_t \sim \mathcal{N}(0, 1 - 0.6^2)$.   In our second experiment, \textbf{DGP2}, we considered a 
 Logistic-like Outcome Model $$
Y_{0t} =a_t D_t + \exp(X_t \beta + \epsilon_t)/(1+\exp(X_t \beta + \epsilon_t))  
$$
with $\epsilon_t = 0.5 \epsilon_{t-1} + 0.3 v_{t-1} + v_t$. This experiment has three settings: (a), (b) and (c). Setting (a) and (b) assume   $v_t \sim \mathcal{N}(0,\sigma^2)$ with $(a)\sigma = 0.1$ and $(b) \sigma = 1$, respectively. Setting (c) assumes $ \epsilon_t = 0.8 \epsilon_{t-1} + v_t$,  with
$
v_t = \sqrt{h_t} z_t, \quad h_t = 0.001 + 0.99 v_{t-1}^2
$
with $z_t \sim \mathcal{N}(0,1)$({\sc ar-arch} process). We report here (a) and (b).  In addition we let 
$
X_t = h_t + u_t
$
with $h_t$ being i.i.d over time with  $ \mathcal{N}(0, \Sigma)$ distribution  with $\Sigma_{i,j} = 0.5^{|i-j|}$ and $u_t = 0.8 u_{t-1} + k_t$ with $k_t \sim \mathcal{N}(0,1 - 0.8^2)$. This setting is designed to test the ability of the proposed Synthetic Learner to adapt to nonlinear outcome model. 
We consider our third setting, \textbf{DGP3}, that follows a factor model
 $$
Y_{0t} = 0.5  + a_t D_t + \theta_t + 0.5 F_t + \epsilon_t .
$$
Error and design structures are the same as that of \textbf{DGP1}.  \textbf{DGP4} considers an interaction outcome model that is polynomial in structure
$$
Y_{0t} = a_t D_t  +(X_{1,t} + X_{2,t} + \dots + X_{10,t})^2 + \epsilon_t 
$$
with $\epsilon_t$ being the same as in  \textbf{DGP2}(a) design $X_t$ is the same as throughout   \textbf{DGP2}; \textbf{DGP5} postulates a cosine, hence periodic, type of outcome model
 $$
Y_{0t} =  \cos(X_t\beta + \epsilon_t)  + a_t D_t .
$$
Error and design setting have three components: (a), (b) and (c) that are following the setup of  \textbf{DGP2} (a), (b) and (c), respectively. Finally, \textbf{DGP6} is a simple non-stationary model, which follows similarly to DGP3, but with $F_t \sim \mathcal{N}(\mathrm{cos}(t), 1)$, with $\mathrm{cos}(t)$ capturing a non-stationary component.\footnote{Note that the component cannot be removed from simple transformations such as differentiating.}  As discussed in the following subsections, we choose $p = 10$ for smaller $T (T \in \{60, 80, 100\})$ and $p = 50$ for larger $T \ge 300$.

\subsection{Testing} 

We consider testing the following hypothesis $H_0$:  
$$
H_0: Y_{0t}^1 - Y_{0t}^0 = 0, \quad  t > T_0.
$$ 
We consider the Synthetic Learner with experts, including a naive XGboost (which uses the default tuning parameter of the package XGboost in R), Support Vector Regression, and {\sc arima}(0,1,1) with external regressors together with $50$ non-informative learners.  Non-informative experts are randomly drawn from a multivariate gaussian with a full covariance matrix.

\subsubsection{Power Study: Comparison with SC and DID}

First, we compare Synthetic Learner's performance to existing procedures whose theoretical properties are well studied. In particular, we compare the Synthetic Control (SC) with weights being constrained to sum to one  and an intercept according to Equation (7) and (8) in \cite{chernozhukov2017exact}, as well as 
the Difference in Difference (DiD) estimator, namely
$$
\hat{Y}_t^{DiD} = \hat{\alpha} + (\hat{\beta} + \hat{\Delta})   \mathbbm{1}_{t > T_0}
$$ 
with coefficient computed as in a standard DiD with the two periods corresponding to pre and post-treatment periods.
 We consider the test statistics for  
  Synthetic Control  

\begin{equation} \label{eqn:t_stat_sc} 
\frac{1}{\sqrt{T - T_0}} \sum_{t = T_0 + 1}^T |Y_{0t} - a_t^o - X_t \hat{w}^0_{SC}|^2
\end{equation}
where $\hat{w}_{SC}^0$ are computed via constrained Least Squares, with coefficients summing to one for Synthetic Control. Finally, we consider 
\begin{equation} \label{eqn:t_stat_sc} 
\frac{1}{\sqrt{T - T_0}} \sum_{t = T_0 + 1}^T |Y_{0t} - a_t^o - \hat{Y}_t^{DiD}|^2
\end{equation}
for Difference-in-Differences. 
 In Figures \ref{fig:plotsim60}, \ref{fig:plotsim_non_stat} and \ref{fig:plotsim100} we compare the performance of our method to permutation tests where $\hat{w}_{SC}$ and $\hat{Y}_t^{DiD}$ must be computed on the entire sample, as described in \cite{chernozhukov2017exact}. We run the Synthetic Learner after training on the period running from $1$ to $T_- = T_0/2, T \in\{60, 100\}, p = 10, T - T_0 = 10$, and we use the remaining observations for computing weights and bootstrap. We consider different treatment effects, denoted by $\alpha$, and report on the x-axis the effect of the policy $\alpha$ divided by the (unconditional) standard deviation of the outcome.

We present power plots  across different $T$ in Figures \ref{fig:plotsim60},  and \ref{fig:plotsim100} for $T = 60$ and $T = 100$ respectively, while Figure \ref{fig:plotsim_non_stat} collects results for the non-stationary DGP.
Across all figures, we observe an improvement over permutation tests with both SC and DiD methods, with more significant improvements for the non-linear DGPs. 
%Even in cases designed to fit SC perfectly, \textbf{DGP1} and \textbf{DGP3}, we see that test based on our Synthetic Learner maintains power that is the order of magnitude better.

Improvements can be due to two factors: bootstrap outperforming permutation as well as Synthetic Learner's better performance in comparison to SC and DiD. Table \ref{tab:perm_vs_boot} (see the discussion below) and Appendix \ref{app:oracle_quantile} provide suggestive evidence that improvements are due to both factors.  
 
In particular, in Appendix \ref{app:oracle_quantile} we also consider the oracle case where the critical value is known. This case permits to compute $\hat{w}_{SC}$ and $\hat{Y}_t^{DiD}$ only using information until time $T_0$, as discussed in \cite{doudchenko2016balancing}. We show improvements also in this setting.  In Appendix \ref{app:vary_T}, we report a more extensive study with $T_0$ and $T - T_0$ vary and show the robustness of our results to different settings.

\begin{figure}[!ht]
\centering
\includegraphics[scale=0.5]{./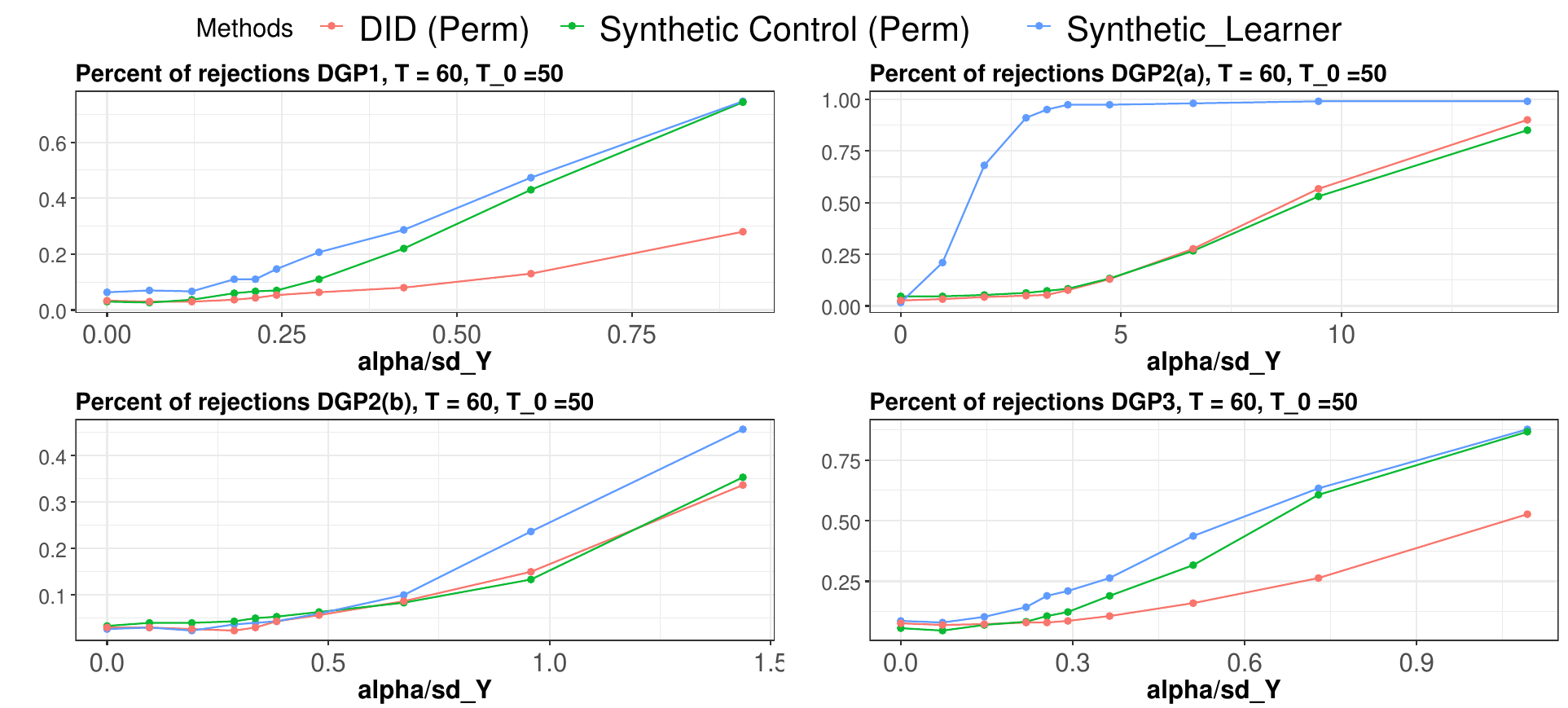}
\includegraphics[scale=0.5]{./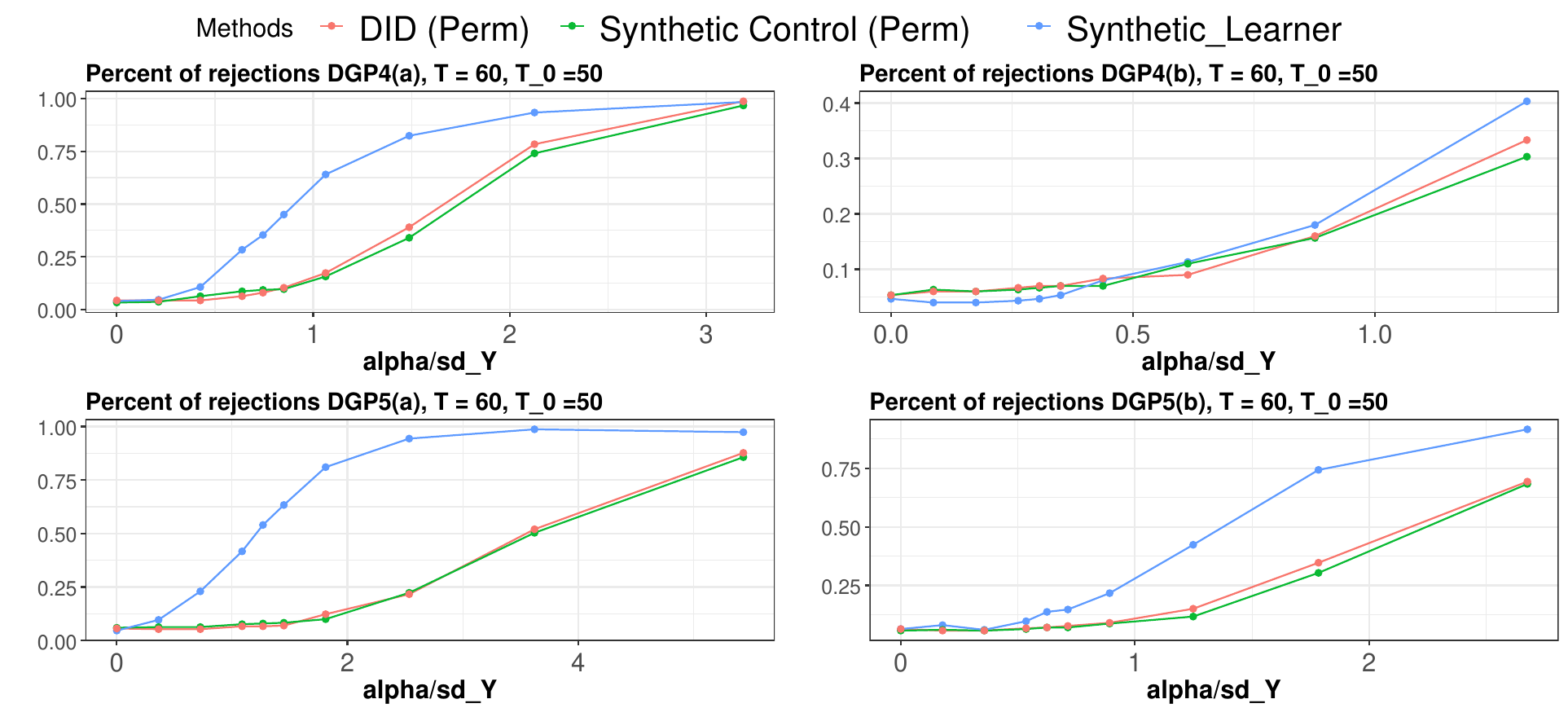}
\caption{$T = 60, p = 10, T_0 = 50$. Percentage of rejections of the null hypothesis of no treatment effects over $300$ repetitions. The x-axis reports the policy's effect rescaled by the outcome's standard deviation. Synthetic learner has XGboost,Support Vector Regression and {\sc arima}(0,1,1) and $50$ additional non informative predictions. The blue line denotes the proposed method, the red line denotes the difference-in-differences, and the green line denotes the Synthetic Control. }
\label{fig:plotsim60}
\end{figure}

\begin{figure}[!ht]
\centering
\includegraphics[scale=0.4]{./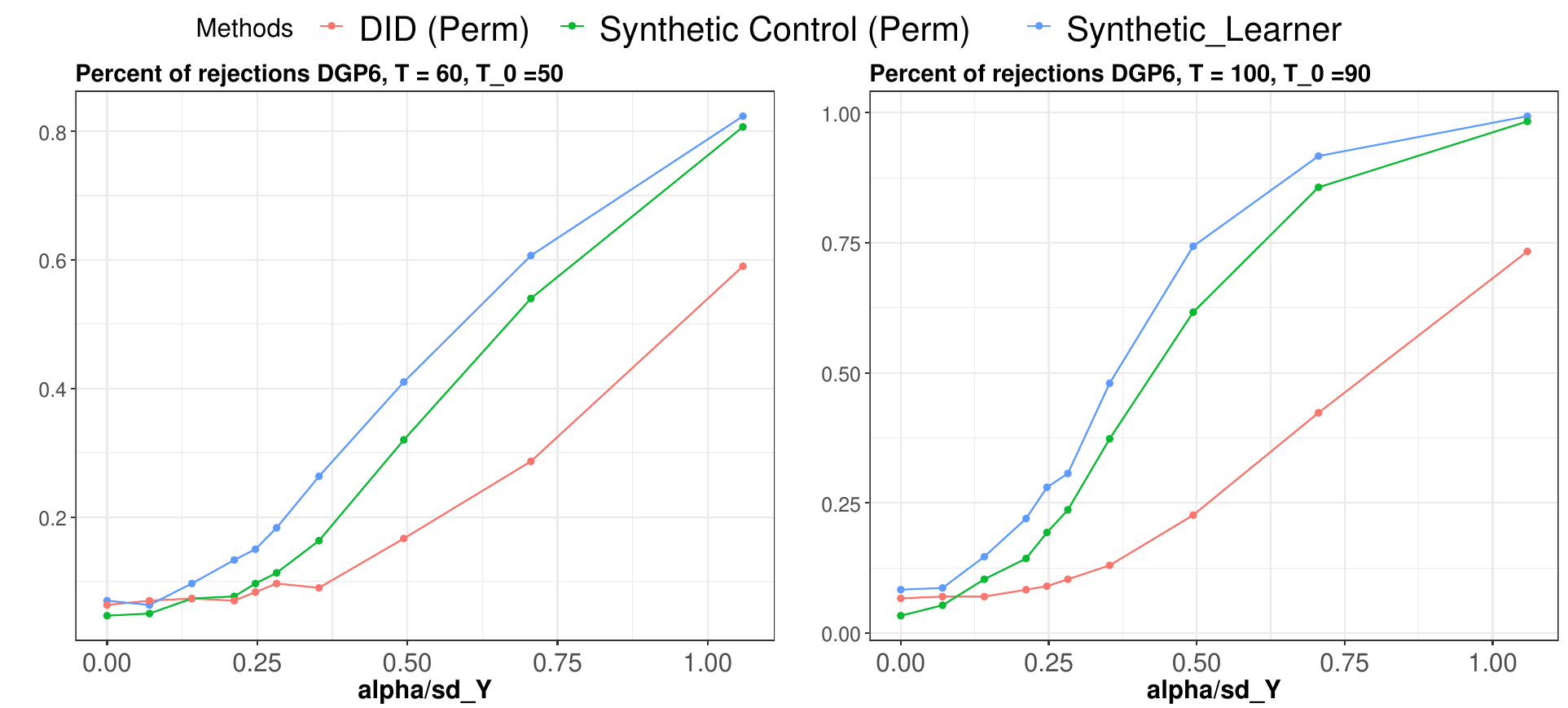}
\caption{$DGP6, T \in \{60, 100\}, p = 10$. Percentage of rejections of the null hypothesis of no treatment effects over $300$ repetitions. The x-axis reports the policy's effect rescaled by the outcome's standard deviation. Synthetic learner has XGboost,Support Vector Regression and {\sc arima}(0,1,1) and $50$ additional non informative predictions.  The blue line denotes the proposed method, the red line denotes the difference-in-differences, and green line denotes the Synthetic Control. }
\label{fig:plotsim_non_stat}
\end{figure}

\begin{figure}[!ht]
\centering
\includegraphics[scale=0.5]{./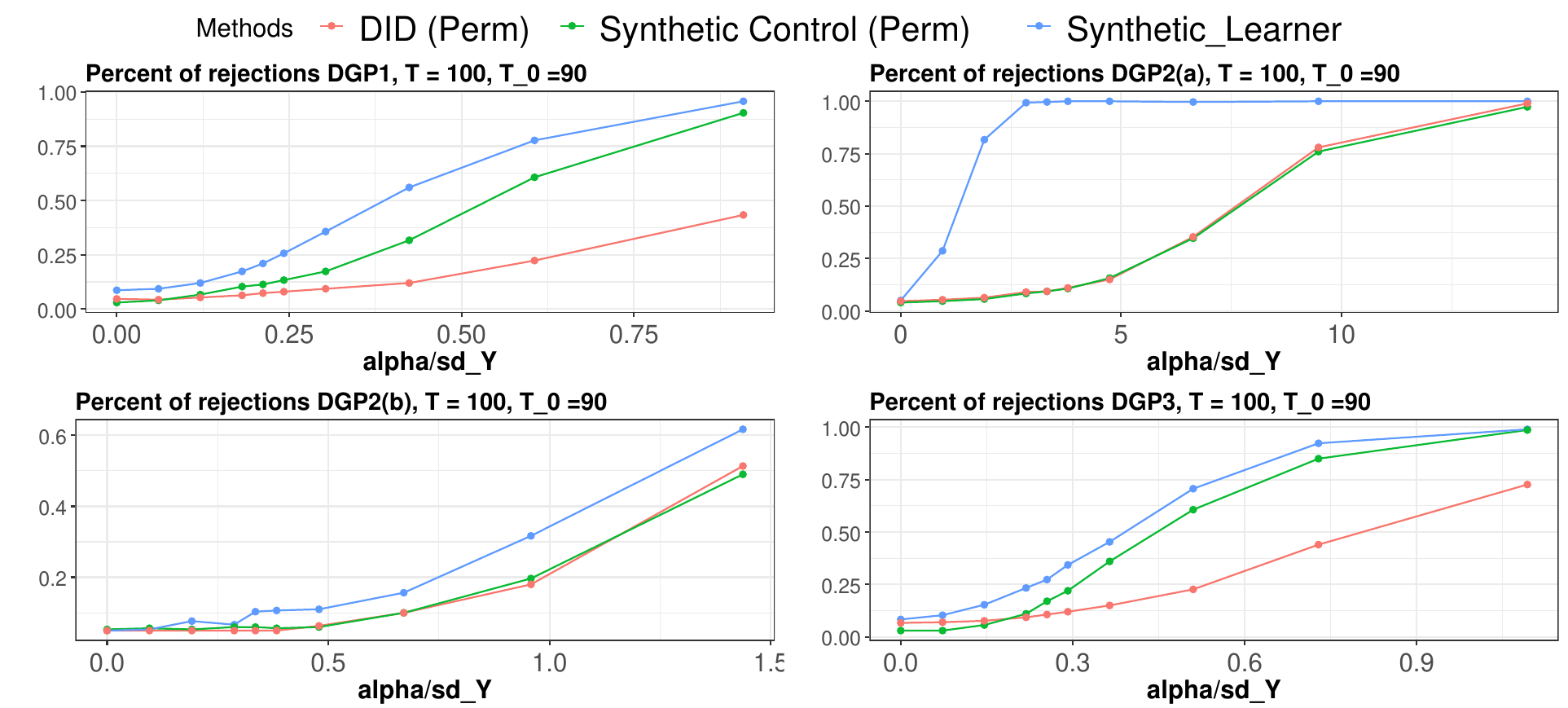}
\caption{$T=100, p = 10$ Percentage of rejections of the null hypothesis of no treatment effects over $300$ repetitions. The x-axis reports the policy's effect rescaled by the outcome's standard deviation.  Synthetic learner has XGboost,Support Vector Regression and {\sc arima}(0,1,1) and $50$ additional non informative predictions.  The blue line denotes the proposed method, the red line denotes the difference-in-differences, and the green line denotes the Synthetic Control. }
\label{fig:plotsim100}
\end{figure}

%We also refer to Appendix !!! for results with different choices of $T \in \{60, 70, 80, \cdots, 300\}$ and $T - T_0 \in \{5, 10, 20, 50\}$. In the Appendix, we see robustness of our results for different choices of the sample size.    

 \subsubsection{Size Control}

 Next, we study the size of our procedure for $T \in \{60, 80\}$ and $T^* = T - T_0 \in \{5, 10, 20\}$, as we vary the post-treatment period from small to longer post-treatment period. These are reported in Table \ref{tab:size}. We observe that in a finite sample, for $T$ relatively small, our test controls size across all DGP, except DGP3 and DGP6, where we observe a small size distortion of five percentage points for a short post-treatment period ($T^* = 5$). These results provide suggestive evidence of the correct size of the proposed test also in a finite sample.

 \begin{table}[!htbp] \centering 
  \caption{Size of the Synthetic Learner for $T \in \{60, 80\}$ and varying post-treatment period length $T^*$, for tests with size $5\%$. The first panel reports the size for $T = 60$ and the second panel for $T = 80$. } 
  \label{tab:size} 
\begin{tabular}{@{\extracolsep{5pt}} cccc|ccc} 
\\[-1.8ex]\hline 
\hline \\[-1.8ex] 
 & $T = 60$ &  & & $T  = 80$ &  &  \\ 
\hline \\[-1.8ex] 
 & $T^*  = 5$ & $T^*  = 10$ & $T^*  = 20$ & $T^*  = 5$ & $T^*  = 10$ & $T^*  = 20$ \\ 
\hline \\[-1.8ex] 
DGP1 & $0.087$ & $0.063$ & $0.037$ & $0.077$ & $0.063$ & $0.030$ \\ 
DGP2(a) & $0.037$ & $0.017$ & $0.023$ & $0.050$ & $0.053$ & $0.037$ \\ 
DGP2(b) & $0.053$ & $0.027$ & $0.013$ & $0.080$ & $0.080$ & $0.023$ \\ 
DGP2(c) & $0.063$ & $0.043$ & $0.033$ & $0.080$ & $0.070$ & $0.033$ \\ 
DGP3 & $0.090$ & $0.087$ & $0.050$ & $0.097$ & $0.060$ & $0.040$ \\ 
DGP4(a) & $0.060$ & $0.040$ & $0.027$ & $0.040$ & $0.063$ & $0.033$ \\ 
DGP4(b) & $0.067$ & $0.047$ & $0.040$ & $0.090$ & $0.053$ & $0.043$ \\ 
DGP4(c) & $0.080$ & $0.057$ & $0.037$ & $0.057$ & $0.067$ & $0.020$ \\ 
DGP5(a) & $0.073$ & $0.047$ & $0.020$ & $0.060$ & $0.077$ & $0.037$ \\ 
DGP5(b) & $0.050$ & $0.063$ & $0.027$ & $0.067$ & $0.043$ & $0.030$ \\ 
DGP5(c) & $0.103$ & $0.040$ & $0.020$ & $0.060$ & $0.057$ & $0.040$ \\ 
DGP6 & $0.090$ & $0.070$ & $0.090$ & $0.107$ & $0.067$ & $0.040$ \\ 
\hline \\[-1.8ex] 
\end{tabular} 
\end{table} 

 \subsubsection{Oracle Study: Learners' performance}

In Table \ref{tab:oracle_base} we study the performance of the algorithm and each base algorithm for $T = 60, T - T_0 = 10$. Table \ref{tab:oracle_base} reports the power of each base algorithm and its corresponding weight assigned by the Synthetic Learner. We observe two striking facts: first, the largest weight is assigned to the best performing algorithm; second, the Synthetic Learner always performs approximately the same or \textit{better} than any base algorithm under consideration. This result suggests the benefits of the ensemble procedure: the procedure combines predictions of different methods to maximize prediction (and ultimately power) optimally. For the sake of brevity, we report results for two linear DGPs (DGP1 and DGP2) and two non-linear ones.

 \begin{table}[!htbp] \centering 
  \caption{$T = 60, p = 10$. Power of the Synthetic Learner and of each base algorithm. In each panel, the first two columns report the power for $\alpha \in \{0.1, 0.3\}$, and the third column reports the assigned weight of the Synthetic Learner.  } 
  \label{tab:oracle_base} 
\begin{tabular}{@{\extracolsep{5pt}} cccc|ccc} 
\\[-1.8ex]\hline 
\hline \\[-1.8ex] 
 & DGP1\textcolor{white}{(a)} &  &  & DGP2(a) &  & \\ 
 & 0.1 & 0.3 & Weights & 0.1 & 0.3 & Weights\\ 
\hline \\[-1.8ex] 
Synthetic Learner  & $0.070$ & $0.110$ & $$  & $0.210$ & $0.910$ & $$ \\ 
XGBoost& $0.030$ & $0.047$ & $0.153$ & $0.157$ & $0.823$ & $0.345$\\ 
SVM  & $0.040$ & $0.037$ & $0.044$  & $0.213$ & $0.910$ & $0.375$  \\ 
Arima & $0.060$ & $0.103$ & $0.803$ & $0.163$ & $0.693$ & $0.280$ \\ 
\hline \\[-1.8ex] 
\end{tabular} 
\begin{tabular}{@{\extracolsep{5pt}} cccc|ccc} 
\\[-1.8ex]\hline 
\hline \\[-1.8ex] 
 & DGP2(b) &  &  & DGP3\textcolor{white}{(a)}  &  &  \\ 
 & 0.1 & 0.3 & Weights & 0.1 & 0.3 & Weights \\ 
\hline \\[-1.8ex] 
Synthetic Learner & $0.030$ & $0.036$ & $$ & $0.080$ & $0.143$ & $$ \\ 
XGBoost & $0.020$ & $0.023$ & $0.132$ & $0.053$ & $0.056$ & $0.217$ \\ 
SVM & $0.030$ & $0.033$ & $0.831$  & $0.057$ & $0.050$ & $0.091$ \\ 
Arima & $0.020$ & $0.026$ & $0.037$ & $0.057$ & $0.106$ & $0.693$ \\ 
\hline \\[-1.8ex] 
\end{tabular} 
\end{table}

   \subsubsection{Endogenous time of intervention}
   
   Next, we study the problem in the presence of an endogenous time of the treatment. 
   In Figure \ref{fig:plotsim2d} we report a representative set of results under the endogenous time of the treatment, where we simulate
   $$
   T_0 = 280 + \min\{50, 1 + (\exp(1/\lambda) - 1/\lambda) \vee 1\}
   $$
where we choose $\lambda = |\sum_{j,t} X_{j,t}|$. The model follows a proportional hazard type model, similar to what is discussed in \cite{shaikh2019randomization}, centered on $T_0 = 281$, with the time of treatment depending on other units' outcomes and constrained between $281$ and $320$.  Figure \ref{fig:plotsim2d} collects results with an endogenous time of treatment when critical quantiles are estimated via resampling, where the confidence intervals for the Synthetic Control and DiD are constructed using the permutation-based method in \cite{chernozhukov2017exact}. Results are consistent with the case of an exogenous treatment timing.

%  \jelena{maybe explain how and why?} \davide{Let me know if clear. It is above}
%
% Synthetic Learner   performs extremely well across all{\bf DGP}s. It is indistinguishable from SC  in{\bf DGP1} where SC is the optimal method. For the case of{\bf DGP2}, TT outperforms greatly SC and DiD when the model is highly non-linear (a) and (c) whereas all four methods are indistinguishable when the model  has very small signal-to-noise ratio (b).  It is worth observing that part (c) of{\bf DGP2} includes a complicated {\sc ar-arch} time-series which are estimated using simple non-parametric methods or Lasso. Yet the algorithm is able to detect the effect of the treatment rather quickly. Only for{\bf DGP4} the methods are comparable. Across all other{\bf DGP}s, Synthetic Learner   uniformly dominates the alternatives; even in the case of{\bf DGP5} which is supposed  to suit SC greatly. 

%The Synthetic Learner   performs best for any{\bf DGP} excluded{\bf DGP1}. The variance of the outcome variable, $\sigma^2$, affects results of simulations. Comparing{\bf DGP2}(b) with{\bf DGP2}(a), for instance , we observe that the Synthetic Learner  does improve upon competing methods only for a smaller variance. The main reason is because with a lower signal to noise ratio, methods are not distinguishable.  

\begin{figure}[!ht]
\spacingset{1}
\centering

\includegraphics[scale = 0.4]{./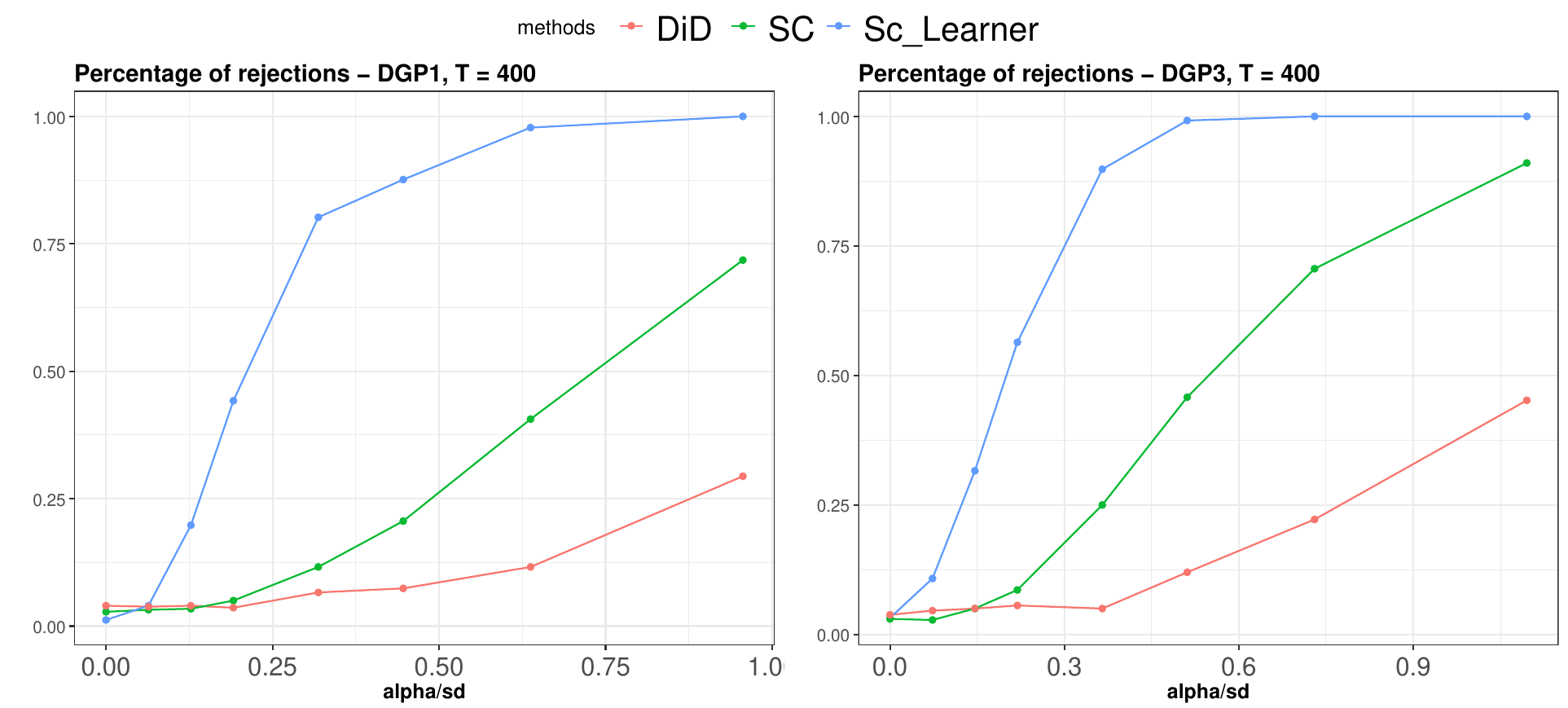}

 \caption{Percentage of rejections   over $500$ repetitions with $T=400$, $T_0=280$, $p=50$ and $T_-=140$ when the critical quantile is estimated via resampling and the time of treatment is \textit{endogenous}. The x-axis reports the policy's effect rescaled by the outcome's standard deviation.  %On the x-axis we report the size of the treatment effect. 
 %On the y-axis we report the percentage of rejection of the sharp null hypothesis of no effect. 
  } \label{fig:plotsim2d}
 \end{figure}

\subsubsection{Variability in the quality of the learners}

Next, we study the variability of the proposed method concerning the number and quality of learners included in learners' classes. We consider four different variations of the Synthetic Learner: Exponential and  Least Squares weighting with $10$ and $100$ new non-informative learners. To guarantee the feasibility of the optimization problem given a large number of learners, we consider a large $T = 300$. 
Figure 
  \ref{fig:plotsim3} contains the results. There we observe that many non-informative learners do little to nothing to the proposed Synthetic Learner. In sharp contrast, Least Squares' weighting suffers a substantial loss in power when the number of non-informative learners is increased.

\begin{figure} [!ht]
\spacingset{1}
\centering

\includegraphics[scale=0.5]{./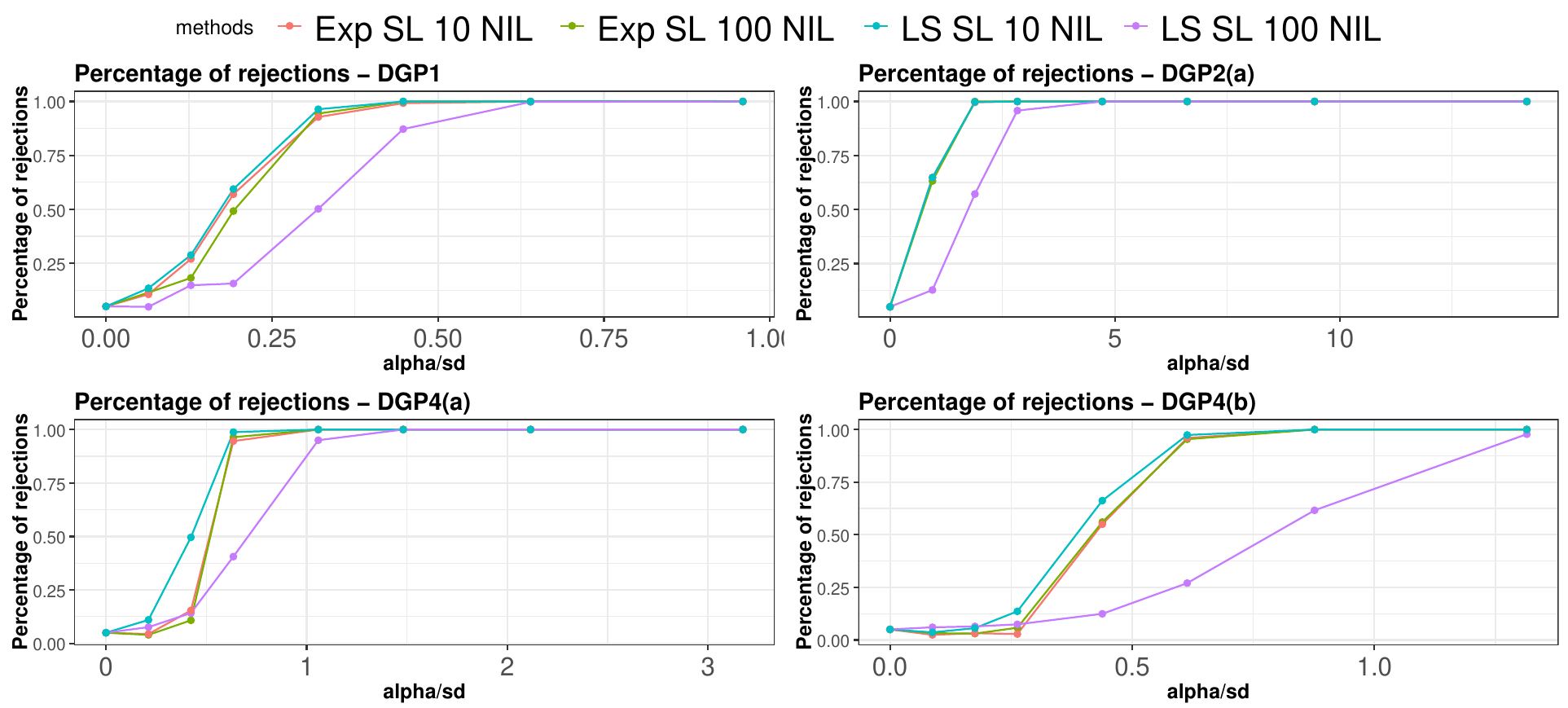}
		\caption{Percentage of rejections of null hypothesis $H_0$ over $500$ repetitions with $T=300$, $T_0=280$, $J=50$ and $T_-=140, p = 50$. The x-axis reports the policy's effect rescaled by the outcome's standard deviation. As base learner we consider XGboost, Support Vector Regression, {\sc arima}(0,1,1) and either $100$ or $10$ additional non informative learners(NIL). We denote exp SL as the Synthetic Learner using exponential weights and LS SL as the Synthetic Learner using Least Squares weights. 
		%On the y-axis, we report the percentage of rejection of the sharp null hypothesis of no effect. The red line reports the method with ten non-informative algorithms and exponential weights; the yellow line the method with exponential weights and one-hundred non-informative algorithms; the blue line reports the method with least-squares weights and ten non-informative experts; the purple line reports the method with least-squares weights and one-hundred non-informative algorithms.
   } \label{fig:plotsim3}
\end{figure}

   % The figure shows that exponential weights are robust when the set of learners contains many with poor predictive power and only few with good predictive power. 

\subsubsection{Bootstrap vs permutations}

We compare the performance of the circular bootstrap against permutations proposed in \cite{chernozhukov2017exact}. We consider only one learner: OLS. We compute the OLS coefficient for the bootstrap method using only the first $T_0/2$ observations, and we bootstrap the remaining ones. We estimate the full sample's coefficient for the permutation method after imposing the null hypothesis of no effect. We consider  the  true effect is either $\alpha_t = 0.2$ or $\alpha_t = 0.3$. We vary $T \in \{60, 80\}$ and we consider $T - T_0 = 10$.  Results are collected in Table \ref{tab:perm_vs_boot}.
For the non-linear design, we mostly observe significant improvements in power, up to approximately fifty percentage points. Only for a few designs we observe comparable performances or slightly inferior by one to three percentage points.

\begin{table}[!htbp] 
\spacingset{1}
\centering 
  \caption{We compare the percentage of rejection of the sharp null hypothesis $\alpha_t^o = 0$ over 300 replications. We use the bootstrap method, while rejections via permutations are presented in the parenthesis. We predict the counterfactuals only using Least Squares. $\alpha$ denote the true policy effect.  $T_0 = T - 10$. 
  } 
  \label{tab:perm_vs_boot} 
\begin{tabular}{@{\extracolsep{0pt}} ccc|cccc} 
\\[-1.8ex]\hline 
\hline \\[-1.8ex] 
    &  $T = 80$ & & $T = 60$ &  & \\ 
      &$\alpha=0.2$ & $\alpha=0.3$ & $\alpha=0.2$ & $\alpha=0.3$  \\ 
\hline \\[-1.8ex] 
DGP1    & $0.090(0.040)$ & $0.107(0.067)$ & $0.073(0.037)$ & $0.057(0.060)$  \\ 
DGP2(a)  & $0.697(0.507)$ & $0.940(0.793)$  & $0.420(0.317)$ & $0.747(0.667)$\\ 
DGP2(b)  & $0.030(0.067)$ & $0.040(0.087)$ & $0.040(0.047)$ & $0.030(0.070)$  \\ 
DGP2(c)  & $0.690(0.167)$ & $0.833(0.303)$ & $0.430(0.177)$ & $0.567(0.253)$ \\ 
DGP3    &$0.033(0.050)$ & $0.047(0.023)$ & $0.030(0.027)$ & $0.037(0.053)$ \\ 
DGP4(a) &  $0.143(0.087)$ & $0.300(0.280)$ & $0.113(0.107)$ & $0.213(0.203)$\\ 
DGP4(b) &  $0.070(0.040)$ & $0.090(0.077)$ & $0.043(0.053)$ & $0.063(0.037)$ \\ 
DGP4(c) &  $0.180(0.063)$ & $0.297(0.173)$ & $0.110(0.080)$ & $0.230(0.187)$ \\ 
DGP5 & $0.113(0.077)$ & $0.167(0.133)$ & $0.057(0.050)$ & $0.127(0.057)$ \\ 
DGP6 & $0.027(0.060)$ & $0.057(0.060)$ & $0.033(0.027)$ & $0.047(0.067)$ \\ 
\hline \\[-1.8ex] 
\end{tabular} 

\end{table}

\section{The Effect of Public Health Insurance Ineligibility on Access to Medical Care} \label{sec:real}

Understanding the effect of public health insurance coverage on health care access is a major concern in health economics \citep{kolstad2012impact, long2009another, baicker2013oregon, anderson2012effect, garthwaite2014public}. 

The TennCare dis-enrollment program represents the largest reduction in public health insurance coverage ever experienced in the US. Between 2005 and 2006, approximately 170,000 individuals lost public health insurance coverage. Most of these individuals were childless adults who gained public health insurance coverage approximately ten years before, in 1994, during the Medicaid program expansion in Tennessee. In this section, we study the effect of the reform over childless adults on delayed medical care access due to medical costs. This population is of particular interest since most of the Affordable Care Act expansions target childless adults. 
\cite{tello2016effects} estimates that the TennCare dis-enrollment significantly decreased the likelihood of having health insurance between 2 and 5 percent.  
The author estimates an increase of between two and three percentage points in the probability of not going to a medical center when sick.\footnote{The reader might refer to Panel C, Table 5 in \cite{tello2016effects}.}  
Our analysis provides supportive evidence for the claim, with positive effects ranging between one and five percentage points but with higher uncertainty in the absence of stationary. Details are discussed in the following lines.

%Using the only Southern States, we estimate an average treatment effect between two to five percentage points. We test a sharp null hypothesis using all Southern states showing significant effects under some but not all scenarios of interest. Placebo tests in the other Southern States fail to reject the null hypothesis of interest, providing suggestive evidence of control of the nominal size.

\subsection{Data} 

We use BFRSS data\footnote{  Behavioral Risk Factor Surveillance System Data: \href{url}{https://www.cdc.gov/brfss/annual\_data/annual\_data.htm}.} to investigate the effect of the reform on the percentage of people who cannot afford healthcare expenses for medical costs. BFRSS is a national survey that has been continuously run over the years since 1984. The survey contains individual-specific information, including residence, state of health, access to health coverage, and others. The survey is run on a rolling basis, and the dataset can be organized as a long sequence of monthly and quarterly observations since we can cluster observations by the date of the interview. On average, we observe $150$ childless adults between $18$ and $64$ years old in Tennessee per month from $2017$ to $1993$. To overcome survey variability, we aggregate data at a \textit{quarter} level. The outcome variable is the percentage of childless adults who answered yes to the following survey question:``Was there a time in the past 12 months when you needed to see a doctor but could not because of the cost?''.

\begin{figure}[!ht]
\centering
\includegraphics[scale=0.5]{./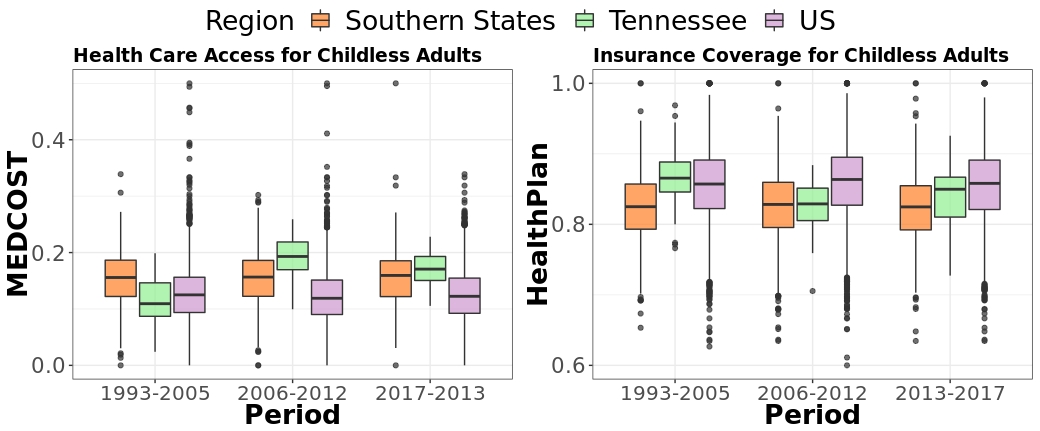}
\caption{Sample distribution of childless adults between $18$ and $64$ years old in Tennessee, the Southern States, and the US who were not able to afford health care expenses (left panel) and who are covered by health insurance (right panel). BFRSS data.
 }
\label{fig:boxplotstat}
\end{figure}

\begin{figure}[!ht]
\spacingset{1}
\centering
\includegraphics[height=5cm,width=12cm]{./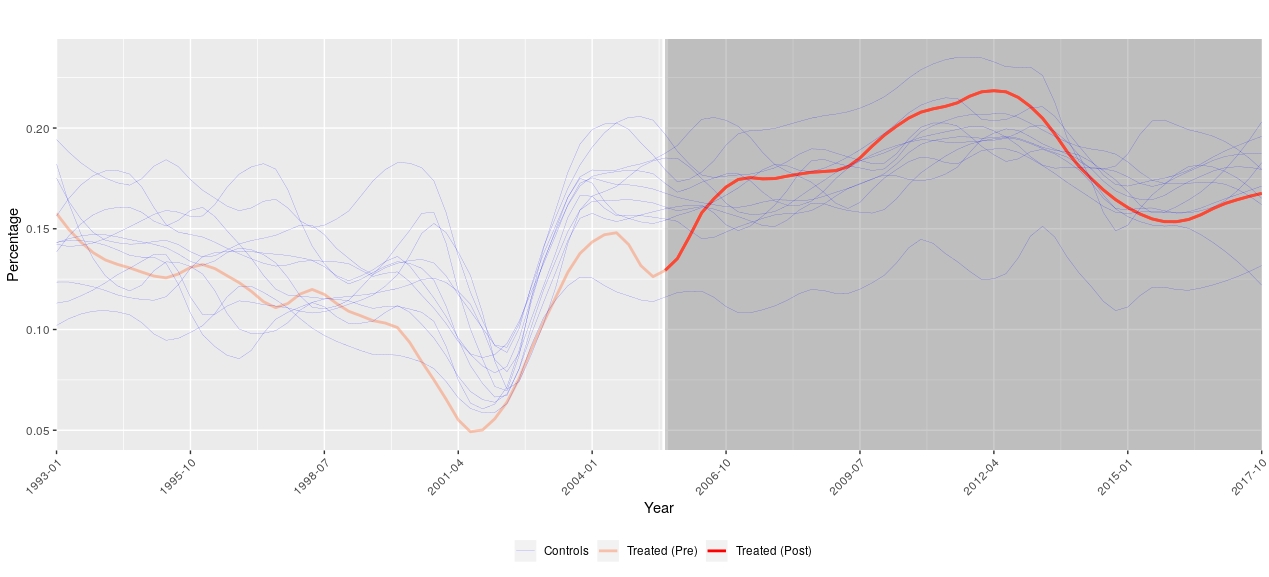}
\caption{BRFFS data. Percentage of adults in Tennessee (treated unit, red) and  Southern States (control, blue) who could not afford health care from January $1993$ to December $2017$. 
%Y-axes is the percentage of individuals (without children) who answered yes to the following survey question: ``Was there a time in
%the past 12 months when you needed to see a doctor but could not because of the cost?''. Individuals were dis-enrolled from public health insurance between $2005$ and $2006$. 
} \label{fig:appl1}
\end{figure}

%In Figure \ref{fig:boxplotchild} we report a box-plot on the number of monthly observations of childless adults in Tennessee between $8$ and $64$ years old over each year. As shown in the figure, for most of the months there are enough observations to construct a valid estimate of the proportion of the population of childless adults who would answers respectively yes to the questions above. On the other hand, there are a few months, such as one month in $2004$ where we have no or very few observations. For these specific cases, we use linear interpolation. 

%\begin{figure}[h]
%\centering
%\includegraphics[scale=0.3]{./plots/box_plot_childless.jpeg}
%\caption{Number of monthly observations of chidless adults who are resident in Tennessee between $2017$ and $1993$. BFRSS data.}
%\label{fig:boxplotchild}
%\end{figure}

In Figure \ref{fig:boxplotstat}, we report the distribution of respondents who were not able to afford medical costs in the past $12$ months(left panel) and who are covered by health insurance\footnote{For the latter questions, we count the number of individuals who answer yes to the question: ``Do you have any kind of health care coverage, including health insurance, prepaid plans such as HMOs, or government plans such as Medicare or Indian Health Service?'' We consider observations who answer ``I do not know'' as not having a plan.}(right panel), after clustering over the period $1993$-$2005$, $2006$-$2012$ and $2013$-$2017$ for Tennessee, other Southern States, and the United States. 
We observe a shift in the mean of Tennessee's outcome over these three periods, with a larger shift in the period just after the policy, between $2006$ and $2012$, while the variance remains approximately stable. 

To check for stationarity of observed time-series, we test for unit roots at $95\%$ confidence level. We reject the null hypothesis of a unit root in the time series of interest displayed in Figure \ref{fig:appl1}.\footnote{We use an Augmented Dickey-Fuller test, with constant and without time trend, and include one, two,  or three lags. P-values are respectively $<0.01$ for the first two tests and $0.05$ for the latter.} However, we warn the reader that a lack of stationarity or confounders may invalidate the analysis. To accommodate failures of stationarity in the presence of time-varying fixed effects, we consider two alternative estimators with and without fixed effects adjustments, as discussed in the following subsection.

\subsection{Estimation} 

As proposed in the Synthetic Control literature \citep{abadie2010synthetic}, we impute the potential outcome under no dis-enrollment using a set of control variables in the other states. 
The starting date of the treatment corresponds to the second half of $2005$.\footnote{The overall dis-enrollment started in July 2005, and it lasted until June 2006. Most childless adults who dis-enrolled during this period were not able to requalify for Medicare \citep{garthwaite2014public}.}

While the dis-enrollment program may be mostly attributable to an exogenous budget deficit \citep{argys2017losing}, the series may still be affected by confounding sources over the period study, one of which is Obamacare's launch in $2014$.\footnote{The Affordable Health Care Act, also known as Obama Care, was officially approved in $2010$, but the major change entered into force in $2014$.} 
To control for potential confounders related to Obamacare, we consider as the post-treatment period the series until $t = 2014$, while we replicate the analysis also including periods until $2017$ in Appendix \ref{sec:a_emp}. The selection into Obamacare from the states may reflect structural differences among different states. Motivated by this observation,
we use a pool of control units only those Southern States that, similarly to Tennessee, did not expand Medicaid between $2010$ and $2014$, namely South and North Carolina, Mississippi, Alabama, Florida, and Georgia. In the Appendix, we replicate the analysis with all the states.

We construct the ``Synthetic Control'' using the Synthetic Learner described in the current paper. We consider the share of individuals in other countries who could not afford necessary health care expenses as control variables. To allow for time-varying fixed effects, we consider two variations of the Synthetic Learner, with and without fixed effects adjustments (see Section \ref{sec:time_fixed}). We refer to the Synthetic Learner with fixed effect adjustment as ``Demeaned Synthetic Learner" and SL otherwise.

 We train Random Forest, Lasso, and a Factor model\footnote{To guarantee the validity of the algorithm through the sample splitting procedure, the factor model consists of estimating the principal component over the training period, regressing the principal component on the control states over the training period and making counterfactual predictions using the predicted factor on the remaining periods.}, and Diff-in-Diff mean proxy as discussed in \cite{doudchenko2016balancing} as base predictors.

Random Forest also contains additional covariates, such as the employment level in each state. Hyperparameters for the base predictors are chosen via cross-validation within the sample used to train such predictors. We construct weights with a tuning parameter $\eta = \frac{1}{\sqrt{T} \mathrm{Var}(Y_t)}$ where the rescaling by the variance guarantees that the estimated weights are scale-invariant.\footnote{Since the loss is the squared loss, by rescaling by the variance we have losses of the form $(Y_t - \hat{Y}_t)^2/\mathrm{Var}(Y_t) = (Y_t/\mathrm{SD}(Y_t) - \hat{Y}_t/\mathrm{SD}(Y_t))^2 $ which are unit free in the outcome's unit. We rescale by $1/\sqrt{T}$ following the theoretical results of predictions. In the Appendix, we report results also after choosing different rescaling. } In Appendix \ref{sec:a_emp}, we show the robustness of our results for several other choices of $\eta$. We consider two alternative sample splitting rules. First, observations between $1998$ and $2006$ are used to train the algorithms; observations between $1993$ and $1997$ are used to compute the weights and bootstrap. Second, the reverse is considered, where the training occurs over the earliest pre-treatment period. Observations from January $2006$ onwards are used to compute the test statistic.

\subsection{Results}

In Table \ref{table:effect1} we report the estimated test statistic for testing the null hypothesis of no effect, namely $H_0: Y_{0t}^1 - Y_{0t}^0 = 0, \  t > T_0$, over the post-treatment period 2010-2014.  The table reports the critical quantiles, the test statistic, and the estimated ATT when predictors are trained on the period closest to the post-treatment period (Period 1) or an earlier period (Period 2).
The ATT oscillates between one and four percentage points, and its sign remains robust throughout all the setups considered. 
Significant effects are detected when training predictors on Period 1 for the test with a size of ten percent for Synthetic Learner. 
 When estimating treatment effects by training the predictors more distant from the treatment timing (Period 2), results become non-significant, possibly reflecting higher uncertainty. Similarly, when we consider the adjustment for fixed effects, we observe p-values close to or larger than twenty percent, suggesting higher uncertainty for this case.

%Figure \ref{fig:counterfactual1}, we plot the observed outcome and the estimated counterfactual for the percentage of individuals not having economic access to health-care. In the figure, we observe that all the algorithms showcase bias, which does not affect our test's validity. 
The Synthetic Learner predicts an effect larger than Lasso's one but smaller than a factor model. The reader may refer to the Appendix for further details.  In Appendix \ref{sec:a_emp}, we report results over the period 2010-2017, showing attenuated results over the time window 2010-2017.

\begin{table}[!htbp] \centering 
  \caption{ $90\%$ and $80\%$ critical values, t-statistic, and ATT using the southern states as controls. The effect estimated is over the time window 2010-2014 (first row), and consecutive windows 2011-2014 ($m = 1yr$), 2012-2014 ($m = 2yr$), 2013-2014 ($m = 3yr$).  Period 1 collects results when learners are estimated using the window between $1998$-$2006$, and weights are estimated over the period $1993$-$1997$. Period 2 corresponds to the opposite scenario. Demeaned SL denotes the SL with time-varying fixed effects.} 
  \label{table:effect1} 
\begin{tabular}{@{\extracolsep{5pt}} ccccc|cccc} 
\\[-1.8ex]\hline 
\hline \\[-1.8ex] 
Period 1 & SL &  & & & Demeaned SL &  &  &  \\ 
 & CV90 & CV80 & t stat & ATT & CV90 & CV80 & t stat & ATT \\ 
\hline \\[-1.8ex] 
m = 0 & $1.332$ & $1.249$ & $1.354$ & $1.844$ & $0.903$ & $0.846$ & $0.836$ & $0.994$ \\ 
m = 1yr & $1.261$ & $1.176$ & $1.281$ & $1.998$ & $0.794$ & $0.745$ & $0.729$ & $1.060$ \\ 
m = 2yr & $1.217$ & $1.133$ & $1.273$ & $2.166$ & $0.758$ & $0.711$ & $0.696$ & $1.104$ \\ 
m = 3yr & $1.160$ & $1.070$ & $1.229$ & $2.253$ & $0.714$ & $0.664$ & $0.651$ & $1.096$ \\ 
\hline \\[-1.8ex] 
Period 2 & SL &  & & & Demeaned SL &  &  &  \\ 
 & CV90 & CV80 & t stat & ATT & CV90 & CV80 & t stat & ATT \\ 
\hline \\[-1.8ex] 
m = 0 & $1.264$ & $1.173$ & $0.691$ & $5.223$ & $0.519$ & $0.471$ & $0.243$ & $3.137$ \\ 
m = 1yr & $1.211$ & $1.118$ & $0.622$ & $5.362$ & $0.460$ & $0.415$ & $0.163$ & $3.142$ \\ 
m = 2yr & $1.189$ & $1.100$ & $0.625$ & $5.516$ & $0.453$ & $0.407$ & $0.154$ & $3.202$ \\ 
m = 3yr & $1.154$ & $1.060$ & $0.611$ & $5.587$ & $0.450$ & $0.404$ & $0.149$ & $3.197$ \\ 
\hline \\[-1.8ex] 
\end{tabular} 
\end{table}

    In Table \ref{tab:weights} we collect the weights assigned to each base-algorithm. We observe that the Synthetic Learner assigns a larger weight to Lasso
in the absence of fixed effects, and a larger weight to Random Forest in the presence of fixed effects.

\begin{table}[!htbp] \centering 
  \caption{Weights estimated by the Synthetic Learner and by the Synthetic Learner after subtracting the control's mean to allow for fixed effects over Period 1.} 
  \label{tab:weights} 
\begin{tabular}{@{\extracolsep{5pt}} ccc} 
\\[-1.8ex]\hline 
\hline \\[-1.8ex] 
 & Synthetic Learner & Demeaned Synthetic Learner \\ 
\hline \\[-1.8ex] 
Factor & $0.242$ & $0.202$ \\ 
CV Lasso & $0.298$ & $0.253$ \\ 
Random Forest & $0.243$ & $0.295$ \\ 
DID & $0.217$ & $0.251$ \\ 
\hline \\[-1.8ex] 
\end{tabular} 
\end{table}

     In Figure \ref{fig:placebo}, we report the test statistics and the acceptance region for Tennessee and for placebo tests performed on the other Southern States that did not adopt Medicaid expansion. A placebo test consists of testing a policy's effect from $2006$ to $2014$ in a state different from Tennessee. Since none of the other Southern States had significant changes in the Medicaid system, we would expect no rejections for all Southern States except Tennessee. This is shown in Figure \ref{fig:placebo}. We observe that we do not reject the null hypothesis when using only the simple DiD method, potentially due to the underpowered test. This result is consistent with what we observed in simulations, where the synthetic learner outperformed other methods in terms of power.

\begin{figure}[!ht]
\centering
\includegraphics[scale=0.3]{./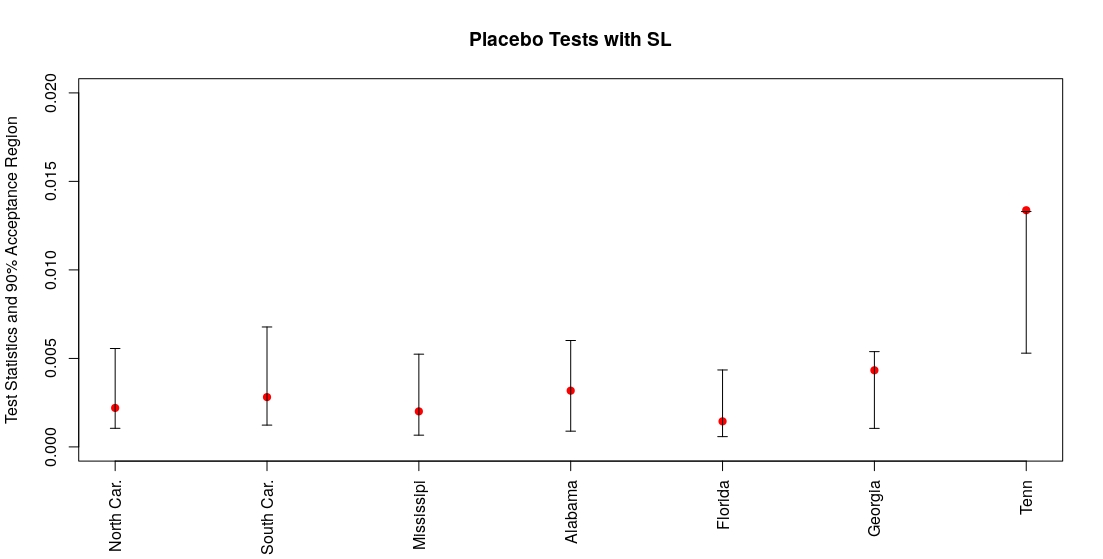}
\includegraphics[scale=0.3]{./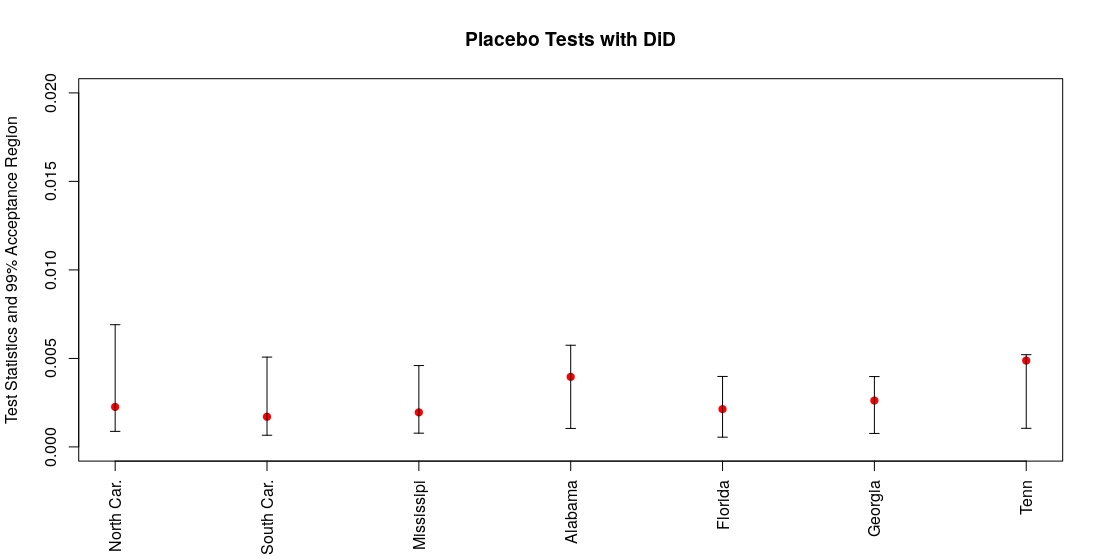}
\caption{We test significant changes in the percentage of childless adults who are not able to afford medical expenses in those Southern states that did not adopt Medicaid Obamacare. We report the test statistic (red dot) and $90\%$ confidence region for each of the states, including Tennessee, over Period 1. Left-panel reports the test when using Synthetic Learner. Right panel when only using Difference in Differences to predict the counterfactual. } \label{fig:placebo}
\end{figure}

\subsection{Discussion: Assumptions and Possible Sources of Confounding} 

We conclude this section with a discussion on the assumptions and possible sources of confounding. The Tennessee dis-enrollment program was a result of a budget deficit. Whenever the budget deficit was due to an exogenous variation \citep{argys2017losing}, the variation in state-level Medicaid expenses can be interpreted as exogenous to childless adults' financial status validating the exogeneity of the treatment timing. In this scenario, the ATT estimator is a consistent estimator of the underlying treatment effect. However, this condition may not necessarily hold. For instance, the budget deficit may be attributed to the increase in Medicaid expenses in previous years \citep{garthwaite2014public}, which may itself depend on individuals' past average income. In such a case, exogeneity may be replaced by conditional exogeneity given the past filtration without violating the prediction guarantees of the proposed algorithm. 

The results of our testing procedure should instead be interpreted as conditional on the treatment assignment mechanism, similar to what is discussed in \cite{ferman2016revisiting}. Here, the assumption of stationarity may fail if, for example, spillover effects of the dis-enrollment program occur over adjacent states, therefore changing the distribution of control units (see the discussion in Appendix \ref{sec:aa1}). While the study of Synthetic controls under spillovers goes beyond this paper's scope, we observe that in this scenario, it may be necessary to estimate counterfactuals also on the other states, which may be affected by the policy intervention. 

Confounding may also result from other events, such as the 1996 Clinton welfare reform and the great depression. In the former case, although the reform in Tennessee was targeted at families with children (so-called Family First policies\footnote{The reader may refer to \url{https://haslam.utk.edu/sites/default/files/ffoct00.pdf}.}), which are excluded from our analysis, it may act as a confounder in our analysis in the presence of general equilibrium effects. Time-fixed effects can (partially) accommodate for these cases \textit{if} the confounder affects the mean only, and the effect is homogenous across the states considered in our analysis. These restrictions motivate studies that focus on sets of control units most similar to the treated, which, in our case, correspond to a subset of Southern States.

 \section{Extension: Carry-over Effects} \label{sec:carryover}
 
 In many applications, treatment effects may carry over in time \citep{imai2021use}. Here we extend the proposed framework to carry-over effects as follows.  We consider binary treatment and  denote the treatment path up to time $t$ as a vector $\mathbf{d} _{1:t}\in \{0,1\}^{t}$.
 Following the potential outcomes framework, we then
posit the existence of potential outcomes $\tilde{Y}_{0t}(\mathbf{d}_{1:t})$, 
corresponding respectively to the
response the  treated subject would have experienced  at time $t$  while being exposed to the treatment  
assignment contained in the treatment path $\mathbf{d}_{1:t}$.
Formulating treatments and potential outcomes as paths were introduced initially by \cite{robins1986new}.

   Notation implicitly assumes no lead effects \citep{athey2018design}. Also, we require that the realizations of potential outcomes do not depend on past $m$ lags or more. Using the same notation as in \cite{rambachan2019nonparametric}, for   $t \in \{T_-,  \dots ,0, 1,  \dots , T_0, \dots, T\}$  we assume throughout the rest of this paper that the following holds.
\begin{ass}[Finite carry-over]  
For all $\mathbf{d}_{T_-:t}$, $\tilde{Y}_{0t}(\mathbf{d}_{T_-:t}) = Y_{0t}(\mathbf{d}_t, \cdots, \mathbf{d}_{t-m})$, for some function $Y_{0t}(\cdot)$
\end{ass} 
The assumption explicitely defines carry-over effects of size $m$. The no-anticipation assumption has been previously discussed in \cite{abbring2007econometric}, \cite{athey2018design}, while the restricted carryover effect is analogous to the identification assumption stated in \cite{imai2018matching}, \cite{bojinov2018time}, \cite{blackwell2018make} among others.  The estimand of interest is now defined as follows
$$
\mathbb{E}\Big[Y_{0t}(\mathbf{1}) - Y_{0t}(\mathbf{0})\Big], \quad t > T_0 + m
$$
   which denotes the (long-run) ATT, comparing two policies always and never implemented. 
  
   The key idea in this setting consists in estimating treatment effects after removing the $m$ lag components. Formally, we construct an ATT estimator of the form 
   \begin{equation} 
 \widehat{ATT}_m = {(T - T_0)}^{-1} \sum_{t > T_0 + m}  (Y_{0t}^1 -\hat Y_{0t} ^0 ) - (T_0/2)^{-1}\sum_{t = T_0/2 +1}^{T_0} (Y_{0t}^0 - \hat Y_{0t, -1} ^0 )
   \end{equation} 
   where the estimator averages after $m$ periods that the policy has been implemented. Our testing procedure remains invariant (and valid) after removing the periods $t \in \{T_0, \cdots, T_0 + m\}$.

 \begin{exmp}[Why considering carry-overs?]   Wrongly assuming the absence of carry-over effects can lead to misspecified causal estimands and hence possibly biased estimates.  
  For example, consider a simple case 
\begin{equation} \label{exmp:3}
Y_t(\mathbf{d}_{(t-m):t}) = Y_t(\mathbf{0}) + \sum_{s=0}^m \alpha_{s+1} d_{t-s} \quad \Rightarrow \quad  Y_t(\mathbf{1}) - Y_t(\mathbf{0}) = \sum_{s=0}^m \alpha_{s+1} d_{t-s}, 
\end{equation}
  for a sequence of constants  $\alpha_{s} \in \mathbb{R}$. The naive ATE estimate, defined as a difference between pre- and post-treatment  averages is possibly biased. In fact, it's mean  equal 
$
 |T - T_0|^{-1} {\sum_{s=0}^{m-1} (m - s) \alpha_{s+1}}{} + \sum_{s=0}^m \alpha_{s},  
$
where $ |T - T_0|^{-1} {\sum_{s=0}^{m-1} (m - s) \alpha_{s+1}}{}$ defines its bias. \qed 
\end{exmp}

\section{Discussion}

In this paper, we have introduced a novel strategy for estimating treatment effects and testing the null hypothesis of interest in the presence of time-dependent observations. We developed a novel algorithm, denoted as Synthetic Learner, that predicts the counterfactual building on multiple regression methods. Our framework provides a starting point for performing estimation and inference, which is valid regardless of the class of models under consideration. 

The presence of one single treated unit at a given time of adopting the policy brings substantial challenges from an identification perspective. We considered three scenarios of interest. First, (i) the adoption date is deterministic, $T_0$ and fixed treatment effects similarly to \cite{chernozhukov2017exact}, \cite{chernozhukov2018t}, \cite{arkhangelsky2019synthetic} among others.
We show that, under stationarity and mixing conditions, our algorithm controls the nominal size regardless of the class of base algorithm under consideration, even in the presence of misspecification bias. Extending this result to non-deterministic $T_0$ is conceptually feasible in the presence of multiple units treated at different points in time. 
(ii) We consider a random and exogenous time of treatment, and under stationarity assumptions, we show that the estimator for the average treatment effect is consistent under weak assumptions, letting $T_0$ be non-deterministic. Finally, (iii) we let the treatment time be \textit{sequentially} exogenous, without assuming any stationarity condition. We provide bounds on the predictive performance under this complex scenario. 
Our paper also opens new questions on constructing valid machine learning methods for causal inference when units exhibit dependence. We leave it to future research its study for inference on conditional average treatment effects under heterogeneous effects and endogenous treatment time.

\bibliography{mybibliography2}
\bibliographystyle{chicago}

\newpage 

 \appendix

\numberwithin{equation}{section}
\makeatletter 
% "activate" the preparatory code, but for section-level headers only
\newcommand{\section@cntformat}{Appendix \thesection:\ }
\makeatother
 
 \numberwithin{figure}{section}
\numberwithin{algorithm}{section}
 \numberwithin{table}{section}
\makeatletter

\setcounter{page}{1}

\begin{center}
    \LARGE Appendix to ``Synthetic Learner: Model Free Inference on Treatments over Time''
\end{center}

\section{Proof of the Lemmas in the Main Text} \label{app:lemmas}

\subsection{Proof of Lemma \ref{lem:identi2} and \ref{lem:identi2b} } 

We prove Lemma \ref{lem:identi2b}, while Lemma \ref{lem:identi2} directly follows from stationarity.  
Note first that under the null hypothesis $Y_{0t}^o = Y_{0t}^0$. As a result, we can write (under the null) $(Y_{0t}^o - \bar{Y}_t, Y_{1t}^1 - \bar{Y}_t, \cdots, Y_{nt} - \bar{Y}_t, Z_{0:n,t}) = (Y_{0t}^0 - \bar{Y}_t, Y_{1t}^1 - \bar{Y}_t, \cdots, Y_{nt} - \bar{Y}_t, Z_{0:n,t})$. Under Assumption \ref{ass:stationarity}, we have 
$$
(Y_{0t}^0 - \bar{Y}_t, Y_{1t}^1 - \bar{Y}_t, \cdots, Y_{nt} - \bar{Y}_t, Z_{0:n,t}) = (\varepsilon_{0t}^0 - \bar{\varepsilon}_{t} + \iota_0^0 - \bar{\iota}^0, \cdots,\varepsilon_{nt}^0 - \bar{\varepsilon}_{t} + \iota_n^0 - \bar{\iota}^0, Z_{0:n,t})
$$ 
where $\bar{\varepsilon}_t = \frac{1}{n} \sum_{j=1}^J \varepsilon_{jt}, \bar{\iota}^0 = \frac{1}{n} \sum_{j=1}^J \iota_{j}^0$. Stationarity directly follows from Assumption \ref{ass:stationarityb}.  

\subsection{Proof of Lemma \ref{lem:1} and \ref{lem:1b}}

 We prove Lemma \ref{lem:1b}. Lemma \ref{lem:1} follows directly from Assumption \ref{ass:stationarity} and exogeneity of $T_0$ (Assumption \ref{ass:ident}).  
First, note that under Assumption \ref{ass:ident} the distribution of potential outcomes and covariates remains invariant conditionally or unconditionally on $T_0$. 
Observe that under Assumption \ref{ass:stationarityb}, we have 
$$
\tau = \frac{1}{T - T_0} \sum_{t > T_0} \mathbb{E}\Big[Y_{0t}^1\Big] - \kappa_t - \iota_j^0 - \mathbb{E}\Big[\varepsilon_{jt}^0\Big] = \frac{1}{T - T_0} \sum_{t > T_0} \mathbb{E}\Big[Y_{0t}^1\Big] - \kappa_t - \iota_j^0. 
$$ 
where, by stationarity and the presence of the unit fixed effect $\iota_j^0$, $\mathbb{E}\Big[\varepsilon_{jt}^0\Big] = 0$, since the expectation is incorporated in the component $\iota_j^0$. 
Observe that we can write 
$$
\frac{1}{n} \sum_{j > 0} \mathbb{E}\Big[Y_{jt}^0 \Big | T_0\Big] = 
\frac{1}{n} \sum_{j > 0} \mathbb{E}\Big[Y_{jt}^0  \Big] = \kappa_t^0 + \frac{1}{n} \sum_{j > 0} \iota_j^0, 
$$ 
where, again, $\mathbb{E}[\varepsilon_{jt}^0] = 0$. Under Assumption \ref{ass:stationarityb} 
$$
\frac{1}{T_0} \sum_{s = 1}^{T_0} \Big\{\mathbb{E}\Big[Y_{0s}^0|T_0\Big]  - \frac{1}{n} \sum_{j > 0} \mathbb{E}\Big[Y_{js}^0\Big|T_0\Big] \Big\} = \iota_0^0 - \frac{1}{n} \sum_{j > 0} \iota_j^0. 
$$ 
Therefore, 
$$
-\frac{1}{T - T_0} \sum_{t > T_0} \frac{1}{n} \sum_{j > 0} \mathbb{E}\Big[Y_{jt}^0 \Big | T_0\Big] - \frac{1}{T_0} \sum_{s = 1}^{T_0} \Big\{\mathbb{E}\Big[Y_{0s}^0|T_0\Big]  - \frac{1}{n} \sum_{j > 0} \mathbb{E}\Big[Y_{js}^0\Big|T_0\Big] \Big\} = - \frac{1}{T - T_0} \sum_{t > T_0} \kappa_t^0 - \iota_0^0
$$ 
which completes the proof of Lemma \ref{lem:1b}.

\section{Theorem \ref{thm:1}} \label{sec:proof} 

\subsection{Definitions} \label{sec:definitions}
 \begin{defn}
    ($\beta$-mixing)\textit{Let Y be a stochastic process and $(\Omega, \mathcal{F},Y_{\infty})$ be the probability space. The $\beta$-mixing coefficient $\beta_Y(h)$ is given by
        $$
        \beta_Y(h) = \sup_t ||\mathcal{P}_{-\infty:t}\otimes \mathcal{P}_{(t + h):\infty} - \mathcal{P}_{-\infty:t}\mathcal{P}_{(t + h):\infty}||_{TV}
        $$
        where $||.||_{TV}$ is the total variation norm, $\mathcal{P}_{-\infty:t} \otimes \mathcal{P}_{(t+h):\infty}$ is the joint distribution and $\mathcal{P}_{-\infty:t}\mathcal{P}_{(t+h):\infty}$ is the product measure. The process is $\beta$-mixing if $\beta_Y(h) \rightarrow 0$ as $h \rightarrow \infty$} 
\end{defn} 
%\noindent Theoretical analysis of the Synthetic Learner builds on literature on empirical processes and the proof builds on \cite{lunde2017bootstrapping}. 

In this section we provide a set of definitions before discussing the main theorem. For a set $\mathbb{A}$ we denote the space of bounded functions on $\mathbb{A}$ by
$$
l^{\infty}(\mathbb{A}, \mathbb{B}) = \{f: \mathbb{A} \rightarrow \mathbb{B}, \text{ such that } \|f\|_{\infty} < \infty \}, \quad 
$$
where $\|f\|_{\infty} = \sup_{a \in \mathbb{A}}|f(a)|$.  We let $\mathcal{C}[0, 1]$ to be the space of cad-lag functions,   
 i.e. right-continuous with left-hand limits equipped with Skorokhod metric. We define the parameter of interest as a function of the joint law of the data. More precisely, for any law $\mathcal{P}$, for some parameter of interest $\theta$, we can define $\mathcal{P} \mapsto \theta(\mathcal{P})$ to be a measurable map from a domain $\mathcal{C}[0,1]$ to $\Theta$. For a metric space $\mathbb{A}$ with norm $||.||_{\mathbb{A}}$ we denote the set of Lipschitz functionals whose level and Lipschitz constant are bounded by one by 
$$
BL_1(\mathbb{A}) = \{f: \mathbb{A} \rightarrow \mathbb{\mathbb{R}}: |f(a)| \le 1\text{ and } |f(a) - f(a')| \le ||a - a'||_{\mathbb{A}}\text{ for all } a,a' \in \mathbb{A}\}.
$$
The definition above helps us discussing the definition of weak convergence, provided below. \begin{defn}(Weak Convergence) We say that $\mathbb{X}_n$ converges weakly in probability conditional on the data to $\mathbb{X}$, or $\mathbb{X}_n \rightsquigarrow \mathbb{X}$ if 
	$$
	\sup_{f \in BL_1} \Big |\mathbb{E}[f(\mathbb{X}_n)] - \mathbb{E}[f(\mathbb{X})] \Big| \rightarrow 0.
	$$
\end{defn} \noindent Consider the generic problem of studying the limiting distribution of 
$$r_n(\phi(\mathbb{X}_n) - \phi(\mathbb{X}))$$ for some $\phi: \mathbb{D}_{\phi} \subset \mathbb{D} \rightarrow \mathbb{E}$. The asymptotic distribution of interest can be derived whenever $\phi$ satisfies some differentiability requirements such that 
$$r_n \{\phi(\mathbb{X}_n) - \phi(\mathbb{X})\} = \phi'_{\theta_0}(r_n(\mathbb{X}_n - \mathbb{X})) + o_\mathbb{P}(1).$$
 The main condition on $\phi$ is that is it satisfies a notion of differentiability denoted as Hadamard differentiability. The definition is provided below. 
\begin{defn}(Hadamard Differentiable Map) \textit{Let $\mathbb{D}$ and $\mathbb{E}$ be Banach spaces with norms $||.||_{\mathbb{D}}$ and $||.||_{\mathbb{E}}$ respectively, and $\phi: \mathbb{D}_\phi \subseteq \mathbb{D} \rightarrow \mathbb{E}$. The map $\phi$, is Hadamard differentiable at $\theta \in \mathbb{D}_\phi$ tangentially to a set $\mathbb{D}_0 \subset \mathbb{D}$ if there exist a continuous linear map $\phi'_{\theta}:\mathbb{D}_0 \rightarrow \mathbb{E}$ such that
		$$
		\lim_{n\rightarrow \infty}  \left \| \frac{\phi(\theta + t_n k_n) - \phi(\theta)}{t_n} - \phi'_{\theta}(k) \right \|_{\mathbb{E}} = 0,
		$$
		$\forall$ converging sequences $t_n \rightarrow 0$, $\{t_n\}\subset \mathbb{R}$ and $k_n \in \mathbb{D}$, $k_n \rightarrow k \in \mathbb{D}_0$ as $n \rightarrow \infty$ and $\theta  + t_n k_n \in \mathbb{D}_\phi$ for all $n \ge 1$ sufficiently large.} \end{defn}  
\noindent It can be shown that Hadamard differentiability is equivalent to the difference in the previous expression in tending to zero uniformly on $k$ in compact subsets of $\mathbb{D}$ \citep{van2000asymptotic}. We move to define the parameters of interest as functionals that map  a space of bounded functions to a Banach space. We do this in the following definition.

\begin{defn} Let the weights be $w(\cdot) : \mathcal{A} \subset l^{\infty}(\mathbb{R}^p, \mathbb{R}) \rightarrow \mathcal{W}$, where $\mathcal{W} = [0,1]^p$. \end{defn} 

%The weights are now   reparametrized  to be the function of the cumulative distribution function (CDF) instead of the observations directly.  For example, whenever the loss function is $l(x,y) = (x-y)^2$ then the $ith$ entry of  the weights computed using exponential weighting scheme 
%$$w_{0}^{(i)}(F) = \frac{\exp(- \eta\int (z_1 - z_i)^2dF)}{\sum_{j=1}^p \exp(-\eta \int (z_1 - z_j)^2 dF)}$$ where $z \in \mathbb{R}^{p+1}$ and $z_i$ is the $i-$th coordinate of the integral.    
 
 As a second step we need to define the tests as a function of the parameter of interest and of a vector $z \in \mathbb{R}^{p+1}$. \begin{defn}We define $S_0(z, w) \in l^{\infty}(\mathbb{R}^{p+1} \times \mathcal{W}, \mathbb{R})$ where $z \in \mathbb{R}^{p+1}$ and $w \in \mathcal{W}$, with $S_0(z,w) = |z_1 - z_2w|^2$ where $z_1$ is the first entry of $z$ and $z_2$ all the remaining entries.  \end{defn} 

\noindent With an abuse of notation, we define  
$$Z_t = (Y_{0t}^o, g(X_t)),$$
the vector of observations after imposing the null hypothesis and of transformations of the learners, and refer to $Y_t$ as $Y_{0t}$ throughout our discussion. We will consider $g(\cdot)$ to be a fixed function independent of $Z_t$ for the moment (hence $Z_t$ stationary under Assumption \ref{ass:stationarity}), and return to $g(\cdot)$ being estimated with past data at the end of the proof (Appendix \ref{sec:final}).

We  denote the empirical measures for control   and treatment period, respectively,
$$\mathcal{P}_{T_0} = \frac{1}{T_0} \sum_{t=1}^{T_0} \delta_{Z_t}, \qquad \mathcal{P}_{T_1} = \frac{1}{T - T_0} \sum_{t > T_0} \delta_{Z_t}, \quad \mathcal{P}_T = \frac{1}{T} \sum_{t  = 1}^T \delta_{Z_t}.$$ Similarly $\mathcal{P}_T^*, \mathcal{P}_{T_0}^*$ denote the bootstrapped counterparts constructed based on the entire sample $\mathcal{P}_T$. 

For $z \in \mathbb{R}^{p+1}$ let the operator $\le$ be the component wise operator. 
We now let $\mathcal{F}$ be the function class:
$$
\mathcal{F} = \{f_s: s \in \mathbb{R}^{p+1}, f_s(z) = 1_{z \le s}\}.
$$ 
It follows that the empirical distribution function for the control and treatment period can be expressed point wise as 
$$F_{T_0}(s) = \mathcal{P}_{T_0}f_s, \quad F_{T_1}(s) = \mathcal{P}_{T_1} f_s$$ respectively and similarly this hold for bootstrap measures. Notice that we can see $F_{T_0}(\cdot)$ and $F_{T_1}(\cdot)$ as elements of $l^{\infty}(\Omega \times \mathcal{F}, \mathbb{R})$ and similarly $F_{T_0}^*$, $F_T^*$ as element of $l^{\infty}(\Omega \times \bar{\Omega} \times \mathcal{F}, \mathbb{R})$ where $\bar{\Omega}$ is the probability space associated with bootstrap weights. For a fixed sample path we can view this mappings as belonging to $l^{\infty}(\mathcal{F}, \mathbb{R})$. We define
\begin{equation} \label{eqn:process}
\mathbb{H}_T(f_s) = 
\sqrt{T_0}\left [
\begin{tabular}{c}
\text{$\mathcal{P}_{T_0} f_s - \mathcal{P} f_s$} \\
\text{$\mathcal{P}_{T_1} f_s - \mathcal{P} f_s$}
\end{tabular}
\right ], \quad \mathbb{H}_T^*(f_s) = 
\sqrt{T_0}\left [
\begin{tabular}{c}
\text{$\mathcal{P}_{T_0}^* f_s - \mathcal{P}_{T} f_s$} \\
\text{$\mathcal{P}_{T_1}^* f_s - \mathcal{P}_T f_s$}
\end{tabular}
\right ].
\end{equation}
Here $\mathcal{P}$ denotes the distribution of $(Y_t^o, g(X_t))$ for fixed $g(\cdot)$ (i.e., the distribution induced by applying a function $[i(\cdot), g(\cdot)]$ to the vector $(Y_t^o, X_t)$, with $g(\cdot)$ evaluated at $T_-$ independent data points from $(Y_t^o, X_t)_{t=1}^T$, and conditional on such data points, and $i(\cdot)$ being the identity function).  
%Note that $\mathcal{P}_{T_0}^*, \mathcal{P}_{T_1}^*$ are centered around $\mathcal{P}_T$ since the bootstrapped empirical distribution is constructed based on the entire sample. 

We express the test statistics of interest  as functionals of $\mathbb{H}_{T}$ and a  fixed null trajectory $\{a_t^o\}$.  
That is, for Equation \eqref{eqn:teststat1} (divided by $(T_0 - T)^{1/2}$) we can define $(A,B) \rightarrow T(A,B)$ with 
$$T(A,B) = \int S_0(z, w(A)) dB.$$

\subsection{Auxiliary Lemmas} 

\begin{lem}[Functional Delta Method, \cite{van1996weak} Theorem 3.9.4] \label{lem:delta_method} Let $\mathbb{D}$ and $\mathbb{E}$ be metrizable topological vector spaces. Let $\phi: \mathbb{D}_\phi \subset \mathbb{D} \mapsto \mathbb{E}$ be Hadamard-differentiable at $\theta$ tangentially to $\mathbb{D}_0$. Let $X_n : \Omega_n \mapsto \mathbb{D}_\phi$ be maps with $r_n(X_n - \theta) \rightsquigarrow X$ for some sequence of constants $r_n \rightarrow \infty$, where $X$ is separable and takes its values in $\mathbb{D}_0$. Then $r_n (\phi(X_n) - \phi(\theta)) \rightsquigarrow \phi'_\theta(X)$. If $\phi'_\theta$ is defined and continuous on the whole of $\mathbb{D}$, then the sequence $r_n(\phi(X_n) - \phi(\theta)) - \phi'_\theta(r_n(X_n - \theta))$ converges to zero in outer probability. 
\end{lem} 

% \begin{lem} \label{lem:fdelta}
%	(Functional Delta Method for the Bootstrap, \cite{kosorok2008introduction}, Theorem 12.1) For normed spaces $\mathbb{D}$ and $\mathbb{E}$ let $\phi: \mathbb{D}_{\phi} \subset \mathbb{D} \rightarrow \mathbb{E}$ be an Hadamard differentiable map at $\theta$, tangential to $\mathbb{D}_0 \subset \mathbb{D}$ with derivative $\phi'_{\theta}$. Let $\mathbb{X}_n$ and $\mathbb{X}_n^*$ have values in $\mathbb{D}_{\phi}$ with $r_n(\mathbb{X}_n - \theta) \rightsquigarrow \mathbb{X}$ where $\mathbb{X}$ is tight and takes values in $\mathbb{D}_0$ for some sequence of constants $0 < r_n \rightarrow \infty$, the maps $W_n \rightarrow h(\mathbb{X}_n)$ are measurable for every $h \in \mathbb{C}_b(D)$ almost surely and where $r_n c(\mathbb{X}_n^* - \mathbb{X}_n) \rightarrow \mathbb{X}$ in a weakly sense for $0 < c < \infty$, then $r_n c (\phi(\mathbb{X}_n^*) - \phi(\mathbb{X}_n))\rightsquigarrow \phi_{\theta}'(\mathbb{X})$. 
%\end{lem}
\noindent We will use the above lemma and modifications of the arguments from the functional delta method for the bootstrap \citep[Theorem 3.9.11 in][]{van1996weak}, to derive the validity of the bootstrap procedure. 

 \begin{lem}\label{lem:lundeShal}\citep[Lemma 3.8 in][]{lunde2017bootstrapping} Let $Z_t$ be a $\beta$-mixing (stationary) process with mixing rate that decays at least at a cubic rate. Consider $\mathbb{H}_t$ defined in \eqref{eqn:process}. Then 
	$$
	\mathbb{H}_t \rightsquigarrow \mathbb{H} = \mathbb{G} \times \mathbb{G}
	$$
	where $\mathbb{H}$ is a bivariate Gaussian process with $\times$ symbol denoting independence. Furthermore, is a mean zero Gaussian Process with covariance structure given by \begin{equation}
	\Gamma(f,g) = \lim_{k \rightarrow \infty} \sum_{i=1}^{\infty} \{\mathbb{E}[f(Z_k) g(Z_i)] - \mathbb{E}[f(Z_k)] \mathbb{E}[g(Z_i)]\}, \quad \forall f,g \in \mathcal{F}.
	\end{equation}  
	 \end{lem}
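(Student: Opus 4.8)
The plan is to read the statement as a two-sample functional central limit theorem for the empirical-distribution-function process of a stationary $\beta$-mixing sequence, and to obtain it by combining a standard empirical-process invariance principle with a boundary argument that decouples the control and treatment blocks. First I would record that the indexing class $\mathcal{F} = \{f_s(z) = 1_{z \le s}\}$ consists of orthant indicators on $\mathbb{R}^{p+1}$: it is uniformly bounded by one and forms a VC (hence Donsker-type) class with polynomial covering numbers. This places us squarely in the regime where the empirical-process invariance principle for stationary $\beta$-mixing sequences applies, provided the mixing coefficients decay fast enough; the cubic-rate condition $\sum_k (k+1)^2 \beta(k) < \infty$ of Assumption \ref{ass:betamix} is more than sufficient.

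Second, I would establish weak convergence of each marginal component of $\mathbb{H}_T$ separately. For the control-period process $\sqrt{T_0}(\mathcal{P}_{T_0} - \mathcal{P})$ and the treatment-period process $\sqrt{T_0}(\mathcal{P}_T - \mathcal{P})$, finite-dimensional convergence follows from a central limit theorem for bounded functionals of $\beta$-mixing sequences via Bernstein blocking and coupling, while asymptotic tightness (stochastic equicontinuity over $\mathcal{F}$) follows from a maximal inequality for mixing sequences combined with the VC entropy bound. Combining the finite-dimensional limits with tightness yields weak convergence of each marginal to a tight mean-zero Gaussian process $\mathbb{G}$ with long-run covariance $\Gamma(f,g)$ as stated; under stationarity both marginals share this same law. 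This piece is exactly Lemma 3.8 of \cite{lunde2017bootstrapping}, whose hypotheses are met under Assumption \ref{ass:betamix}.

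Third, and this is the step tied to our two-block construction, I would show that the limiting bivariate process factorizes as $\mathbb{G} \times \mathbb{G}$ with independent coordinates, for which it suffices to verify that the cross-covariance of the two stacked processes vanishes. By stationarity this cross term equals $(T-T_0)^{-1}\sum_{s \le T_0}\sum_{t > T_0}\mathrm{Cov}(f(Z_s), g(Z_t))$; grouping pairs by their lag $h = t - s$ shows that at most $O(h)$ pairs straddle the boundary at each lag, so the double sum is bounded by $\sum_{h \ge 1} h\,|\gamma_{fg}(h)|$, where $\gamma_{fg}(h)$ denotes the lag-$h$ covariance. Since $|\gamma_{fg}(h)|$ is controlled by $\beta(h)$ for bounded $f,g$, the cubic mixing condition makes $\sum_h h\,|\gamma_{fg}(h)|$ finite, and dividing by $T - T_0$ (which is proportional to $T_0$) gives a cross-covariance of order $T_0^{-1} = o(1)$. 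The two Gaussian limits are therefore uncorrelated, and being jointly Gaussian, independent.

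I expect the main obstacle to be the tightness, or stochastic equicontinuity, of the empirical process under dependence, rather than the finite-dimensional limits or the decoupling: controlling the oscillation of $\sqrt{T_0}(\mathcal{P}_{T_0} - \mathcal{P})$ uniformly over $\mathcal{F}$ requires mixing maximal inequalities whose constants depend delicately on the mixing rate, and this is the technical heart of the cited invariance principle. The decoupling argument, by contrast, is elementary once the long-run covariances are shown to be summable.
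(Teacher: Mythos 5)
First, a point of comparison that matters for your write-up: the paper never proves Lemma \ref{lem:lundeShal}. It is imported verbatim as Lemma 3.8 of \cite{lunde2017bootstrapping} and invoked as a black box in the proof of Theorem \ref{thm:1} (``By Lemma \ref{lem:lundeShal} $\mathbb{H}_T \rightarrow_d \mathbb{H}$''), in the same way Lemma \ref{lem:kosorok} is imported for the bootstrap process. So there is no internal proof to compare against; what you have written is a reconstruction of the external result. As a reconstruction it assembles the right ingredients and in the standard order: the orthant indicators form a bounded, permissible VC class (the paper itself notes this in a footnote to the proof of Theorem \ref{thm:1}); marginal weak convergence of each block follows from an Arcones--Yu type invariance principle for stationary $\beta$-mixing sequences; and your hypothesis check is sound --- since mixing coefficients are non-increasing, the summability condition $\sum_k (k+1)^2\beta(k)<\infty$ of Assumption \ref{ass:betamix} forces $k^3\beta(k)\to 0$, so it does imply the cubic decay the lemma asks for. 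Your lag-counting bound for the cross term, $\sum_{h\ge 1}\min(h,T-T_0)\,|\gamma_{fg}(h)| \le \sum_h h\,|\gamma_{fg}(h)| <\infty$ followed by division by $T-T_0$, is also correct, using the covariance inequality $|\gamma_{fg}(h)|\lesssim \beta(h)$ for bounded $f,g$.

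The one step you should tighten is the final inference ``uncorrelated, and being jointly Gaussian, independent.'' Marginal convergence of each coordinate plus vanishing cross-covariance of the pre-limit processes does not by itself produce a \emph{jointly} Gaussian limit to which that inference can be applied; you need joint finite-dimensional convergence, e.g.\ by Cram\'er--Wold applied to linear combinations whose summands straddle both time blocks (a triangular-array CLT for bounded functionals of the mixing sequence, with the functional switching at $T_0$), after which block-diagonal covariance does give independence. Alternatively --- and this is the cleaner, more standard route under $\beta$-mixing, and the one the coupling-friendly $\beta$-mixing hypothesis is really there for --- insert a buffer of length $\ell_T\to\infty$, $\ell_T = o(T)$, between the two blocks and use Berbee's lemma to replace the post-$T_0$ block by an independent copy at total-variation cost $\beta(\ell_T)\to 0$; asymptotic independence of the two coordinates then follows directly, with no detour through covariances. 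A last cosmetic point: both coordinates of $\mathbb{H}_T$ in \eqref{eqn:process} carry the normalization $\sqrt{T_0}$ while the second averages $T-T_0$ terms, so the two marginals have the same law $\mathbb{G}$ only under the paper's convention $T=2T_0$; in general the second coordinate is $\sqrt{\lambda/(1-\lambda)}\,\mathbb{G}$, and your statement ``both marginals share this same law'' should be qualified accordingly.
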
 
	 
	 	  \begin{lem}\label{lem:kosorok}\citep[][Theorem 11.26]{kosorok2008introduction}  \textit{Let Y be a stationary sequence in $\mathbb{R}^d$ with marginal distribution P and let $\mathcal{F}$ be a class of functions in $L_2(P)$. Let $\mathbb{G}_n^*(f) = Y_n^*f - Y_n f$. Also assume that $Y_1^*, Y_2^*,\dots,Y_n^*$ are generated by the circular block bootstrap procedure with $b(n) \rightarrow \infty$ as $n \rightarrow \infty$ and that there exist a $2 < v < \infty$, $q > v/(v-2)$ and $0< \rho < (v-2)/[2v - 2]$ such that: \begin{enumerate}
		\item $\lim\sup_{k \rightarrow \infty} k^{q}\beta(k) < \infty;$
		\item $\mathcal{F}$ is permissible, VC and has envelope F satisfying $PF^v < \infty$;
		 		\item $\lim\sup_{n \rightarrow \infty}n^{-\rho}b(n) < \infty$.
	\end{enumerate}
	Then $\mathbb{G}_n^* \rightsquigarrow \mathbb{G} \in l^\infty(\mathcal{F})$ where $\mathbb{G}$ is a mean $0$ Gaussian Process with covariance structure $\Gamma(f,g) = \lim_{k \rightarrow \infty} \sum_{i=1}^{\infty} \{\mathbb{E}[f(Y_i) g(Y_k)] - \mathbb{E}[f(Y_i)]\mathbb{E}[g(Y_k)]\}, \quad \forall f,g, \in \mathcal{F}$.} 
\end{lem}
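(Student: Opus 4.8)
The plan is to establish the weak convergence of the bootstrap empirical process in $l^{\infty}(\mathcal{F})$ through the two standard ingredients: conditional convergence of the finite-dimensional distributions, and asymptotic (stochastic) equicontinuity of $\mathbb{G}_n^*$ with respect to the intrinsic seminorm $\rho(f,g) = (P(f-g)^2)^{1/2}$. By the characterization of conditional weak convergence in probability (van der Vaart and Wellner), the claim $\mathbb{G}_n^* \rightsquigarrow \mathbb{G}$ in $l^{\infty}(\mathcal{F})$ is equivalent to: (i) for every finite collection $f_1,\dots,f_m \in \mathcal{F}$ the conditional law of $(\mathbb{G}_n^*(f_1),\dots,\mathbb{G}_n^*(f_m))$ converging in probability to the $N(0,\Gamma)$ law; and (ii) for every $\epsilon>0$, $\lim_{\delta \downarrow 0}\limsup_n \mathbb{P}^*\!\left(\sup_{\rho(f,g)<\delta}|\mathbb{G}_n^*(f)-\mathbb{G}_n^*(g)|>\epsilon\right) = o_{\mathbb{P}}(1)$. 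I would verify these in turn.

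For the finite-dimensional part I would exploit that, conditional on the data, the circular block bootstrap resamples $k = n/b$ blocks i.i.d. from the $n$ circular blocks of length $b = b(n)$. Writing $\mathbb{G}_n^*(f)$ as $k^{-1/2}$ times a sum of $k$ conditionally i.i.d. centered terms, each of the scaled form $\xi^* = b^{-1/2}\sum_{t \in \text{block}}(f(Y_t)-\mathbb{P}_n f)$, I would apply the Lindeberg--Feller central limit theorem for triangular arrays conditionally on the sample. The conditional Lindeberg condition follows from the envelope bound $PF^v<\infty$ with $v>2$, which forces uniform integrability of the normalized block sums, together with $b(n)\to\infty$ and $b(n)/n\to 0$. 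The conditional covariance between the block sums for $f_i$ and $f_j$ is precisely the overlapping-block long-run covariance estimator of K\"unsch and Politis--Romano, and the mixing and moment conditions make it consistent for $\Gamma(f_i,f_j)$; this delivers (i).

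The asymptotic equicontinuity in (ii) is the \emph{main obstacle}, because within each block the original sequence is dependent, so the i.i.d. symmetrization and chaining machinery for empirical processes does not apply directly. My plan here is a coupling argument: by Berbee's lemma I would couple the $\beta$-mixing sequence block-wise with an independent-block sequence having the same one-block marginals, incurring a total coupling error controlled by $k\,\beta(b)$. The assumption $\limsup_k k^q\beta(k)<\infty$ with $q>v/(v-2)$, combined with the block-growth restriction $\limsup_n n^{-\rho}b(n)<\infty$ and $0<\rho<(v-2)/(2v-2)$, is exactly calibrated so that this coupling error is asymptotically negligible. On the coupled independent-block sequence I would then run the standard empirical-process argument for a VC class: symmetrization, Pollard's entropy bound for the uniform covering numbers of a VC class with envelope $F$ satisfying $PF^v<\infty$, and a maximal/chaining inequality to obtain equicontinuity, with the permissibility assumption ensuring measurability of the suprema. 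Transferring the estimate back through the coupling, and then adding the bootstrap resampling layer via a second symmetrization over the conditionally i.i.d. block weights, yields (ii) for $\mathbb{G}_n^*$.

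Finally I would identify the limiting covariance. Using stationarity, the limit of the conditional block covariances telescopes into $\Gamma(f,g) = \lim_{k\to\infty}\sum_{i=1}^{\infty}\{\mathbb{E}[f(Y_i)g(Y_k)] - \mathbb{E}[f(Y_i)]\,\mathbb{E}[g(Y_k)]\}$, where convergence of the series is guaranteed by the summability of the $\beta$-mixing coefficients together with the $v$-th moment bound via a covariance inequality for mixing sequences. Combining the finite-dimensional convergence with the equicontinuity just established gives $\mathbb{G}_n^*\rightsquigarrow\mathbb{G}$ in $l^{\infty}(\mathcal{F})$ with the stated mean-zero Gaussian limit, which completes the argument; I note that the same blocking and coupling steps, applied without the resampling layer, reproduce the unconditional convergence of $\mathbb{G}_n$ to the identical limit, so the two processes share the covariance $\Gamma$.
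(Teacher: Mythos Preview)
The paper does not provide its own proof of this lemma: it is stated as an auxiliary result quoted verbatim from \cite{kosorok2008introduction}, Theorem 11.26, and is used as a black box in the proof of Theorem \ref{thm:1}. There is therefore no ``paper's proof'' to compare against.

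Your sketch is a reasonable outline of the standard argument one finds in Kosorok's book and the block-bootstrap literature it builds on: finite-dimensional convergence via a conditional Lindeberg--Feller argument on the i.i.d.\ resampled blocks (with consistency of the overlapping-block long-run covariance estimator giving the limiting $\Gamma$), and asymptotic equicontinuity via Berbee's coupling to an independent-block sequence followed by the usual symmetrization/chaining machinery for VC classes with an $L_v$ envelope. The calibration of $q$, $v$, and $\rho$ you describe is exactly the tradeoff that makes the $k\,\beta(b)$ coupling error vanish while keeping the block-level moment bounds intact. So your proposal is correct in spirit and matches the established route; just be aware that in the context of this paper the lemma is invoked, not re-proved, and a full write-up would require substantially more detail (in particular the measurability bookkeeping for the conditional weak convergence and the precise maximal inequality on the independent-block process).
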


\begin{lem} \label{lem:vandervaar}\citep[][Lemma A.3.3]{lunde2017bootstrapping}
\textit{Given a continuous function $A$ and a function of bounded variation $B$ in the sense of Hardy-Krause in the hyper-rectangol $\mathcal{R} = \prod_{i=1}^d [a_i,b_i]$ define 
	$$
	\phi(A,B) = \int_{[a,b]} A dB.
	$$
	Then, $\phi: C(\mathcal{R}) \times BV_M(\mathcal{R}) \rightarrow \mathbb{R}$ is Hadamard differentiable at each $(A,B) \in \mathbb{D}_\phi$ such that $\int |dA| < \infty$. The derivative is given by
	$$
	\phi'_{A,B}(a,\beta)= \int_{[a,b]} A d\beta + \int_{[a,b]}a dB.
	$$}
\end{lem}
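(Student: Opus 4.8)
The plan is to leverage the \emph{bilinearity} of $\phi$: Hadamard differentiability then reduces to three elementary convergence statements, each controlled by a Hardy--Krause integration-by-parts estimate together with uniform convergence of the perturbations. Fix $(A,B)\in\mathbb{D}_\phi$, so that $A$ is continuous with finite total variation $\int|dA|=V_{HK}(A)<\infty$ and $B\in BV_M(\mathcal{R})$, and take arbitrary sequences $t_n\to 0$ and $(a_n,\beta_n)\to(a,\beta)$ (uniformly) with $(A+t_na_n,\,B+t_n\beta_n)\in\mathbb{D}_\phi$ for all large $n$. Expanding the bilinear form,
\[
\frac{\phi(A+t_na_n,B+t_n\beta_n)-\phi(A,B)}{t_n}
=\int_{\mathcal{R}}A\,d\beta_n+\int_{\mathcal{R}}a_n\,dB+t_n\int_{\mathcal{R}}a_n\,d\beta_n,
\]
and the goal is to show the right-hand side converges to $\int_{\mathcal{R}}A\,d\beta+\int_{\mathcal{R}}a\,dB=\phi'_{A,B}(a,\beta)$, where the first summand is \emph{defined} through integration by parts (legitimate precisely because $A$ is of bounded variation, even though the tangent direction $\beta$ need only be continuous).

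First I would dispatch the middle term: since $B$ is a fixed integrator of finite variation, the Koksma--Hlawka-type bound $\bigl|\int_{\mathcal{R}}(a_n-a)\,dB\bigr|\le\|a_n-a\|_\infty\,V_{HK}(B)\to 0$ gives $\int a_n\,dB\to\int a\,dB$. For the quadratic remainder, the domain constraint $V_{HK}(B+t_n\beta_n)\le M$ yields, by the triangle inequality for variation, $V_{HK}(t_n\beta_n)\le M+V_{HK}(B)$ uniformly in $n$, while $t_n\beta_n\to 0$ uniformly; approximating the uniformly bounded $a_n$ by step functions and using that box-increments of $t_n\beta_n$ vanish then forces $t_n\int a_n\,d\beta_n\to 0$. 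The leading term is where the structure is used: integrating by parts in the Hardy--Krause sense, $\int_{\mathcal{R}}A\,d\beta_n$ equals a sum of boundary/corner evaluations plus $\pm\int_{\mathcal{R}}\beta_n\,dA$; because $A$ is the (fixed, finite-variation) integrator here, $\int\beta_n\,dA\to\int\beta\,dA$ via $\|\beta_n-\beta\|_\infty V_{HK}(A)\to 0$, and the boundary terms converge by continuity of $A$ and pointwise convergence of $\beta_n$, delivering $\int A\,d\beta_n\to\int A\,d\beta$. Assembling the three limits gives exactly $\phi'_{A,B}(a,\beta)$, and the same bounds show this map is linear and continuous in $(a,\beta)$.

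The hard part will be the \emph{multivariate} integration by parts on the hyper-rectangle $\mathcal{R}=\prod_{i=1}^d[a_i,b_i]$. Unlike the one-dimensional Young formula, the boundary term decomposes into a sum over the lower-dimensional faces of $\mathcal{R}$, each pairing a mixed sectional variation of $A$ against a restriction of $\beta_n$, and one must verify both that the inequality $|\int f\,dg|\le\|f\|_\infty V_{HK}(g)$ survives in $d$ dimensions (this rests on the Vitali-variation-over-faces definition of $V_{HK}$) and that every face contribution converges. A secondary subtlety, which I would address up front, is confirming that the tangent set $\mathbb{D}_0$ is exactly the set on which $\int A\,d\beta$ is well defined: the qualifier $\int|dA|<\infty$ is what guarantees, via integration by parts, that the paired integral against a merely continuous $\beta$ makes sense, so that $\phi'_{A,B}$ is a bona fide continuous linear operator and genuine Hadamard (not merely Gateaux) differentiability holds.
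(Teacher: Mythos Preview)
The paper does not prove this lemma; it is stated as an auxiliary result cited from \cite{lunde2017bootstrapping}, Lemma A.3.3 (itself the multivariate extension of the classical Hadamard differentiability of the Wilcoxon functional, e.g., van der Vaart, Lemma~20.10), and is invoked as a black box in the proof of Theorem~\ref{thm:1}. So there is no ``paper's proof'' to compare against.

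That said, your sketch is the standard argument and is correct. Two minor points worth tightening. First, in the quadratic remainder you should make explicit that the step-function approximation is chosen once for the \emph{limit} $a$ and then transferred to $a_n$ via $\|a_n-a\|_\infty\to 0$; if you approximate each $a_n$ separately the $n$-dependence of the step function obstructs the limit. Second, the definition of $\int_{\mathcal{R}}A\,d\beta$ for a merely continuous $\beta$ (the tangent direction) is, as you note, \emph{by} integration by parts; you should state this explicitly as the definition of $\phi'_{A,B}$ rather than as a computational device, since otherwise the derivative map is not even well defined on the tangent set.
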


\begin{lem} \label{lem:meas} Consider two multivariate processes $\{X_t,Y_t\}$ and $\{f(X_t), Y_t\}$ for some measurable function $f$. Let $\beta_1(h)$ and $\beta_2(h)$ be respectively the beta-mixing coefficient of $\{X_t,Y_t\}$ and $\{f(X_t), Y_t\}$. Then 
$$
\beta_2(h) \le \beta_1(h) .
$$
\end{lem}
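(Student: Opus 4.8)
The plan is to reduce the claim to the elementary measure-theoretic fact that the $\beta$-mixing coefficient is monotone under coarsening of the underlying $\sigma$-algebras, and then to observe that passing from $\{X_t, Y_t\}$ to $\{f(X_t), Y_t\}$ with $f$ measurable only coarsens the relevant past and future $\sigma$-algebras. First I would recall the variational (partition) representation of the coefficient appearing in the definition: for two sub-$\sigma$-algebras $\mathcal{U}, \mathcal{V}$ of the ambient probability space,
$$
\|\mathcal{P}_{\mathcal{U}}\otimes\mathcal{P}_{\mathcal{V}} - \mathcal{P}_{\mathcal{U}}\mathcal{P}_{\mathcal{V}}\|_{TV} = \sup \frac{1}{2}\sum_{i=1}^{I}\sum_{j=1}^{J} \left| P(U_i \cap V_j) - P(U_i)P(V_j)\right|,
$$
where the supremum runs over all finite $\mathcal{U}$-measurable partitions $\{U_i\}_{i=1}^I$ and $\mathcal{V}$-measurable partitions $\{V_j\}_{j=1}^J$ of the sample space, with $\mathcal{P}_{\mathcal{U}}\otimes\mathcal{P}_{\mathcal{V}}$ the joint law and $\mathcal{P}_{\mathcal{U}}\mathcal{P}_{\mathcal{V}}$ the product of marginals as in the stated definition.

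Next I would fix $t$ and $h$ and introduce the past and future $\sigma$-algebras of the two processes:
$$
\mathcal{U}_1 = \sigma\left(\{(X_s, Y_s): s \le t\}\right), \qquad \mathcal{V}_1 = \sigma\left(\{(X_s, Y_s): s \ge t + h\}\right),
$$
and, writing $\tilde{X}_s = f(X_s)$,
$$
\mathcal{U}_2 = \sigma\left(\{(\tilde{X}_s, Y_s): s \le t\}\right), \qquad \mathcal{V}_2 = \sigma\left(\{(\tilde{X}_s, Y_s): s \ge t + h\}\right).
$$
Because $f$ is measurable, $\tilde{X}_s = f(X_s)$ is $\sigma(X_s)$-measurable for each $s$, so every generator of $\mathcal{U}_2$ lies in $\mathcal{U}_1$ and every generator of $\mathcal{V}_2$ lies in $\mathcal{V}_1$; hence $\mathcal{U}_2 \subseteq \mathcal{U}_1$ and $\mathcal{V}_2 \subseteq \mathcal{V}_1$.

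With the inclusions in hand the conclusion is immediate: any finite $\mathcal{U}_2$-measurable partition is in particular a $\mathcal{U}_1$-measurable partition, and likewise for the future. Therefore the supremum in the representation above that defines $\beta_2$ at a given $t$ is taken over a sub-collection of the pairs of partitions entering the supremum that defines $\beta_1$, so the former cannot exceed the latter; taking the supremum over $t$ on both sides preserves the inequality and yields $\beta_2(h) \le \beta_1(h)$. The only point requiring care — and the main, though entirely standard, obstacle — is the justification of the partition representation of the total-variation distance and its identification with the coefficient in the stated definition; once this representation is available, the argument is purely a matter of the $\sigma$-algebra inclusions induced by the measurable map $f$, and no probabilistic structure of the processes beyond measurability is used.
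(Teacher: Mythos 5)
Your proposal is correct and follows essentially the same route as the paper's own proof: both rest on the partition (variational) representation of the $\beta$-mixing coefficient together with the observation that measurability of $f$ gives $\sigma(f(X_s),Y_s)\subseteq\sigma(X_s,Y_s)$, so the supremum defining $\beta_2$ runs over a sub-collection of that defining $\beta_1$. If anything, your version is slightly more careful than the paper's, since you work with the full past and future $\sigma$-algebras $\sigma(\{(X_s,Y_s):s\le t\})$ and $\sigma(\{(X_s,Y_s):s\ge t+h\})$ as the definition of $\beta$-mixing requires, whereas the paper's proof writes the suprema only over the single-time $\sigma$-algebras $\sigma((X_t,Y_t))$ and $\sigma((X_{t+h},Y_{t+h}))$.
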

\begin{proof}[Proof of Lemma \ref{lem:meas}]
 For two random variables $(X,Y)$ and a measurable function $f$, $\sigma(f(X)) \subseteq \sigma(X)$, since the pullback $f\circ X(A)^{-1} = X^{-1}(f^{-1}(A)) \in \sigma(X)$ by measurability of $f$ for a given event $A$. Henceforth for any $t$, 
 $$
 \sigma((f(X_t), Y_t))  \subseteq \sigma((X_t, Y_t)).
 $$
This implies that 
$$
\begin{aligned} 
&\sup_{A \in \sigma((f(X_t), Y_t)), B \in \sigma((f(X_{t + h}), Y_{t + h}))} \sum_i \sum_j |P(A_i \cap B_j) - P(A_i)P(B_j)| \\ &\le  \sup_{A \in \sigma((X_t, Y_t)), B \in \sigma((X_{t + h}, Y_{t + h})} \sum_i \sum_j|P(A_i \cap B_j) - P(A_i)P(B_j)|.
\end{aligned} 
$$
where the supremum is over all paris of finite partitions $\{A_i\}, \{B_j\}$ 
such that $A_i \in \mathcal{A}$ and $B_j \in \mathcal{B}$ where $\mathcal{A}$ and $\mathcal{B}$ are the sigma algebras generated by the random variables of interest. Henceforth, exploting the definition of $\beta$-mixing given in \cite{bradley2005basic} we have
$$
\begin{aligned} 
&\beta_2(h) = \sup_t \sup_{A \in \sigma((f(X_t), Y_t)), B \in \sigma((f(X_{t + h}), Y_{t + h}))} \sum_i \sum_j |P(A_i \cap B_j) - P(A_i)P(B_j)| \\ &\le\sup_t\sup_{A \in \sigma((X_t, Y_t)), B \in \sigma((X_{t + h}, Y_{t + h})} \sum_i \sum_j|P(A_i \cap B_j) - P(A_i)P(B_j)| = \beta_1(h).
\end{aligned}  
$$
\end{proof}

\begin{lem}\citep[][Lemma 3.7]{lunde2017bootstrapping} \label{lem:lunde_shal2} Let $Q$ and $R$ be probability distributions on $(\Omega^j, \mathcal{A}^j)$ where $j$ is allowed to be $\infty$. Define the space of histories $(H_t, \mathcal{H}_t)$ , where $H_t = \Omega^t$ and $\mathcal{H}_t = \mathcal{A}^t$ and let $h_t \in \mathcal{H}_t$ be an history. Suppose $Q$, $R$ permit regular conditional probabilities, denoted $Q(\cdot | h_t)$, and $R(\cdot| h_t)$, respectively. Then, for $||\cdot||_{TV}$ denoting the total variation distance, 
$$
||Q - R||_{TV} < ab \Rightarrow Q(||Q(\cdot| h_t) - R(\cdot|h_t)||_{TV} > a) < b. 
$$  
\end{lem}

\subsection{Proof of Theorem \ref{thm:1}} \label{sec:final}

Recall that the theorem is derived under the null hypothesis $H_0$. 
Consider the case where $g(\cdot)$ are fixed functions. We then return to the case where these are estimated based on $F_-$ (the hold-out sample) at the end of the proof. 
We first prove that $\mathbb{H}_T^*$ and $\mathbb{H}_T$ which we define as the empirical and bootstrapped measures  as in \eqref{eqn:process},   converge to the same process $\mathbb{H}$. 
 We let $2T_0 = T$ for notational convenience, but the results directly hold as $T_0 \propto T$ after rescaling by an appropriate constant.\footnote{In that case, we would also rescale appropriately each component on $\mathbb{H}_T$ by its corresponding constant term.}  By Lemma \ref{lem:meas} beta-mixing conditions on $(Y_t^o,X_t)$ imply the same conditions on the beta-mixing coefficients of $(Y_t^o, g(X_t))$. Similarly, stationarity of $(Y_t^o, X_t)$ also implies stationarity of $(Y_t^o, g(X_t))$. \\ \\ 
To apply the functional delta method we first need to show that  $\mathbb{H}_T^*$ and $\mathbb{H}_T$ converge to the same process up to a multiplicative constant. By Lemma \ref{lem:lundeShal} $\mathbb{H}_T \rightarrow_d \mathbb{H}$. Therefore, by Lemma \ref{lem:delta_method}, we have that for an Hadamard differentiable map as defined in Lemma \ref{lem:delta_method}, 
$$
\sqrt{T_0} \Big(\phi([\mathcal{P}_{T_0}, \mathcal{P}_{T_1}]) - \phi([\mathcal{P}, \mathcal{P}])\Big) \rightarrow_d \phi'_{\mathcal{P}}(\mathbb{H}).  
$$
By Lemma \ref{lem:kosorok}, $\mathcal{P}_{T_0}^* - \mathcal{P}_{T}$ and $\mathcal{P}_{T}^* - \mathcal{P}_{T}$ (appropriately rescaled) converge marginally to a Gaussian Process with covariance matrix described in Lemma \ref{lem:kosorok}.\footnote{Conditions in Lemma \ref{lem:kosorok} are justified for the following reasons. As discussed in Section 9.1.1, \cite{kosorok2008introduction} the class $\mathcal{F}$ has bounded VC dimension. In addition, the class is also permissible since it satisfies the two requirements of permissibility: we can index the class by a set $T = \mathbb{R}^p$ that is a valid Polish space equipped with Borel sigma field.} Since $\mathcal{P}_{T_0}^*, \mathcal{P}_T^*$ are drawn independently conditional on $\mathcal{P}_T$, we can conclude that $\mathbb{H}_T^* \rightarrow_d \mathbb{H}$ conditionally on $\mathcal{P}_T$.   \\ \\
We now follow the proof of Theorem 3.9.11 in \cite{van1996weak} with some minor modifications given the presence of $(\mathcal{P}_{T_0}, \mathcal{P}_{T_1})$ to show consistency of the bootstrap for a genering Hadamard differentiable functional $\phi$ as defined in Lemma \ref{lem:delta_method}. Define $\mathbb{P}_T^* = [\mathcal{P}_{T_0}^*, \mathcal{P}_{T_1}^*]$. Note that $\mathbb{H}_T^* = \sqrt{T_0}(\mathbb{P}_T^* - [\mathcal{P}_T, \mathcal{P}_T])$. Let $\mathbb{P}_T = [\mathcal{P}_T, \mathcal{P}_T]$. Note that by Theorem 11.25 in \cite{kosorok2008introduction}, $\sqrt{2 T_0}(\mathcal{P}_T - \mathcal{P}) \rightarrow_d \mathbb{G}$ where $\mathbb{G}$ is as defined in Lemma \ref{lem:lundeShal}.  
Since $\mathbb{H}_T \rightarrow_d \mathbb{H}$, by Lemma \ref{lem:delta_method},
$$
\begin{aligned} 
& \sup_{h \in \mathrm{BL}_1(\mathbb{E})} \Big| \mathbb{E} h\Big(\sqrt{T_0}(\phi(\mathbb{P}_T^*) - \phi(\mathbb{P}_T) \Big) -\mathbb{E} h\Big(\sqrt{T_0}(\phi([\mathcal{P}_{T_0}, \mathcal{P}_{T_1}]) - \phi([\mathcal{P}, \mathcal{P}]) \Big) \Big| \\ 
& \le  \underbrace{\sup_{h \in \mathrm{BL}_1(\mathbb{E})} \Big| \mathbb{E} h\Big(\sqrt{T_0}(\phi(\mathbb{P}_T^*) - \phi(\mathbb{P}_T) \Big) - \mathbb{E} h (\phi'_{\mathcal{P}}(\mathbb{H}))\Big|}_{(i)} + o(1).  
\end{aligned} 
$$ 
We want to show that $(i)$ converges to zero.  

Following \cite{van1996weak} (Page 380), without loss of generality, we can assume that $\phi'_P: \mathbb{D} \mapsto \mathbb{E}$ is defined and continuous on the whole space. For every $h \in   \mathrm{BL}_1(\mathbb{E})$, the function $h \circ \phi'_P$ is contained in $\mathrm{BL}_{||\phi'_P||}(\mathbb{D})$. Therefore, since $\mathbb{H}_T^* \rightarrow_d \mathbb{H}$, we can write 
$$
\sup_{h \in \mathrm{BL}_1(\mathbb{E})} \Big| \mathbb{E} h(\phi_{\mathcal{P}}'(\mathbb{H}_T^*)) - \mathbb{E} h (\phi'_{\mathcal{P}}(\mathbb{H}))\Big| = o(1).  
$$
Next, 
$$
\small 
\begin{aligned} 
\sup_{h \in \mathrm{BL}_1(\mathbb{E})} \Big| \mathbb{E} h\Big(\sqrt{T_0}(\phi(\mathbb{P}_T^*) - \phi(\mathbb{P}_T) \Big) - \mathbb{E} h(\phi_P'(\mathbb{H}_T^*))\Big| &= 
\sup_{h \in \mathrm{BL}_1(\mathbb{E})} \Big| \mathbb{E} h\Big(\sqrt{T_0}(\phi(\mathbb{P}_T^*) - \phi(\mathbb{P}_T) \Big) - \mathbb{E} h(\phi_P'(\sqrt{T_0}(\mathbb{P}_T^* - \mathbb{P}_T)))\Big| \\ 
&\le \varepsilon  + 2 P\Big(\Big| \Big| \sqrt{T_0}(\phi(\mathbb{P}_T^*) - \phi(\mathbb{P}_T) \Big) -  \phi_P'(\sqrt{T_0}(\mathbb{P}_T^* - \mathbb{P}_T)))\Big| \Big| > \varepsilon\Big),   
\end{aligned} 
$$
since we took without loss of generality the supremum over a ball with radius one. 
Notice now that $\sqrt{T_0}(\mathbb{P}_T^* - [\mathcal{P}, \mathcal{P}])$ and $\sqrt{T_0}(\mathbb{P}_T - [\mathcal{P}, \mathcal{P}])$ they both converge (unconditionally) in distribution to a separable random element that concentrate on $\mathbb{D}_0$. The first case follows since $\mathbb{H}_T^*$ converges in distribution to $\mathbb{H}$ and $\sqrt{2 T_0}(\mathcal{P}_T - \mathcal{P})$ converges in distribution to $\mathbb{G}$. The second case follows directly from the fact that $\sqrt{2 T_0}(\mathcal{P}_T - \mathcal{P})$ converges in distribution to $\mathbb{G}$. 
Using the functional delta method, we can write 
$$
\small 
\begin{aligned} 
\sqrt{T_0}(\phi(\mathbb{P}_T^*) - \phi(\mathcal{P}, \mathcal{P})) = \phi_{\mathcal{P}}'(\sqrt{T_0}(\mathbb{P}_T^* - (\mathcal{P}, \mathcal{P}))) + o_p(1), \quad  
\sqrt{T_0}(\phi(\mathbb{P}_T) - \phi(\mathcal{P}, \mathcal{P})) = \phi_{\mathcal{P}}'(\sqrt{T_0}(\mathbb{P}_T - (\mathcal{P}, \mathcal{P}))) + o_p(1). 
\end{aligned} 
$$ 
Subctracting the two equations, we obtain the desired result. 

We now want to show Hadamard differentiability. 
 We now show Hadamard differentiability for $T(\cdot,\cdot)$ for $S_0(z,w) = |z(-1,w)|^2$, with $T(\cdot,\cdot)$ being the test statistic of interest. 
The proof invokes the chain rule for Hadamard differentiable maps \citep{van2000asymptotic}. We can see $T(\cdot,\cdot)$ as the composition of maps
$$
T: (A,B) \overset{(a)}{\rightarrow} (B, w(A)) \overset{(b)}{\rightarrow} (B, S_0(z, w(A))) \overset{(c)}{\rightarrow} \int S_0(z, w(A)) dB.
$$
The map $(c)$ is Hadamard differentiable by Lemma \ref{lem:vandervaar}. 
 We have to show that $S_0(z,w)$ is itself Hadamard differentiable in $w$ at $w(\mathcal{P})$ that we write as $\bar{w}$ for short. 
We start by proving Hadamard differentiability of 
$$S_0(z,w) = [z'(-1,w)]^2.$$ We omit the $S_0$ notation for sake of simplicity in the next few lines. To show Hadamard differentiability we  need to show that 
\begin{equation} \label{eqn:absHad}
\left \| \frac{S(.,\bar{w} + t_n h_n) - S(., \bar{w})}{t_n} - S'_{\bar{w}}(.,h) \right\|_{\infty} \rightarrow 0
\end{equation}
where $$S_{\bar{w}}'((z_1,z_2),h) = 2(z_1 - z_2\bar{w})h'z_2.$$ We can rewrite the LHS above as 
\begin{equation} \label{eqn:absHad2}
||2(z_1 - z_2'\bar{w})z_2'(h_n - h)||_{\infty} + |t_n|||(z_2'h_n)^2||_{\infty} .
\end{equation}
For the first term in \eqref{eqn:absHad2} by the assumption of compact support, there must exist $c<\infty$ such that 
$$\sup_{z_1,z_2}|2(z_1 - z_2'\bar{w})z_2'(h_n - h)| < c ||h_n - h||_1$$ and since $||h_n - h|| \rightarrow 0$ the term goes to zero. Equivalently for the second term, since $(z_2'h_n)^2 < \infty$ by the compactness assumption, and $t_n \rightarrow 0$ also the second term converges to zero. We are left to show that $A \mapsto w(A)$ is Hadamard differentiable, for $w$. This directly follows by assumption. \\ \\
Finally, consider the case where $g(\cdot)$ are estimated using information from $F_-$ (i.e., using the hold-out sample indexed by $t < 0$), where $T_-$ is kept fixed with $T$ (recall the definition of $\mathcal{P}$ at the end of Appendix \ref{sec:definitions}). To invoke the functional delta method in this case, we need to show that the asymptotic statement in Lemma \ref{lem:lundeShal} holds conditionally on the initial sample $(Z_{T_-}, \cdots, Z_0)$, letting $T \rightarrow \infty$. Using the definition of weak convergence, we need to show that 
$$
\sup_{h \in \mathrm{BL}_1} \Big|\mathbb{E}\Big[h\Big(\sqrt{T_0}[\mathcal{P}_{T_0} - \mathcal{P}, \mathcal{P}_{T_1} - \mathcal{P}]\Big)\Big|Z_{T_-}, \cdots, Z_0\Big] - \mathbb{E}[h(c \mathbb{H})]\Big| \rightarrow 0, 
$$   
up to a rescaling constant $c$.
To show the claim, we exploit the $\beta$-mixing conditions. First, we observe that we can write 
$$
\sqrt{T_0}[\mathcal{P}_{T_0} - \mathcal{P}, \mathcal{P}_{T_1} - \mathcal{P}] = \sqrt{T_0}\Big[\underbrace{\frac{1}{T_0} \sum_{t = 1}^m \delta_{Z_t}  - \frac{m}{T_0} \mathcal{P}}_{(A)} + \underbrace{\frac{1}{T_0} \sum_{T_0 \ge t > m} \delta_{Z_t}  - (1 - \frac{m}{T_0}) \mathcal{P}}_{:=\mathcal{P}_{m, T_0}}, \mathcal{P}_{T_1} - \mathcal{P}\Big] , 
$$ 
for any $m$. We take $m \rightarrow \infty$ and $m/\sqrt{T_0} \rightarrow 0$, which implies $(A) = o(1)$ almost surely. 
Since $(A) = o(1)$ we can write 
$$
\begin{aligned} 
& \sup_{h \in \mathrm{BL}_1} \Big|\mathbb{E}\Big[h\Big(\sqrt{T_0}[\mathcal{P}_{T_0} - \mathcal{P}, \mathcal{P}_{T_1} - \mathcal{P}]\Big)\Big|Z_{T_-}, \cdots, Z_0\Big] - \mathbb{E}[h(\mathbb{H})]\Big| \\ &= \sup_{h \in \mathrm{BL}_1} \Big|\mathbb{E}\Big[h\Big(\sqrt{T_0}[\mathcal{P}_{m, T_0}, \mathcal{P}_{T_1} - \mathcal{P}]\Big)\Big|Z_{T_-}, \cdots, Z_0\Big] - \mathbb{E}[h(\mathbb{H})]\Big| + \varepsilon + 2P\Big(||\sqrt{T_0}(A)|| > \varepsilon| Z_{T_-}, \cdots, Z_0\Big) \\ 
& = \sup_{h \in \mathrm{BL}_1} \Big|\mathbb{E}\Big[h\Big(\sqrt{T_0}[\mathcal{P}_{m, T_0}, \mathcal{P}_{T_1} - \mathcal{P}]\Big)\Big|Z_{T_-}, \cdots, Z_0\Big] - \mathbb{E}[h(\mathbb{H})]\Big| + o(1),  
\end{aligned} 
$$
for a suitable choice of $\varepsilon = o(1)$. 
Also, note that $\sqrt{T_0}[\mathcal{P}_{m, T_0}, \mathcal{P}_{T_1} - \mathcal{P}]$ satisfies the mixing conditions in Lemma \ref{lem:lundeShal}. 

Define $\mu_m$ a sequence of probability distributions of $(Z_{T_-}, \cdots, Z_0, Z_{m}, \cdots)$, and $\mu_{\infty}$ the product measure of $(Z_{T_-}, \cdots, Z_0)$  and $(Z_m, \cdots)$. Then under the assumed mixed conditions, by Lemma \ref{lem:lunde_shal2}
$$
\sum_{m=1}^{\infty}P(||\mu_m(\cdot|h_0) - \mu_{\infty}(\cdot|h_0)||_{TV} > a_m) < \infty,  
$$ 
for $a_m \rightarrow 0$, which implies $||\mu_m(\cdot|h_0) - \mu_{\infty}(\cdot|h_0)||_{TV} \rightarrow \infty$ by the Borel-Cantelli Lemma. This implies that for any bounded function $f$
$$
\int f d\mu_m(\cdot|h_0) \rightarrow 
\int f d\mu_{\infty}(\cdot|h_0). 
$$  
Hence, we can write 
$$
\sup_{h \in \mathrm{BL}_1} \Big|\mathbb{E}\Big[h\Big(\sqrt{T_0}[\mathcal{P}_{m, T_0}, \mathcal{P}_{T_1} - \mathcal{P}]\Big)\Big|Z_{T_-}, \cdots, Z_0\Big] - \mathbb{E}[h(\mathbb{H})]\Big| = \sup_{h \in \mathrm{BL}_1} \Big|\int h\Big(\sqrt{T_0}[\mathcal{P}_{m, T_0}, \mathcal{P}_{T_1} - \mathcal{P}]\Big) d\mu_{\infty} - \mathbb{E}[h(\mathbb{H})]\Big| + o(1).
$$ 
The same reasoning applies to the bootstrapped process $\mathbb{H}_T^*$. Since $\mu_{\infty}$ is the product measure of $(Z_{T_-}, \cdots, Z_0)$  and $(Z_m, \cdots)$ the rest of the proof follows directly from the unconditional case discussed at the beginning of the proof.

\subsection{Proof of Theorem \ref{thm:consistency}}

Note that under Assumption \ref{ass:ident} we can fix $T_0$ as non-random (condition on $T_0$) without affecting the distribution of observables and unobservables. 
 To prove the claim we can write the estimator as follows 
$$
\hat{\tau} = \underbrace{\frac{1}{T - T_0} \sum_{t > T_0} Y_t  - \frac{1}{T_0/2} \sum_{t = T_0/2 + 1}^{T_0} Y_t   }_{(A)} +  \underbrace{\sum_{t = T_0/2 + 1}^{T_0} w(F_{0, -1/2}) g(X_t) - \sum_{t > T_0}^{T_0} w(F_{0}) g(X_t)  }_{(B)}. 
$$ 
Consider first $(A)$.  By beta-mixing (and hence alpha-mixing), by the Ergodic Theorem\footnote{See Theorem 3.34  in \cite{white2014asymptotic}.}
$$
(A) - \tau \rightarrow_p 0. 
$$ 

%Here $\mathbb{E}[F_0] = \mathbb{E}[F_{0,-1/2}]$ by stationarity of $(Y_{0t}^0, X_t)$. As a result, by Theorem 3.1 in \cite{yu1994rates} $F_0, F_{0,-1/2}$ converges uniformly to the distribution of $(Y_{0t}^0, X_t)$, that we denote as $F$ for simplicity. Similarly, by stationarity of $(Y_{0t}^0, Y_{0t}^1, X_t)$, we also have that $X_t$ is marginally stationary. As a result $\mathbb{E}[\tilde{F}_{1/2}] =  \mathbb{E}[\tilde{F}_1]$. 
We are left to show that $(B)$ converges to zero. 
We define $\tilde{F}_1$ the empirical distribution of $X_t$ for $t > T_0$ and $\tilde{F}_{0, 1/2}$ the empirical distribution of $X_t$  for $T_0 \ge t > T_0/2 + 1$. Note that by Assumption \ref{ass:ident} and stationarity (Assumption \ref{ass:stationarity}), we obtain 
$
\mathbb{E}[\tilde{F}_1] = \mathbb{E}[\tilde{F}_{0, 1/2}] = \mathcal{P}_X
$
with $\mathcal{P}_X$ denoting the marginal distribution of $X_t$. Here, $\mathcal{P}_X$ is time-invariant by Assumption \ref{ass:stationarity} and the fact that we subctract the fixed effects components. We also define $\mathcal{P} = \mathbb{E}[F_0]$ denoting the distribution of $(X_t, Y_t)$ (time invariant by Assumption \ref{ass:stationarity}). Finally, recall the definition of $F_-$ denoting the empirical distribution over the sample $t < 0$.

We now write, in compact form, with a change of notation, and making explicit the dependence of $g(X_t, F_-)$ with $F_-$,
$$
(B)  = \int w(F_{0, -1/2}) g(X_t; F_-) d\tilde{F}_{0, 1/2} - \int w(F_{0}) g(X_t; F_-) d\tilde{F}_{1}. 
$$ 
We now discuss each component after \textit{conditioning} on $F_-$. In particular, we note that we can interpret each component as map 
$$
T(A, B; F_-) = \int w(A) g(x, F_-) dB,  
$$ 
which is Hadamard differentiable in $A$ and $B$ by Lemma \ref{lem:vandervaar} (but not necessarily in $F_-$).  Following the argument of the proof of Theorem \ref{thm:1} verbatim and invoking Lemma \ref{lem:lundeShal} and beta-mixing as in the proof of Theorem \ref{thm:1} (see Appendix \ref{sec:final}), we obtain 
$$
\begin{aligned} 
& \sqrt{T_0} (T(F_{0, -1/2}, \tilde{F}_{0,1/2}; F_-) - T(\mathcal{P}, \mathcal{P}_X; F_-)) \rightarrow_d T_{\mathcal{P}, \mathcal{P}_X}'(\mathbb{Y}; F_-), \\ &
\sqrt{T_0} (T(F_{0}, \tilde{F}_{1}; F_-) - T(\mathcal{P}, \mathcal{P}_X; F_-)) \rightarrow_d T_{\mathcal{P}, \mathcal{P}_X}'(\tilde{\mathbb{Y}}; F_-) 
\end{aligned} 
$$ 
for two tight processes $(\mathbb{Y}, \tilde{\mathbb{Y}})$. This implies that $T(F_{0, -1/2}, \tilde{F}_{0,1/2}; \mathcal{P}), T(F_0, \tilde{F}_1)$ converges in probability to the same asymptotic limit. This implies that 
$$
\int w(F_{0, -1/2}) g(X_t; F_-) d\tilde{F}_{0, 1/2} - \int w(F_{0}) g(X_t; F_-) d\tilde{F}_{1} = o_P(1), 
$$  
completing the proof. 
%We can write, since the weights are uniformly bounded 
%$$
%\begin{aligned} 
%& \Big|\Big[\int w(F_{0, -1/2}) g(X_t; \mathcal{P}) d\tilde{F}_{0, 1/2} - \int %w(F_{0}) g(X_t; \mathcal{P}) d\tilde{F}_{1}\Big] - \Big[\int w(F_{0, -1/2}) g(X_t; F_-) d\tilde{F}_{0, 1/2} - \int w(F_{0}) g(X_t; F_-) d\tilde{F}_{1}\Big] \Big| \\ 
%& \le 2 M \Big|g(X_t; \mathcal{P}) - g(X_t; F_-)\Big| 
%\end{aligned} 
%$$ 
%for a finite constant $M < \infty$. Observe that Theorem 11.25 in \cite{kosorok2008introduction} implies that under stationarity and the beta-mixing conditions imposed, $\sup_t |F_-(t) - \mathcal{P}(t)| = o_p(1)$, i.e., the empirical distribution converges uniformly to its population counterpart. By the continuous mapping theorem here applied to functionals (Theorem 7.7 in \cite{kosorok2008introduction}), we obtain that $g(X_t; \mathcal{P}) - g(X_t; F_-) = o_p(1)$ completing the proof. 

\subsection{Additional Results} \label{sec:hadamrd}

It is interesting to study Hadamard differentiability of the weights.  Hadamard differentiability for Least Squares has been shown in other papers, such as in  \cite{lunde2017bootstrapping}. Hence, we only need to show Hadamard differentiability for exponential weights.

\begin{thm} \label{thm:differentiability} Consider exponential weights as in Equation \eqref{eqn:expweights1} with $\eta \propto 1/T_0$, and $(Y_t, g(X_t))$ uniformly bounded. Then such weights are Hadamard differentiable.  
\end{thm} 
\begin{proof}[Proof of Theorem \ref{thm:differentiability}]
Since we consider a finite dimensional parameter space, we can prove Hadamard differentiability by showing Hadamard differentiability for each coordinate. We will assume that 
 $$\eta_t = \frac{\eta}{t}.$$ We let $l: \mathbb{R}^2 \rightarrow \mathbb{R}_+$ the loss function and we assume the loss can be at most $C< \infty$ and $\eta > 0$(which follows for quadratic loss functions and bounded random variables). The aim is to show that 
 \begin{equation}
 \hat{w}_0^{(i)}(F) = \frac{\exp(-\eta \int l(z_1, z_i) dF)}{\sum_{j=2}^{p+1} \exp(-\eta \int l(z_1, z_j) dF)}
 \end{equation}
is Hadamard differentiable with $Z = (z_1,\dots,z_{p+1}) \in [-M,M]^{p+1}$. We will use the chain rule for the Hadamard derivative. We can see $w^{(i)}(F)$ is the composition of mappings: 
$$
\begin{aligned}
A & \overset{(a)}{\mapsto} \left [ \begin{tabular}{c}\text{$\int l(z_1, z_2) dA$} \\
\dots \\ 
\text{$\int l(z_1, z_{p+1}) dA$} 
\end{tabular} \right ] 
\\
&\overset{(b)}{\mapsto} \left [ \begin{tabular}{c}\text{$\exp(-\eta \int l(z_1, z_2) dA)$} \\
\dots \\ 
\text{$\exp(-\eta \int l(z_1, z_{p+1}) dA)$}
\end{tabular} \right ] \\ &\overset{(c)}{\mapsto} \left [ \begin{tabular}{c}\text{$\exp(-\eta \int l(z_1, z_i) dA)$} \\
 \text{$\sum_{j=1}^p\exp(-\eta \int l(z_1, z_{j+1}) dA)$}
\end{tabular} \right ] 
\\
& \overset{(d)}{\mapsto} \frac{\exp(-\eta \int l(z_1, z_i) dA)}{\sum_{j=1}^{p} \exp(-\eta \int l(z_1, z_{j+1}) dA)} \end{aligned}.
$$
We will prove that each map is Hadamard differentiable under the conditions stated component wise. We start by proving that $(a)$ is Hadamard differentiable component wise. For $||h_n - h||_{\infty} \rightarrow 0$, $t_n \rightarrow 0$
\begin{align*}
\frac{\int l(z_1, z_j) d(F+t_n h_n) - \int l(z_1, z_j) dF}{t_n} &= \frac{t_n \int l(z_1, z_j)d h_n}{t_n} + \frac{\int l(z_1, z_j)d(F-F)}{t_n} \\
&\rightarrow \int l(z_1, z_j)dh.
\end{align*}
%By convexity of $\int l(z_1 - z_j) dh_n$ pointwise convergence implies uniform convergence on $[-M,M]^{p+1}$(Theorem 3.1.4 \cite{hiriart2012fundamentals}). 
Since $l(z_1, z_j)$ is bounded, uniform convergence follows. 
Hence 
$$
\left \| \frac{\int l(z_1, z_j) d(F+t_n h_n) - \int l(z_1, z_j) dF}{t_n} - \int l(z_1,z_j)dh \right\|_{\infty} \rightarrow 0.
$$
We now move to $(b)$. Let $x$ be the argument of the map. Recall that the argument is positive. Using the mean value theorem, for $\bar{h}_n \in [h_n,h]$,
\begin{equation}
\begin{aligned}
\frac{\exp(-\eta (x + t_n h_n)) - \exp(-\eta (x))}{t_n} &= \frac{\exp(-\eta x)[\exp(-\eta t_n h_n) - 1]}{t_n} 
\\
&= \frac{\exp(-\eta x)[-\eta t_n \exp(-\eta t_n \bar{h}_n)h_n]}{t_n}  \\ &= -\eta \exp(-\eta x -\eta t_n \bar{h}_n) h_n \rightarrow -\eta \exp(-\eta x) h .
\end{aligned}
\end{equation}
Hence we have 
\begin{align*}
&\left\| \frac{\exp(-\eta (x + t_n h_n)) - \exp(-\eta (x))}{t_n} + \eta \exp(-\eta x) h \right\|_{\infty} 
\\
&\qquad \qquad = \sup_{x \in \mathbb{R}_+} | \exp(-\eta x)[\frac{\exp(-\eta t_n h_n) - 1}{t_n} + \eta h]| \\ &\qquad \qquad  \le |\frac{\exp(-\eta t_n h_n) - 1}{t_n} + \eta h| = | -\eta \exp(-\eta t_n \bar{h}_n)h_n + \eta h| \rightarrow 0
\end{align*}
since $\exp(-\eta x) \le 1$ for $x \ge 0$. Since $\eta h \exp(-\eta x)$ is linear in $h$, $(b)$ is Hadamard differentiable. The map $(c)$ is Hadamard differentiable by linearity of the Hadamard derivative. We are left to prove $(d)$. Let 
$$t_n \rightarrow 0, \quad \left \| h_n -  \biggl( \begin{tabular}{c}\text{$h_1$}    \\
 \text{$h_2$}
\end{tabular} \biggl ) \right\|_{\infty} \rightarrow 0.
$$

Take $N$ such that $p e^{-\eta C} > |t_N h_{N,2}|$. Such $N$ exists since $t_n \rightarrow 0$ and $h_{n,2} \rightarrow h_2 < \infty$. For $n > N$ we have
\begin{equation}
\begin{aligned}
&\left\|  \frac{\frac{x + t_n h_{n,1}}{y + t_n h_{n,2}} - \frac{x}{y}}{t_n} - \frac{h_1}{y} + h_2 \frac{x}{y^2} \right \|_{\infty} \\ &= \sup_{0 \le x \le 1,  y \ge r e^{-\eta C}} 
\left\| 
\frac{(h_{n,1} - h_1) - (h_{n,2} - h_2)x/y - h_1 t_n h_{n,2}/y + h_2 x h_{n,2} t_n/y^2}{y + t_n h_{n,2}}
\right \| _\infty
 \\ &\le \sup_{0 \le x \le 1,  y \ge r e^{-\eta C}} \frac{|h_{n,1} - h_1|}{|y + t_n h_{n,2}|} + \frac{|h_{n,2} - h_2||\frac{x}{y}|}{|y + t_n h_{n,2}|} + |t_n| \frac{|h_1 h_{n,2}|/|y|}{|y + t_n h_{n,2}|} + \frac{|h_2 h_{n,2}x/y^2|}{|y +  t_n h_{n,2}|}|t_n| \\ &\le  \frac{|h_{n,1} - h_1|}{ p e^{-\eta C} - |t_n h_{n,2}|} + \frac{|h_{n,2} - h_2|}{pe^{-\eta C}(p e^{-\eta C} - |t_n h_{n,2}|)} + |t_n| \frac{|h_1 h_{n,2}|}{p e^{-\eta C}(p e^{-\eta C} - |t_n h_{n,2}|)}\\ & + \frac{|h_2 h_{n,2}|}{p e^{-2\eta C}(p e^{-\eta C} - |t_n h_{n,2}|)}|t_n| \\ & \rightarrow 0.
\end{aligned}
\end{equation}
since each term is going to zero and the denominator of each term is bounded away from zero.  Hence the Hadamard derivative is $-\frac{h_1}{y} + h_2 \frac{x}{y^2}$, which is linear in $h$. 
\end{proof} 

We discuss the implication of our results for the asymptotic distribution of the test statistic in the following theorem. 

\begin{thm}  Under the conditions in Theorem \ref{thm:1} the test statistics $\mathcal{T}(A, B)$, are Hadamard differentiable tangentially at $(\mathcal{P}_0, \mathcal{P}_0)$ with derivative  $\mathcal{T}_{\mathcal{S}, \mathcal{P}_0, \mathcal{P}_0}'$. Therefore, 
$$
\mathcal{T}_{\mathcal{S}}(\mathcal{P}_{T_0}, \mathcal{P}_T) -  \mathcal{T}_{\mathcal{S}}(\mathcal{P}_0, \mathcal{P}_0)  \rightarrow_d \mathcal{T}_{\mathcal{S}, \mathcal{P}_0, \mathcal{P}_0}'(\mathbb{H})
$$
where $\mathcal{P}_0, \mathbb{H}$ are as defined respectively as the true CDF of $(Y_t^0, X_t)$ and as defined in Lemma \ref{lem:lundeShal}. 
\end{thm} 
 
The proof follows directly from the differentiability properties discussed in the derivation of Theorem \ref{thm:1}, and by the Functional Delta Method. 
Observe that the asymptotic distribution depends on the Hadamard derivative of the test statistics which ultimately depends on the weighting mechanism. Its expression is obtained through the chain rule in the proof of Theorem \ref{thm:1} .

\section{Proofs of the Results in Section  \ref{sec:weights}} \label{app:5}

 With an abuse of notation we define the prediction of the potential outcome at time $t$ as 
 $$\hat{Y}_t^0(\mathcal{F}_{t-1}):=\hat{m}_t(X_t, w^{t-1})$$
   where $w^{t-1}$ are exponential weights as in \eqref{eqn:expweights1} computed using information available only until time $t-1$. 

\begin{lem} \label{lem:cesabianchi} Consider the Synthetic Learner with the exponential weighting scheme with $\hat{g}_i: \mathcal{X} \rightarrow [-\frac{M}{2},\frac{M}{2}]$. Then  

\begin{equation} 
\begin{aligned} 
&\frac{1}{T_0} \sum_{t=1}^{T_0}(Y_t - \hat{m}_t(X_t, w^{t-1}))^2 - \min_{i \in \{1, \dots, p\}} \frac{1}{T_0}  \sum_{t=1}^{T_0}(Y_t - \hat{g}_{i}(X_t))^2 \le  \log(p)/\eta + \sum_{t=1}^{T_0} C_t^2 \eta/ 8
\end{aligned} 
\end{equation} 
%\begin{equation} 
%\begin{aligned} 
%&\frac{1}{T_0} \sum_{t=1}^{T_0}(Y_t - \hat{m}_t(X_t, w^{t-1}))^2 - \text{min}_{i \in \{1, \dots, p\}} \frac{1}{T_0}  \sum_{t=1}^{T_0}(Y_t - \hat{g}_{i}(X_t))^2 \le C \sqrt{\frac{\log(p)}{2 T_0}}
%\end{aligned} 
%\end{equation} 
where $C_t = Y_t^2 + M^2$. 
\end{lem}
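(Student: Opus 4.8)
The plan is to recognize this as the standard regret bound for the exponentially weighted average forecaster under a convex loss (in the spirit of Theorem 2.2 of \cite{cesa2006prediction}), specialized to the quadratic loss, and to run the argument for \emph{individual sequences}, so that nothing beyond boundedness is used (no mixing or stationarity). Write $\ell_{i,t} = (Y_t - \hat g_i(X_t))^2$ for the loss of learner $i$ at time $t$, $L_{i,t}=\sum_{s\le t}\ell_{i,s}$ for its cumulative loss, and note that the forecaster's prediction is the convex combination $\hat m_0(X_t,w^{t-1})=\sum_{i=1}^p w_i^{t-1}\hat g_i(X_t)$ with $w_i^{t-1}\propto\exp\{-\eta L_{i,t-1}\}$. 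First I would strip the mixture off the left-hand side using convexity: since $x\mapsto (Y_t-x)^2$ is convex, Jensen's inequality gives $(Y_t-\hat m_0(X_t,w^{t-1}))^2\le\sum_i w_i^{t-1}\ell_{i,t}$, which reduces the claim to bounding the regret of the Hedge weights against the individual expert losses.

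The core is a potential (telescoping) computation. Define $W_t=\sum_i\exp\{-\eta L_{i,t}\}$ and observe that $W_t/W_{t-1}=\sum_i w_i^{t-1}\exp\{-\eta\ell_{i,t}\}$. Applying Hoeffding's lemma to the $w^{t-1}$-weighted distribution over the values $\{\ell_{i,t}\}_i$ gives $\log(W_t/W_{t-1})\le -\eta\sum_i w_i^{t-1}\ell_{i,t}+\eta^2 R_t^2/8$, where $R_t$ is the range of $\{\ell_{i,t}\}_i$. Combining with the Jensen step, $\log(W_t/W_{t-1})\le -\eta(Y_t-\hat m_0(X_t,w^{t-1}))^2+\eta^2 R_t^2/8$. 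Summing over $t=1,\dots,T_0$ telescopes the left side to $\log(W_{T_0}/W_0)$; since $W_0=p$ and $W_{T_0}\ge\exp\{-\eta\min_i L_{i,T_0}\}$, rearranging yields $\sum_t(Y_t-\hat m_0(X_t,w^{t-1}))^2-\min_i L_{i,T_0}\le \frac{\log p}{\eta}+\frac{\eta}{8}\sum_t R_t^2$.

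The one genuinely content-bearing step is controlling the range $R_t$ so that the stated constant $C$ emerges. Here I would use that the predictions lie in $[-M/2,M/2]$ and write, for any two experts, $|\ell_{i,t}-\ell_{j,t}|=|\hat g_i-\hat g_j|\,|2Y_t-\hat g_i-\hat g_j|\le M\,(2\max_t|Y_t|+M)\le 2M(\max_t|Y_t|+M)=C$, so $R_t\le C$ for every $t$. Substituting $R_t\le C$ gives the bound $\frac{\log p}{\eta}+\frac{\eta C^2 T_0}{8}$, and choosing $\eta$ of order $C^{-1}\sqrt{\log p/T_0}$ to balance the two terms produces $C\sqrt{T_0\log p/2}$; dividing by $T_0$ gives exactly $C\sqrt{\log p/(2T_0)}$. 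I expect no serious conceptual obstacle: the only delicate points are the range computation above and the bookkeeping of constants through Hoeffding's lemma and the final tuning of $\eta$, which must be matched to the effective loss scale $C$ rather than to the raw prediction range $M$.
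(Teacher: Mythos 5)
Your proof is correct and follows essentially the same route as the paper's: the same potential function $W_t=\sum_i\exp\{-\eta L_{i,t}\}$, the same telescoping lower/upper bounds on $\log(W_{T_0}/W_0)$ with Hoeffding's lemma applied to the weight distribution over experts, and the same Jensen step exploiting convexity of the quadratic loss (the paper applies Jensen inside the per-round bound rather than up front, a purely cosmetic difference). Your closing remark that $\eta$ must be tuned to the effective loss range $C$ rather than to $M$ is also faithful to the paper, whose proof indeed sets $\eta = \sqrt{8\log(p)/(C^2 T_0)}$ even though the lemma statement quotes $\eta$ in terms of $M$.
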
 
\begin{proof} [Proof of Lemma \ref{lem:cesabianchi} ]
The proof follows similarly to \cite{cesa1999prediction} with appropriate modifications for the context under consideration. Let 
$$w_{i,t} = \sum_{i =1}^p \exp(-\eta \sum_{s=1}^t l(Y_s, \hat{g}_i(X_s))), \quad W_t = \sum_{i =1}^p w_{i,t}.$$ We denote $\hat{g}_i(X_t) :=\hat{g}_{i,t}$ for expositional convenience. We recall that the weights in the first period are uniform across the learners. 

We start by deriving a lower bound:
\begin{align*}
\log(W_{T_0}/w) =\log (W_{T_0}) -\log (p) &=\log \Big(\sum_{i=1}^p \exp(- \eta \sum_{t=1}^{T_0} (Y_t - \hat{g}_{i,t})^2)\Big) -\log (p) \\ 
&\ge\log (\max_{i \in \{1, \dots, p\}} \exp(- \eta \sum_{t=1}^{T_0} (Y_t - \hat{g}_{i,t})^2) -\log (p) \\ &= -\eta \text{min}_{i \in \{1, \dots, p\}} \sum_{t=1}^T (Y_t - \hat{g}_{i,t})^2 -\log (p).
\end{align*}
Next, we derive an upper bound on the same quantity of interest. 
\begin{align*}
\log(W_{T_0}/w) &=\log (\prod_{t=1}^{T_0} W_t/W_{t-1}) = \sum_{t=1}^{T_0}\log \Big(\sum_{i=1}^p \frac{w_{i,t-1}}{W_{t-1}} \exp(-\eta (Y_t - \hat{g}_{i,t})^2)\Big) \\ &= \sum_{t=1}^{T_0}\log \Big( \mathbb{E}_{\hat{\mathbf{g}} \sim Q_t}[ \exp(-\eta (Y_t - \hat{g}_{i,t})^2)]\Big) 
\end{align*}
where $\mathbb{E}_{\hat{\mathbf{g}} \sim Q_t}$ denotes the expectation conditional on the data taken with respect to a distribution $Q_t$ on base-learners which assigns a probability proportional to  $\exp(-\eta \sum_{s=1}^{t-1} (Y_t - \hat{g}_{i,s})^2)$ to each base algorithm. Recalling Hoeffding bound on the moment generating function of a bounded random variable we observe that 
\begin{align*} 
 \log ( \mathbb{E}_{\hat{\mathbf{g}} \sim Q_t}[ \exp(-\eta (Y_t - \hat{g}_{i,t})^2)]) 
&
 \le -\eta \mathbb{E}_{\hat{\mathbf{g}} \sim Q_t}[(Y_t - \hat{g}_{i,t})^2] + \frac{\eta^2 C_t^2}{8}
 \\
  \qquad  &
  \le
   -\eta (Y_t - \mathbb{E}_{\hat{\mathbf{g}} \sim Q_t}[\hat{g}_{i,t}])^2 + \frac{\eta^2 C_t^2}{8}
   \\
    \qquad  & = -\eta (Y_t - \sum_{i=1}^p \frac{w_{i,t}}{W_t} \hat{g}_{i,t})^2 + \frac{\eta^2 C_t^2}{8}
   \\
    \qquad   &
  = 
    -\eta (Y_t - \hat{m}_t(X_t))^2 + \frac{\eta^2 C_t^2}{8} 
\end{align*}
where  $C_t = Y_t^2 + M^2$. Hence we have 
\begin{align} 
&-\eta \text{min}_{i \in \{1, \dots, p\}} \sum_{t=1}^T (Y_t - \hat{m}_t(X_t, w^{t-1}))^2 -\log (p) \\
&
\le\log (W_{T_0}/w) \le \sum_{t=1}^{T_0} -\eta (Y_t - \sum_{i=1}^p \frac{w_{i,t}}{W_t} \hat{g}_{i,t})^2 + \sum_{t=1}^{T_0} \frac{\eta^2 C_t^2}{8} .
\end{align} 
Rearranging terms we get 
\begin{equation} 
\begin{aligned} 
&\sum_{t=1}^{T_0}(Y_t - \hat{m}_t(X_t, w^{t-1}))^2 - \min_{i \in \{1, \dots, p\}}  \sum_{t=1}^{T_0}(Y_t - \hat{g}_{i,t})^2  \le \frac{\log(p)}{\eta} + \sum_{t=1}^{T_0} \frac{ C_t^2 \eta}{8} . 
\end{aligned} 
\end{equation} 
%where $\eta = \sqrt{\frac{8\log (p)}{C^2 T_0}}$. 
% %\\ \Rightarrow 
%&\frac{1}{T_0}\sum_{t=1}^{T_0}(Y_t - \hat{m}_t(X_t, w^{t-1}))^2 - %\text{min}_{i \in \{1, \dots, p\}} \frac{1}{T_0} \sum_{t=1}^{T_0}(Y_t - %\hat{g}_{i,t})^2 \le C \sqrt{\frac{\log(p)}{2 T_0}}
%\end{aligned} 
\end{proof} 

\begin{cor}[Theorem \ref{thm:regret0}] Suppose that $Y_t \le M < \infty$ almost surely. Consider $\eta =  \sqrt{\frac{8\log (p)}{M^2 T_0}}$ and the above conditions hold. Then we obtain that 
\begin{equation} 
\frac{1}{T_0}\sum_{t=1}^{T_0}(Y_t - \hat{m}_t(X_t, w^{t-1}))^2 - \min_{i \in \{1, \dots, p\}} \frac{1}{T_0} \sum_{t=1}^{T_0}(Y_t - \hat{g}_{i,t})^2 \le C \sqrt{\frac{\log(p)}{ T_0}}. 
\end{equation} 
for a constant $C < \infty$. Therefore, Theorem \ref{thm:regret0} holds. 
\end{cor}

\subsection{Proof of Theorem \ref{thm:regret1}}

We now discuss the main proof of the theorem. 

We let $w^{t-1}$ denote the exponential weights computed using information only from time $1$ up to time $t-1$ for estimating $\hat Y_t^0$.

%We first derive our results conditionally on $T_0$, and showcase that under the sequential ignorability conditions, the resulting bound holds as a function of $T_0$ and $T_m$. We then discuss the final bound using the condition of $T_0 = \lambda T$, $\lambda \in (\gamma, 1-\gamma)$.
 We start by decomposing the average loss as follows. 
	\begin{equation} \label{eqn:1A}
	\small 
	\begin{aligned} 
	 \frac{1}{T_0}\sum_{t=1}^{T_0} (Y_t - \hat{m}_t(X_t, w^{t-1}))^2 & = \frac{1}{T_0}\sum_{t=1}^{T_0} (\hat{m}_t(X_t,w^{t-1}) - \mu_t(X_t))^2 
	+ \varepsilon_{0t}^2 + 2(\hat{m}_t(X_t,w^{t-1}) - \mu_t(X_t))\varepsilon_{0t}
	\\ \frac{1}{T_0}\sum_{t=1}^{T_0} (Y_t - \hat{g}_i(X_t))^2  & = \frac{1}{T_0}\sum_{t=1}^{T_0}  (\hat{g}_i(X_t) - \mu_t(X_t))^2 + \varepsilon_{0t}^2 + 2(\hat{g}_i(X_t) - \mu_t(X_t))\varepsilon_{0t}
	\end{aligned} .
	\end{equation} 
Let $\hat{g}_{i,t} = \hat{g}_i(X_t)$. Note that 
\begin{equation} \label{eqn:2A}
\small 
\begin{aligned} 
&\frac{1}{T_0} \Big\{ \sum_{t=1}^{T_0} (Y_t - \hat{m}_t(X_t,w^{t-1}))^2 - \text{min}_{i \in \{1,\dots,p\}} \sum_{t=1}^{T_0} (\hat{g}_{i,t} - Y_t)^2 \Big\}   
\\ & = \frac{1}{T_0}\sum_{t=1}^{T_0} (\mu_t(X_t)- \hat{m}_t(X_t, w^{t-1}))^2 + 2 \text{min}_{i \in \{1,\dots,p\}} \frac{1}{T_0}\sum_{t=1}^{T_0} (\hat{m}_t(X_t,w^{t-1}) 
 - \hat{g}_i(X_t))\varepsilon_{0t} -  \frac{1}{T_0}\sum_{t=1}^{T_0}(\hat{g}_{i,t} - \mu_t(X_t))^2.
\end{aligned} 
\end{equation}
Next, we provide a bound on the cumulative one step ahead prediction error.

Using Lemma \ref{lem:cesabianchi} and the decomposition in \eqref{eqn:2A}, for a finite constant $C$ independent of $p, T_0$, we write
\begin{equation} \label{eqn:last2}
\begin{aligned} 
&\frac{1}{T_0} \{ \sum_{t=1}^{T_0} (\mu_t(X_t) - \hat{m}_t(X_t, w^{t-1}))^2 
\\
&\le \text{min}_{i \in \{1, ..., p\}}  \frac{1}{T_0}  \sum_{t=1}^{T_0} \{(\hat{g}_{i,t} - \hat{m}_t(X_t,w^{t-1}))\varepsilon_{0t}  + (\hat{g}_{i,t} - \mu_t(X_t))^2\}  + C\sqrt{\frac{\log(p)}{T_0}} 
\\
&\le \underbrace{\max_{j \in \{1, ..., p\}}  \frac{1}{T_0}  \sum_{t=1}^{T_0} (\hat{g}_{j,t} - \hat{m}_t(X_t,w^{t-1}))\varepsilon_{0t}}_{(I)}  + \underbrace{\text{min}_{i \in \{1, ..., p\}}  \frac{1}{T_0}  \sum_{t=1}^{T_0} (\hat{g}_{i,t} - \mu_t(X_t))^2  + C\sqrt{\frac{\log(p)}{T_0}}}_{(II)} .
\end{aligned}
\end{equation}
We discuss the term $(I)$. Define 
$$
V_t =  \varepsilon_t(\hat{m}_t(X_t,w^{t-1}) - \hat{g}_i(X_t)) - \mathbb{E}[\varepsilon_t(\hat{m}_t(X_t,w^{t-1}) - \hat{g}_i(X_t))|\mathcal{F}_{t-1}].  
$$ 
We have 
$$
(I) = \max_{i \in \{1, ..., p\}}  \frac{1}{T_0}  \sum_{t=1}^{T_0} \Big[V_t + \mathbb{E}[\varepsilon_t(\hat{m}_t(X_t,w^{t-1}) - \hat{g}_i(X_t))|\mathcal{F}_{t-1}] \Big]. 
$$ 
Using the triangular inequality, we have 
\begin{equation} \label{eqn:ha} 
\begin{aligned} 
|(I)| &\le \max_{i \in \{1, \cdots, p\}} \Big|\frac{1}{T_0} \sum_{t = 1}^{T_0} \varepsilon_t(\hat{m}_t(X_t,w^{t-1}) - \hat{g}_i(X_t)) - \mathbb{E}\Big[\varepsilon_t(\hat{m}_t(X_t,w^{t-1}) - \hat{g}_i(X_t))\Big|\mathcal{F}_{t-1}\Big]\Big| \\ &+ \Big|\frac{1}{T_0} \sum_{t = 1}^{T_0} \mathbb{E}\Big[\varepsilon_t(\hat{m}_t(X_t,w^{t-1}) - \hat{g}_i(X_t))\Big|\mathcal{F}_{t-1}\Big]\Big|. 
\end{aligned} 
\end{equation} 

Observe that $V_t$ is a martingale difference sequence. As a result, we can use Azuma inequality \citep{boucheron2013concentration}, and the union bound over $p$, and obtain  with probability at least $1 - \delta$, 
$$
\max_{i \in \{1, \cdots, p\}} \Big|\frac{1}{T_0} \sum_{t = 1}^{T_0} \varepsilon_t(\hat{m}_t(X_t,w^{t-1}) - \hat{g}_i(X_t)) - \mathbb{E}\Big[\varepsilon_t(\hat{m}_t(X_t,w^{t-1}) - \hat{g}_i(X_t))\Big|\mathcal{F}_{t-1}\Big]\Big|  \le C_0\sqrt{\frac{ \log(p/\delta)}{T_0}} , 
$$
where $C_0$ is a finite constant independent of $T_0, p$. 
Since $\mathbb{E}\Big[\varepsilon_t(\hat{m}_t(X_t,w^{t-1}) - \hat{g}_i(X_t))\Big|\mathcal{F}_{t-1}\Big] = 0$, we have with probability at least $1 - \delta$, 
$$
|(I)| \le C_0\sqrt{\frac{ \log(p/\delta)}{T_0}}, 
$$ 
for a constant $C_0 < \infty$.

\section{Examples}

\subsection{Failure of Stationarity with Spillovers} \label{sec:aa1}

\begin{figure}[h]
\spacingset{1}
\centering
		\includegraphics[scale=0.5]{./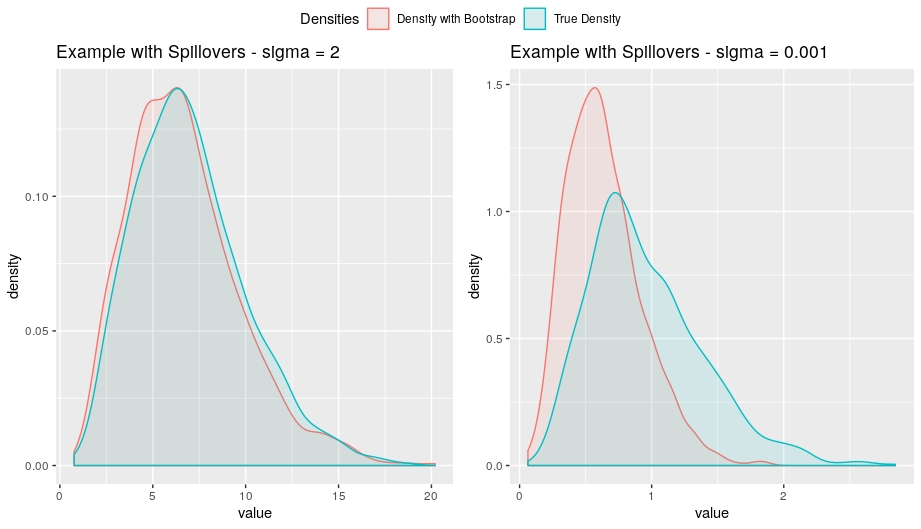}
	\caption{Explanatory example on failure of the bootstrap in the presence of spillover effects on $X_t$. Blue histogram indicates 
$\mathcal{T}_S 
= \tau_1 \chi^2(T - T_0)$ whereas the red  corresponds  to its bootstrapped counterpart $\mathcal{T}_S^* 
%&= \frac{1}{\sqrt{T - T_0}} \sum_{t= T_0 + 1}^T |Y_t^*(0) - \frac{1}{J} \sum_{j=1}^J X_{j,t}^*(D_t^*)|^2 
= \tau_2 \chi^2(T - T_0)$ where $\tau_1 = {(\sigma^2 + 2J^{-1})}/{\sqrt{T - T_0}}$, $\tau_2 =  {(\sigma^2 +  J^{-1})}/{\sqrt{T - T_0}}$. 
 Large values of $\sigma$ correspond to diminishing effects of the spillovers in which case we see good approximation properties; although in this case stationarity is still violated. For small values of $\sigma$ we see clear departures of the two distributions.} \label{fig:aa} 
	
\end{figure}

\noindent Although our results remain correct in the presence of carry-over effects   we do not expect them to hold for the case of spillover over covariates. Observe that spillovers effects on $X_t$ violate stationarity of $X_t$ and hence the validity of Theorem \ref{thm:1}.

 We provide   a counterexample where the bootstrap is likely to fail under such circumstances. We consider a case in which   covariates are a function of the treatment assignment indicator - i.e., there are spillovers from $Y_t$ to $X_t$ while keeping  $m = 0$. A factor model,   $
Y_t = a_t D_t + F_t + \mathcal{N}(0,\sigma^2)$ and $10$-covariates that possibly depend on the treatment groups  $d$, $X_{j,t}(d) = F_t + \mathcal{N}(0, 1 + d)$ are considered with $F_t \sim \mathcal{N}(0,1)
$.  Moreover, $a_t = 0$. The best estimator then   is a simple average 
  $\hat{Y}_t^0 =  \bar X_t$ where $\bar X_t$ is the sample average of $X_{t,j}$'s.
  % =10^{-1} \sum_{j=1}^{10} X_{t,j}$.      
Then,
$
\mathcal{T}_S
%&= \frac{1}{\sqrt{T - T_0}} \sum_{t= T_0 + 1}^T |Y_t(1) - a_t^o - \frac{1}{J} \sum_{j=1}^J X_{j,t}(1)|^2 
= \tau_1 \chi^2_{T - T_0}
$ for $\tau_1 = {(\sigma^2 + 2J^{-1})}/{\sqrt{T - T_0}}$,
%Since $D_t = 1$ over the post-treatment period. 
%Since the null holds under this example, its bootstrapped counterpart can be written as follow.
%For a simple case where the test statistic sums over $(Y_t^*, X_t^*)$ that belongs to the pre-treatment period in the original sample,  \jelena{I am a little unclear as to what $D_t^*=0$ means?} \davide{removed   $D_t^*=0$}, 
while  $
\mathcal{T}_S^* 
%&= 
= \tau_2 \chi^2_{T - T_0}
$ with $\tau_2 =  {(\sigma^2 +  J^{-1})}/{\sqrt{T - T_0}}$.
Figure \ref{fig:aa} illustrates  over-rejection of the nulls whenever the spillover effects are strong (right panel). For larger values of $\sigma^2$, spillover effects become negligible and the density of the bootstrapped test statistic approximately agrees with  the true density   in this also, non-stationary case (left panel).

\section{Additional Numerical Experiments}

\subsection{Results for Other DGPs}

In this section we illustrate results for other DGPs considered. Results are robust across DGPs and illustrate substantial improvements of the proposed methods. These can be observed in Figure \ref{fig:1app}.

\begin{figure}[!ht]
\centering
\includegraphics[scale=0.5]{./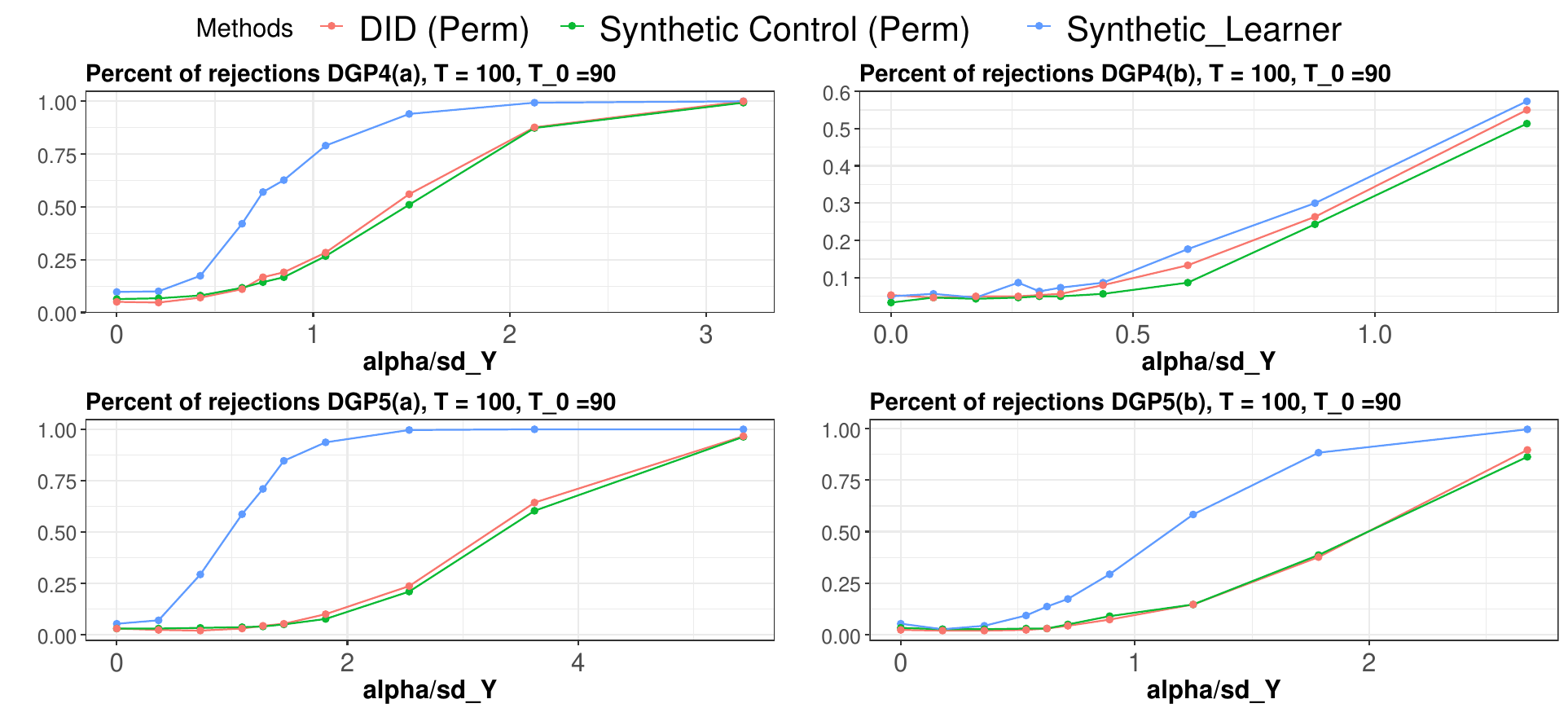}
\caption{ $T=100, p = 10$ Percentage of rejections of the null hypothesis of no treatment effects over $300$ repetitions.  The x-axis reports the policy's effect rescaled by the outcome's standard deviation. Synthetic learner has XGboost,Support Vector Regression and {\sc arima}(0,1,1) and $50$ additional non informative predictions.  The blue line denotes the proposed method; red line denotes the difference-in-differences and green line denotes the Synthetic Control method. }
\label{fig:1app}
\end{figure}

\subsection{Varying $T - T_0$} \label{app:vary_T}

In this section we vary $T - T_0$, i.e., the post treatment period, and show that results remain robust for different choices of the post-treatment period, both longer and shorter than ten. This is illustrated in Figure \ref{fig:2app}, \ref{fig:3app}. In the figures, we report the power for different values of $T$ and $T^* = T - T_0$.

\begin{figure}[!ht]
\centering
\includegraphics[scale=0.5]{./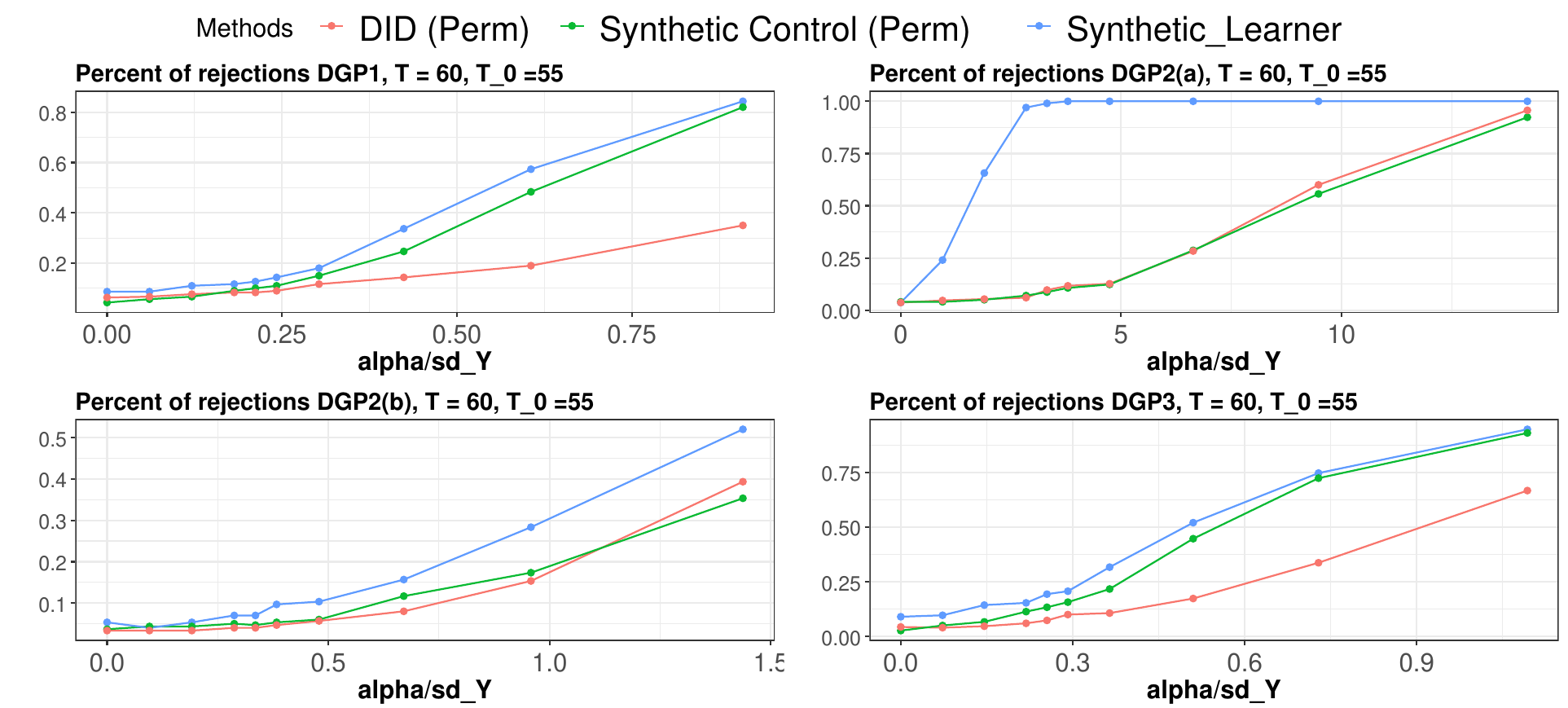}
\includegraphics[scale=0.5]{./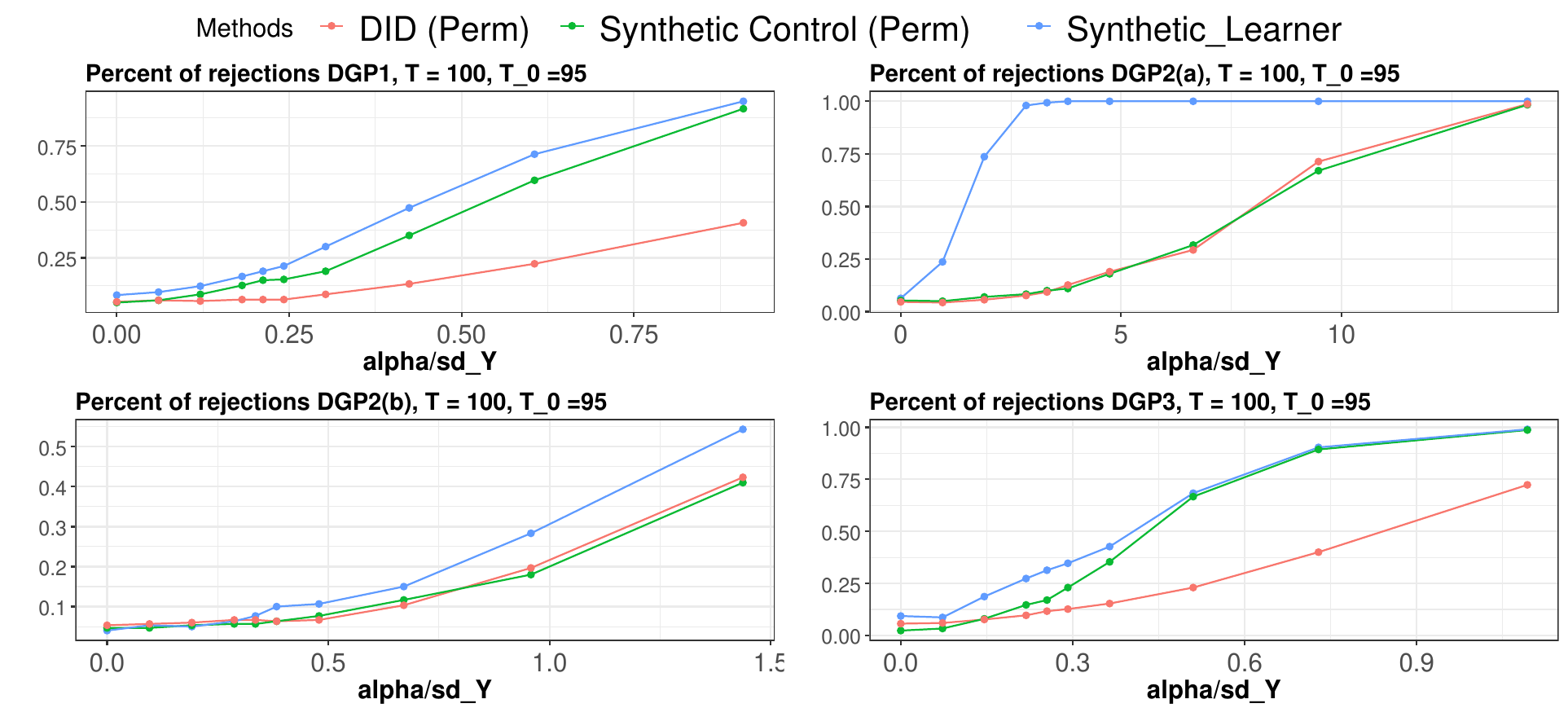}
\caption{$T\in \{60, 100\}, T^* = 5, p = 10$ Percentage of rejections of the null hypothesis of no treatment effects over $300$ repetitions. The x-axis reports the policy's effect rescaled by the outcome's standard deviation. Synthetic learner has XGboost,Support Vector Regression and {\sc arima}(0,1,1) and $50$ additional non informative predictions.  The blue line denotes the proposed method; red line denotes the difference-in-differences and green line denotes the Synthetic Control method. }
\label{fig:2app}
\end{figure}

\begin{figure}[!ht]
\centering
\includegraphics[scale=0.5]{./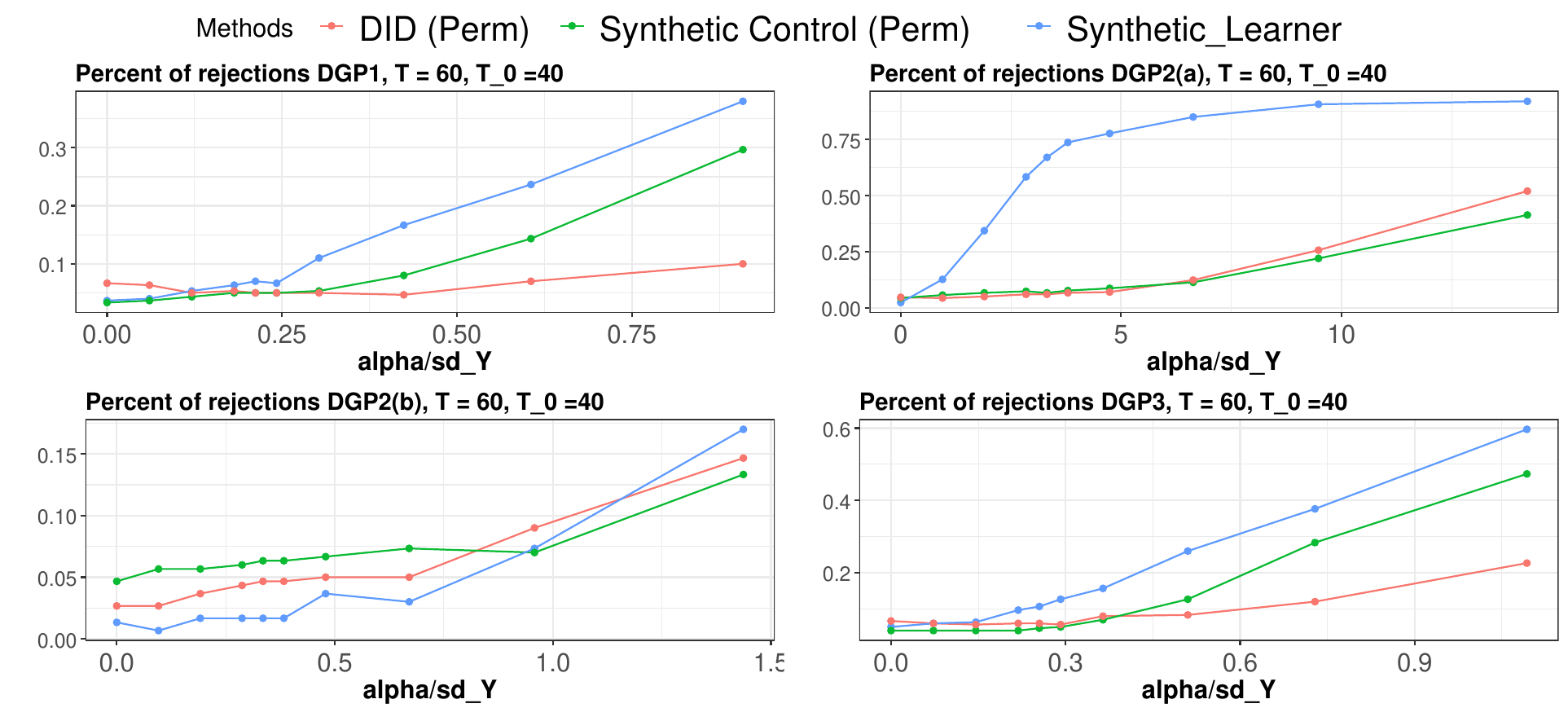}
\includegraphics[scale=0.5]{./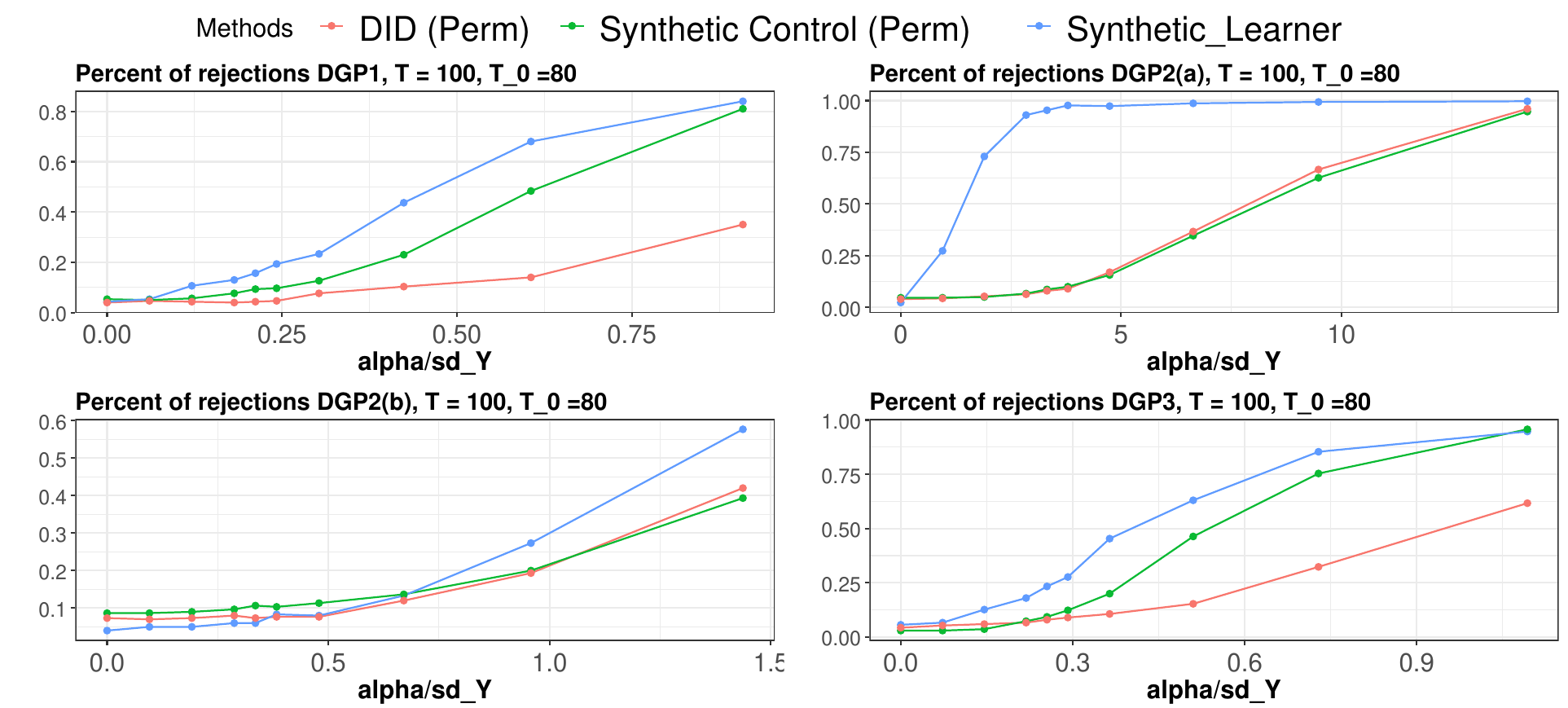}
\caption{$T\in \{60, 100\} , T^* = 20, p =10$ Percentage of rejections of the null hypothesis of no treatment effects over $300$ repetitions.  The x-axis reports the policy's effect rescaled by the outcome's standard deviation. Synthetic learner has XGboost,Support Vector Regression and {\sc arima}(0,1,1) and $50$ additional non informative predictions.  The blue line denotes the proposed method; red line denotes the difference-in-differences and green line denotes the Synthetic Control method. }
\label{fig:3app}
\end{figure}

\subsection{Oracle Study: Known Critical Value} \label{app:oracle_quantile}

In order to better understand the drivers of the power performance, we study the case where the critical value of the test is known to the researcher; we estimated it by Monte Carlo simulation since no closed-form expression for its density is available.   Figure  \ref{fig:plotsim2}  collects our results.  Synthetic Learner does not have uninformative learners, and the class consists of  XGboost, Support Vector Regression, and {\sc arima}(0,1,1). We take   $T=300$, $T_0=280$, and we let $T_-=140$. 
 
   Since the critical quantile are assumed to be known, SC and DiD are estimated using information until time $T_0$ \citep{abadie2010synthetic}, and not on the full sample as imposed by permutation methods. The proposed method outperforms uniformly DiD, and SC in almost all {\bf DGP}s considered.

%\begin{figure}
%\spacingset{1}
%\centering
%\includegraphics[scale=0.3]{./plots/PPower4.jpeg}
%\includegraphics[scale=0.3]{./plots/PPower7.jpeg}
%\includegraphics[scale=0.3]{./plots/PPower10.jpeg}
%\includegraphics[scale=0.3]{./plots/PPower13.jpeg}
%\includegraphics[scale=0.3]{./plots/PPower1.jpeg}
%\includegraphics[scale=0.3]{./plots/PPower16.jpeg}
%\includegraphics[scale=0.3]{./plots/PPower19.jpeg}
%\includegraphics[scale=0.3]{./plots/PPower22.jpeg}
%\includegraphics[scale = 0.3 ]{./plots/PPower25.jpeg}
% \caption{Percentage of rejections of the null hypothesis of no treatment effects over $500$ repetitions. Synthetic learner has XGboost,Support Vector Regression and {\sc arima}(0,1,1) and $50$ additional non informative predictions.  Post treatment ends at $T=300$.  The blue line denotes the proposed method; red line denotes the difference-in-differences and green line denotes the Synthetic Control method. } \label{fig:plotsim21}
% \end{figure}

\begin{figure}[!ht]
\spacingset{1}
\centering

\includegraphics[scale=0.5]{./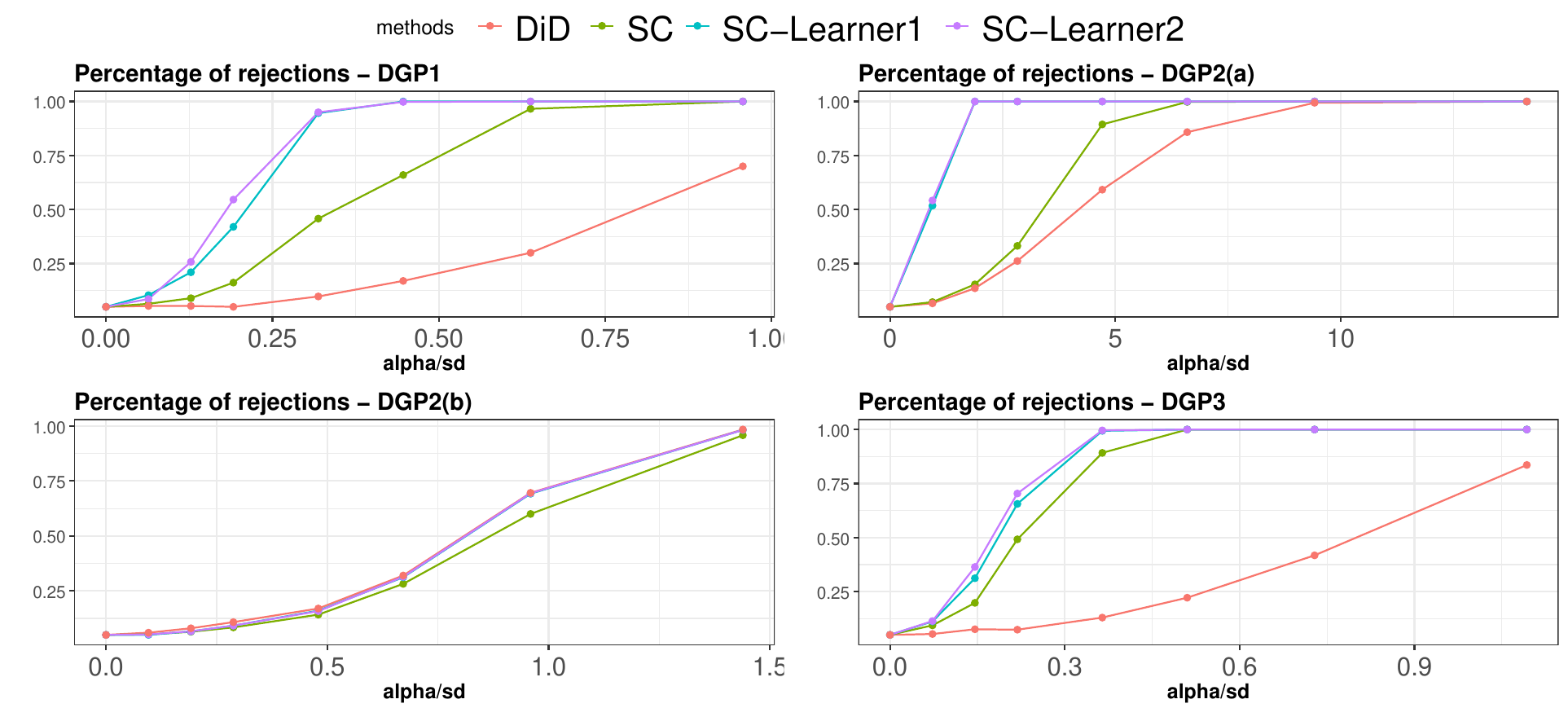}

 \caption{Percentage of rejections   over $500$ repetitions with $T=300$, $T_0=280, p = 50$, and $T_-=140$ when the critical quantile is $known$(oracle case).  The x-axis reports the policy's effect rescaled by the outcome's standard deviation. SC and DiD are estimated using all information until time $T_0$(no permutation test required). SC-Learner1 is the Synthetic Learner trained with XGboost, Support Vector Regression, {\sc arima}(0,1,1)  and $50$ additional non informative predictions. SC-Learner2 is the Synthetic Learner  which also includes classical SC  and it does not include non-informative predictions. The red line corresponds to the DiD method; the yellow line corresponds to the SC method; the blue line is SC-Learner 1 method and the purple line in SC-Learner 2.  %On the x-axis we report the size of the treatment effect. 
 %On the y-axis we report the percentage of rejection of the sharp null hypothesis of no effect. 
  } \label{fig:plotsim2}
 \end{figure}

Additional results on the oracle study are included in Figure \ref{fig:plotsim2b}.

\begin{figure}[!ht]
\spacingset{1}
\centering

\includegraphics[scale=0.5]{./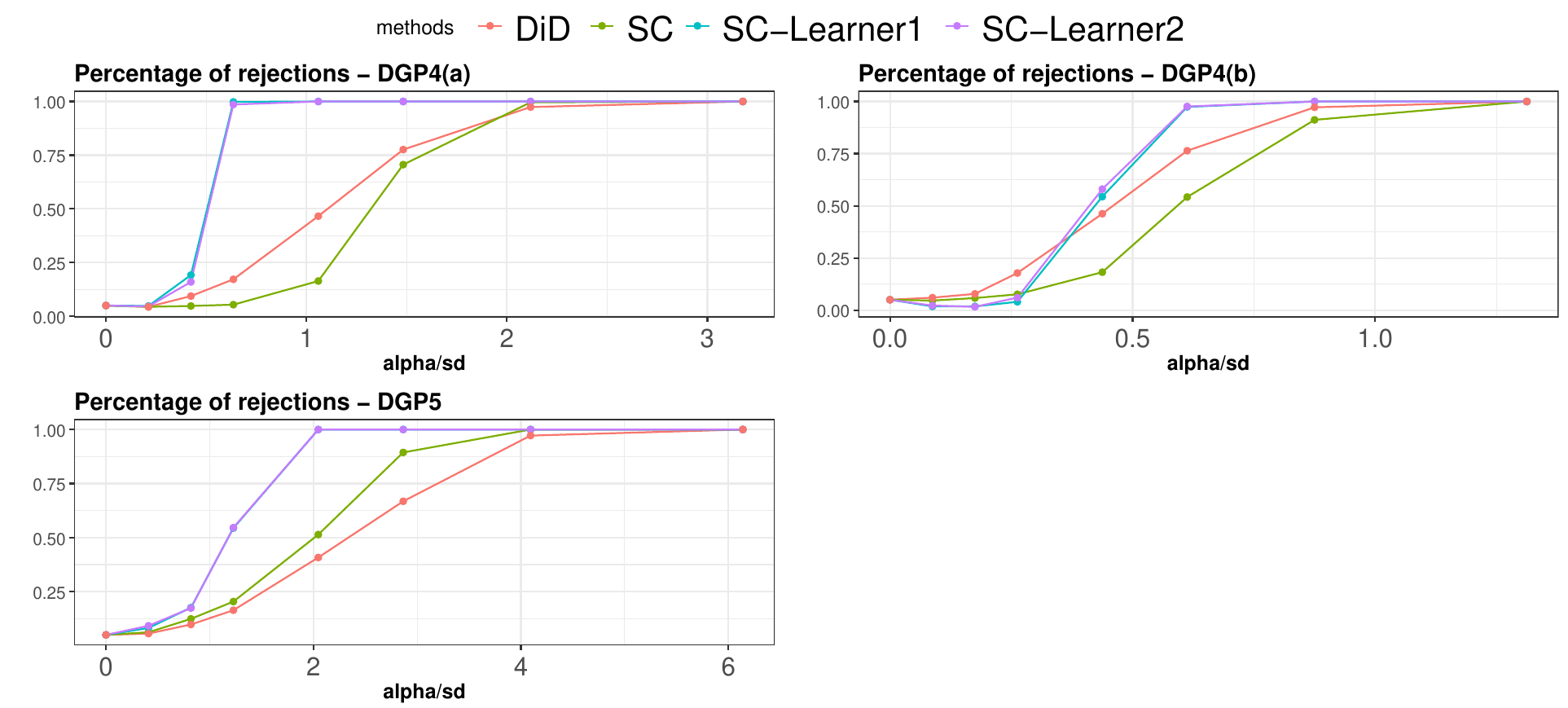}

 \caption{Percentage of rejections   over $500$ repetitions with $T=300$, $T_0=280$, $p=50$ and $T_-=140$ when the critical quantile is $known$(oracle case).  The x-axis reports the policy's effect rescaled by the outcome's standard deviation. SC and DiD are estimated using all information until time $T_0$(no permutation test required). SC-Learner1 is the Synthetic Learner trained with XGboost, Support Vector Regression, {\sc arima}(0,1,1)  and $50$ additional non informative predictions. SC-Learner2 is the Synthetic Learner  which also includes classical SC  and it does not include non-informative predictions. The red line corresponds to the DiD method; the yellow line corresponds to the SC method; the blue line is SC-Learner 1 method and the purple line in SC-Learner 2.  %On the x-axis we report the size of the treatment effect. 
 %On the y-axis we report the percentage of rejection of the sharp null hypothesis of no effect. 
  } \label{fig:plotsim2b}
 \end{figure}

\subsection{Endogenous Time of the Treatment} 

Additional results for the case of endogenous time of treatment are discussed in Figure \ref{fig:plotsim2db}, which are consistent with results in the main text.
%; Figure \ref{fig:plotsim2c} instead considers an oracle study where the critical quantiles are known to the researcher. 

%\begin{figure} [htbp!]
%\spacingset{1}
%\centering

%\includegraphics[scale = 0.4]{./plots/plots_endogenous_oracle/myplot1.jpeg}
%\includegraphics[scale = 0.4]{./plots/plots_endogenous_oracle/myplot2.jpeg}
%\includegraphics[scale = 0.4]{./plots/plots_endogenous_oracle/myplot3.jpeg}
%\includegraphics[scale = 0.4]{./plots/plots_endogenous_oracle/myplot4.jpeg}
%\includegraphics[scale = 0.4]{./plots/plots_endogenous_oracle/myplot7.jpeg}
%\includegraphics[scale = 0.4]{./plots/plots_endogenous_oracle/myplot9.jpeg}

 %\caption{Percentage of rejections   over $500$ repetitions with $T=400$, $T_0=280$, $J=50$ and $T_-=140$ when the critical quantile is known (oracle case) and the time of treatment is \textit{endogenous}. The set up is the same as in Figure \ref{fig:plotsim2} .  %On the x-axis we report the size of the treatment effect. 
 %On the y-axis we report the percentage of rejection of the sharp null hypothesis of no effect. 
  %} \label{fig:plotsim2c}
 %\end{figure}

 \begin{figure} [!ht]
\spacingset{1}
\centering

\includegraphics[scale = 0.5]{./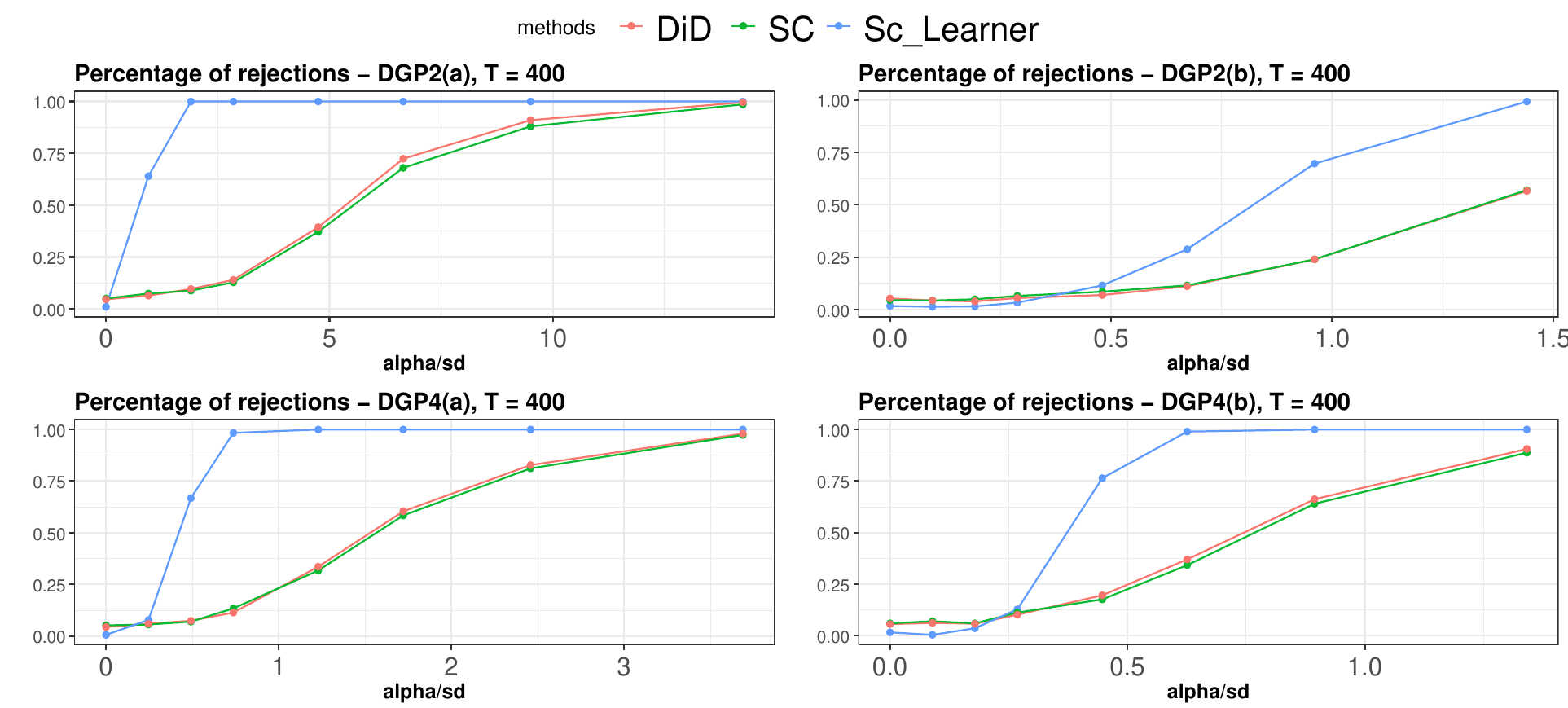}

 \caption{Percentage of rejections   over $500$ repetitions with $T=400$, $T_0=280$, $J=50$ and $T_-=140$ when the critical quantile is estimated via resampling and the time of treatment is \textit{endogenous}..  %On the x-axis we report the size of the treatment effect. 
 %On the y-axis we report the percentage of rejection of the sharp null hypothesis of no effect. 
  } \label{fig:plotsim2db}
 \end{figure}

%\subsection{Bootstrap with Carry-over Effects}

%We study the robustness of the bootstrap method to different levels of carryover effects.  We consider the case where the researcher knows the true level of carry-over effects. We consider the scenario with $T = 300$, $T_0 = 250$, $T_- = 125$ and the Synthetic Learner trained with learners XGboost, {\sc arima}(0,1,1) and Support vector regression. We collect results in Figure \ref{fig:carryover}. Remarkably, the bootstrap is robust to all choices of $m$ considered.   In practice researchers should check the robustness of their results over different levels of carry-over effects. This result together with our theoretical guarantees suggests that the resampling scheme proposed in this paper is a valid method to perform this task. 

%\begin{figure}
%\centering
%\includegraphics[scale=0.25]{./plots/boot_m5.jpeg}
%\includegraphics[scale=0.25]{./plots/boot_m7.jpeg}
%\includegraphics[scale=0.25]{./plots/boot_m1.jpeg}
%\includegraphics[scale=0.25]{./plots/boot_m2.jpeg}
%\includegraphics[scale=0.25]{./plots/boot_m4.jpeg}
%\includegraphics[scale=0.25]{./plots/boot_m6.jpeg}
%\caption{Percentage of rejections using the Synthetic Learner with learners XGboost, Support Vector Regression and {\sc arima}(0,1,1). We consider $T = 300$ and $T_0 = 50$ and use $500$ replications. We study different levels of carry-over effects, denoted with $m \in \{0,10, 30\}$. } \label{fig:carryover}
%\end{figure} 

\section{Empirical Analysis: Additional Results}  \label{sec:a_emp}

In Table \ref{tab:1a} we report results when all states with the exception of Oregon are considered, since Oregon also implemented a Medicaid reform. Table \ref{tab:2a} \ref{tab:3a} collect results for different choices of the tuning parameter $\eta$ and show robustness of the results.  

\begin{table}[!htbp] \centering 
  \caption{$90\%$ and $80\%$ critical values, t-statistic and ATT using all the states as controls. The effect estimated is over the time window 2010-2014 (first row), and consecutive windows 2011-2014 ($m = 1yr$), 2012-2014 ($m = 2yr$), 2013-2014 ($m = 3yr$).  Period 1 collects results when learners are estimated using the window between $1998$-$2006$ and weights are estimated over the period $1993$-$1997$. Period 2 corresponds to the opposite scenario. Demeaned SL denotes the SL with time-varying fixed effects.} 
  \label{tab:1a} 
\begin{tabular}{@{\extracolsep{5pt}} ccccccccc} 
\\[-1.8ex]\hline 
\hline \\[-1.8ex] 
Period 1 & SL &  & & & Demeaned SL &  &  &  \\ 
 & CV90 & CV80 & t stat & ATT & CV90 & CV80 & t stat & ATT \\ 
\hline \\[-1.8ex] 
m = 0 & $1.723$ & $1.598$ & $1.883$ & $3.525$ & $1.671$ & $1.548$ & $1.820$ & $3.503$ \\ 
m = 1yr & $1.664$ & $1.525$ & $1.830$ & $3.754$ & $1.583$ & $1.455$ & $1.754$ & $3.701$ \\ 
m = 2yr & $1.614$ & $1.477$ & $1.829$ & $3.973$ & $1.529$ & $1.401$ & $1.741$ & $3.892$ \\ 
m = 3yr & $1.535$ & $1.389$ & $1.776$ & $4.098$ & $1.450$ & $1.313$ & $1.679$ & $3.991$ \\ 
\hline \\[-1.8ex] 
Period 2 & SL &  & & & Demeaned SL &  &  &  \\ 
 & CV90 & CV80 & t stat & ATT & CV90 & CV80 & t stat & ATT \\ 
\hline \\[-1.8ex] 
m = 0 & $1.707$ & $1.574$ & $1.279$ & $5.840$ & $1.086$ & $1.007$ & $1.049$ & $5.520$ \\ 
m = 1yr & $1.642$ & $1.514$ & $1.221$ & $6.040$ & $1.027$ & $0.948$ & $0.990$ & $5.708$ \\ 
m = 2yr & $1.609$ & $1.481$ & $1.226$ & $6.243$ & $1.001$ & $0.921$ & $0.995$ & $5.901$ \\ 
m = 3yr & $1.555$ & $1.420$ & $1.194$ & $6.351$ & $0.964$ & $0.883$ & $0.972$ & $6.001$ \\ 
\hline \\[-1.8ex] 
\end{tabular} 
\end{table} 

\begin{table}[!htbp] \centering 
  \caption{$90\%$ and $80\%$ critical values, t-statistic and ATT using the southern states as controls and $\eta = 1/(T \mathrm{Var}(Y_t))$. The effect estimated is over the time window 2010-2014 (first row), and consecutive windows 2011-2014 ($m = 1yr$), 2012-2014 ($m = 2yr$), 2013-2014 ($m = 3yr$).  Period 1 collects results when learners are estimated using the window between $1998$-$2006$ and weights are estimated over the period $1993$-$1997$. Period 2 corresponds to the opposite scenario. Demeaned SL denotes the SL with time-varying fixed effects.} 
  \label{tab:2a} 
\begin{tabular}{@{\extracolsep{5pt}} ccccccccc} 
\\[-1.8ex]\hline 
\hline \\[-1.8ex] 
Period 1 & SL &  & & & Demeaned SL &  &  &  \\ 
 & CV90 & CV80 & t stat & ATT & CV90 & CV80 & t stat & ATT \\ 
\hline \\[-1.8ex] 
m = 0 & $1.321$ & $1.237$ & $1.328$ & $1.786$ & $0.901$ & $0.844$ & $0.834$ & $0.969$ \\ 
m = 1yr & $1.237$ & $1.155$ & $1.253$ & $1.932$ & $0.787$ & $0.743$ & $0.728$ & $1.034$ \\ 
m = 2yr & $1.200$ & $1.114$ & $1.244$ & $2.096$ & $0.751$ & $0.706$ & $0.695$ & $1.077$ \\ 
m = 3yr & $1.151$ & $1.057$ & $1.200$ & $2.178$ & $0.709$ & $0.659$ & $0.648$ & $1.063$ \\ 
\hline \\[-1.8ex] 
Period 2 & SL &  & & & Demeaned SL &  &  &  \\ 
 & CV90 & CV80 & t stat & ATT & CV90 & CV80 & t stat & ATT \\ 
\hline \\[-1.8ex] 
m = 0 & $0.713$ & $0.665$ & $0.321$ & $3.280$ & $0.521$ & $0.473$ & $0.246$ & $3.137$ \\ 
m = 1yr & $0.642$ & $0.597$ & $0.231$ & $3.297$ & $0.462$ & $0.416$ & $0.166$ & $3.142$ \\ 
m = 2yr & $0.631$ & $0.586$ & $0.220$ & $3.365$ & $0.453$ & $0.408$ & $0.157$ & $3.204$ \\ 
m = 3yr & $0.618$ & $0.571$ & $0.214$ & $3.392$ & $0.452$ & $0.405$ & $0.152$ & $3.198$ \\ 
\hline \\[-1.8ex] 
\end{tabular} 
\end{table} 

\begin{table}[!htbp] \centering 
  \caption{$90\%$ and $80\%$ critical values, t-statistic and ATT using the southern states as controls and $\eta = 1\%$. The effect estimated is over the time window 2010-2014 (first row), and consecutive windows 2011-2014 ($m = 1yr$), 2012-2014 ($m = 2yr$), 2013-2014 ($m = 3yr$).  Period 1 collects results when learners are estimated using the window between $1998$-$2006$ and weights are estimated over the period $1993$-$1997$. Period 2 corresponds to the opposite scenario. Demeaned SL denotes the SL with time-varying fixed effects.} 
  \label{tab:3a} 
\begin{tabular}{@{\extracolsep{5pt}} ccccccccc} 
\\[-1.8ex]\hline 
\hline \\[-1.8ex] 
Period 1 & SL &  & & & Demeaned SL &  &  &  \\ 
 & CV90 & CV80 & t stat & ATT & CV90 & CV80 & t stat & ATT \\ 
\hline \\[-1.8ex] 
m = 0 & $1.336$ & $1.254$ & $1.362$ & $1.865$ & $0.895$ & $0.841$ & $0.827$ & $0.988$ \\ 
m = 1yr & $1.256$ & $1.170$ & $1.292$ & $2.023$ & $0.792$ & $0.744$ & $0.720$ & $1.054$ \\ 
m = 2yr & $1.221$ & $1.131$ & $1.286$ & $2.197$ & $0.755$ & $0.709$ & $0.686$ & $1.094$ \\ 
m = 3yr & $1.155$ & $1.062$ & $1.245$ & $2.289$ & $0.713$ & $0.659$ & $0.641$ & $1.082$ \\ 
\hline \\[-1.8ex] 
Period 2 & SL &  & & & Demeaned SL &  &  &  \\ 
 & CV90 & CV80 & t stat & ATT & CV90 & CV80 & t stat & ATT \\ 
\hline \\[-1.8ex] 
m = 0 & $1.281$ & $1.186$ & $0.513$ & $4.690$ & $0.547$ & $0.496$ & $0.236$ & $3.143$ \\ 
m = 1yr & $1.219$ & $1.127$ & $0.441$ & $4.805$ & $0.486$ & $0.437$ & $0.157$ & $3.161$ \\ 
m = 2yr & $1.197$ & $1.108$ & $0.439$ & $4.928$ & $0.485$ & $0.431$ & $0.149$ & $3.225$ \\ 
m = 3yr & $1.164$ & $1.074$ & $0.427$ & $4.970$ & $0.480$ & $0.427$ & $0.144$ & $3.211$ \\ 
\hline \\[-1.8ex] 
\end{tabular} 
\end{table}

In Figure \ref{fig:counterfactual1}, we plot the observed outcome and the estimated counterfactual for the percentage of individuals not having economic access to health-care.

 \begin{figure}[h]
\centering
\includegraphics[height=8cm,width=15cm]{./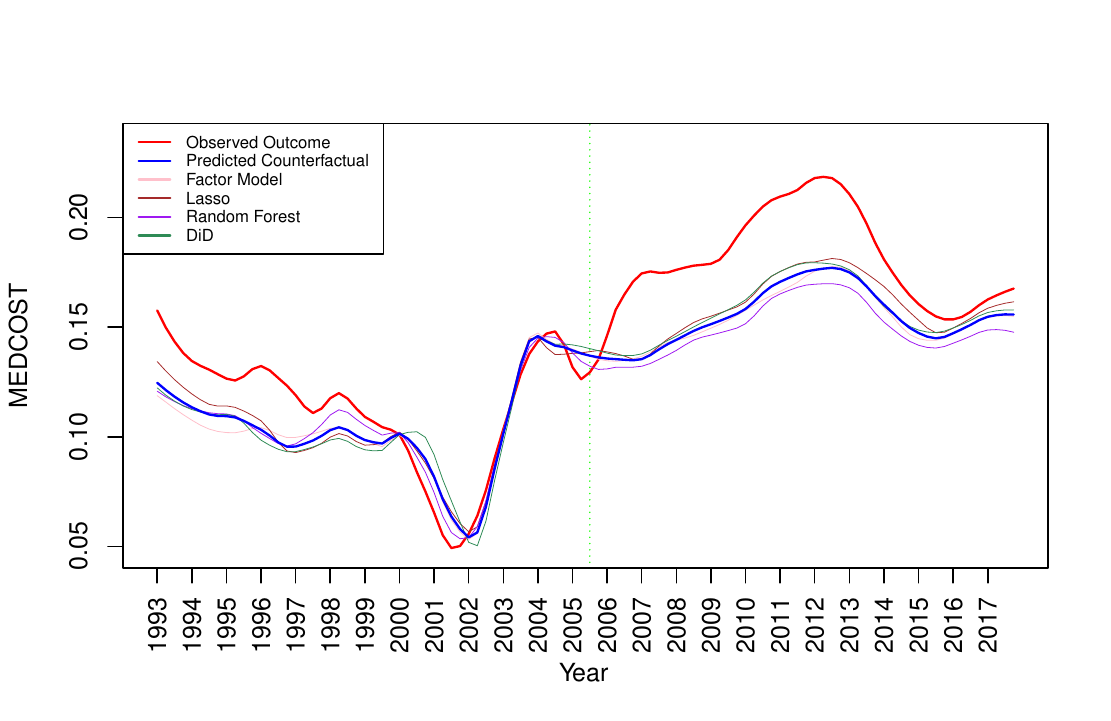}
\caption{Observed and predicted counterfactual of the percentage of childless adults who are not able to afford health-care expenses with demeaned SL and southern states as controls. The red line is the observed time series in Tennessee, and the blue line is the estimated counterfactual under no dis-enrollment. The other lines are predictions of the counterfactual for each base-learner used. In the plot, the time series are smoothed using a local polynomial.
} \label{fig:counterfactual1}
\end{figure}

 \section{Additional Algorithms}

\begin{algorithm}[!ht]    \caption{Synthetic Control Bootstrap}\label{alg:alg4}
    \begin{algorithmic}[1]
     \Require Observations $\{Y_{t}, X_t\}_{t=T_{-}}^{T}$, time of the treatment-$T_0$,  carryover effect size-$m$, tuning parameter $\eta>0$,   learners  $f_1,\dots, f_p$, null hypothesis values $\{a_{t}^o\}_{t > T_0}$
    \State Split the pre-treatment period into two parts: $t \in [T_{-},0]$ and $t \in [1,T_0]$;
    \State Form  predictions   $\hat{g}_j =f_j(\{Y_{t}, X_t\}_{t < 1})$ , $j \in 1,\dots,p$; 
    \State Compute $Y_{0t}^o$ for all $t$
    \For{$b=1,\dots, B$}
        \State Return a sample $\{\tilde{Y}_t^*, X_t^*\}$ of size $T$  by performing circular block bootstrap  on  $\{Y_{0t}^o,X_t \}$ for $t \in \{1,\dots, T\}$ 
 \State  Compute ${\widehat{\mathbf{w}}_0 }^*$ on $\{\tilde{Y}_t^*, \hat{\mathbf{g}}(X_t^*) \}_{1 \leq t \le T_0}$    according to \eqref{eqn:expweights1}
  \State Compute the predicted counterfactual 
   $
{\  \hat Y_{0t}^{0} }^{*}= \sum_{i = 1}^{p} {\  {\widehat{\mathbf{w}}}_0^{*} }^{(i)} \hat{g}_{i}^0(X_t^*) 
   $, $t > T_0$. 
        \State  Compute the   test statistics of interest for sharp and average null 
 $$\mathcal{T}_S^{b} = (T - T_0)^{-1/2}\sum_{t > T_0 } \bigl(\tilde{Y}_{t}^* - {\  \hat Y_{0t}^{0} }^{*}\bigl)^2 \qquad \mbox{or}  \qquad \mathcal{T}_A^{b} = (T - T_0)^{-1} \Bigl( \sum_{t > T_0 } \tilde{Y}_{t}^* - {\  \hat Y_{0t}^{0} }^{*} \Bigl)^2;$$

\EndFor
 
      \Return $ q_{1-\alpha}^*$ as $(1-\alpha)$-th quantile  of the sample 
      
   \hskip 100pt   $\mathcal{T}_S^{1} ,\mathcal{T}_S^{2}, \dots,\mathcal{T}_S^{B} \qquad  $ or $\qquad \mathcal{T}_A^{1} , \mathcal{T}_A^{2},\dots,\mathcal{T}_A^{B}  $.
         \end{algorithmic}
\end{algorithm}

 \end{document}